\setlist[enumerate]{leftmargin=.5in}
\setlist[itemize]{leftmargin=.5in}
\crefname{hypothesis}{Hypothesis}{Hypotheses}
 \newcommand{\h}[1]{\mathbf{#1}}
\title{A Lifted $\ell_1 $ Framework for Sparse Recovery\thanks{Submitted to the editors DATE.
\funding{This work was partially funded by NSF CAREER 1846690 and Simons Foundation grant 584960.}}}
\author{Yaghoub Rahimi
\thanks{School of Mathematics, Georgia Institute of Technology, Atlanta, GA 30332 
  (\email{yaghoub.rahimi@gatech.edu}).}
  \and
  Sung Ha Kang
\thanks{School of Mathematics, Georgia Institute of Technology, Atlanta, GA 30332 
  (\email{kang@math.gatech.edu}).}
  \and Yifei Lou
\thanks{Department of Mathematical Sciences, The University of Texas at Dallas, Richardson, TX 75080
  (\email{yifei.lou@utdallas.edu}).}
  }
\begin{document}


\maketitle

\begin{abstract}
Motivated by  re-weighted $\ell_1$ approaches for sparse recovery, we propose a lifted $\ell_1$ (LL1) regularization which is a generalized form of several popular regularizations  in the literature. 
By exploring such connections, we discover there are two types of lifting functions which can guarantee that the proposed approach is equivalent to  the $\ell_0$ minimization. 
Computationally, we design an efficient algorithm via the alternating direction method of multiplier (ADMM) and establish the convergence for an unconstrained formulation. Experimental results are presented to demonstrate how this generalization improves sparse recovery over the state-of-the-art.
\end{abstract}

\begin{keywords}
Compressed sensing, sparse recovery, reweighted L1, nonconvex minimization, alternating direction method of multipliers, 
\end{keywords}

\begin{AMS}
65K10, 49N45, 65F50, 90C90, 49M20
\end{AMS}

\section{Introduction}

Compressed sensing \cite{donoho06} plays an important role in many applications including  signal processing, medical imaging,  matrix completion, feature selection, and machine learning \cite{eldar2012compressed, foucart2017mathematical,marques2018review}. One important assumption in compressed sensing is that a signal of interest is sparse or compressible.  This allows for an efficient representation of high-dimensional data only by a few meaningful subsets. Compressed sensing often involves sparse recovery from an (under-determined) linear system that can be  formulated mathematically by minimizing the $\ell_0 $ ``pseudo-norm,'' i.e., 
\begin{equation}
\arg \min_{\h x \in \mathbb{R}^n}  \|\h x\|_0 \ \ \mbox{s.t.} \ \ A \h x = \h b,
\label{model_l0}
\end{equation}
where $A  \in\mathbb{R}^{m \times n}$ is called a sensing matrix  and $\h b \in \mathbb{R}^m $ is a measurement vector.
Since the minimization problem \eqref{model_l0} is NP-hard \cite{natarajan1995sparse},  
various regularization functionals are proposed to approximate the $\ell_0 $ penalty. The $\ell_1 $ norm is widely used as a convex relaxation of $\ell_0,$ which is called LASSO \cite{tibshirani1996regression} in statistics or basis pursuit \cite{chen2001atomic} in signal processing. Due to its convex nature, the $\ell_1$ norm is computationally traceable to optimize with  exact recovery guarantees based on restricted isometry property (RIP) and/or null space property (NSP) \cite{candesRT06,donoho2001uncertainty,tillmann2014computational}. Alternatively, there are non-convex models, i.e. concave with respect to the positive cone, that  outperform the convex $\ell_1$ approach in practice. For example, $\ell_p (0<p<1)$ \cite{chartrand2007exact,Xu2012,zong2012representative}, smoothly clipped absolute deviation (SCAD) \cite{fan2001variable}, minimax concave penalty (MCP) \cite{zhang2010nearly},
capped $\ell_1$ (CL1) \cite{louYX16, shen2012likelihood,zhang2008multi},  and transformed $\ell_1 $ (TL1) \cite{lv2009unified,zhang2014minimization,zhang2018minimization}  are separable and concave penalties.    Some non-separable non-convex penalties include sorted $\ell_1 $ \cite{huang2015nonconvex}, $\ell_1-\ell_2 $ \cite{louY18,louYHX14,yin2014ratio,yinLHX14}, and $\ell_1/\ell_2 $ \cite{rahimi2019scale,wang2020accelerated}.

In this paper, we generalize a type of re-weighted approaches \cite{candes2008enhancing,guo2021novel, wipf2010iterative} for sparse recovery   based on the fact that a convex function can be smoothly approximated by subtracting a strongly convex function from the objective \cite{nesterov2005smooth}.  In particular, we propose a \textit{lifted regularization}, which we referred to as \textit{lifted $\ell_1$ (LL1)},
\begin{equation}
F^U_{g, \alpha}(\h x) = \min_{\h u\in U} \left\langle \h u, | \h x| \right\rangle + \alpha g(\h u),
\label{objective:newmodel}
\end{equation}
where we denote $|\h x | = [ |x_1|,...,|x_n|]\in\mathbb R^n$ and $\langle \cdot , \cdot \rangle$ is the standard Euclidean
inner product. In \eqref{objective:newmodel}, the variable $\h u $ plays the role of weights with $U$ as a set of restrictions on $\h u $, e.g., $U = [0,\infty)^n $ or $  U = [0,1]^n $.
The function $g:\mathbb{R}^n \rightarrow \mathbb{R}  $ is  decreasing near zero (please refer to \cref{def:01} for multi-variable decreasing function) to ensure a non-trivial solution of \eqref{objective:newmodel}, and  $\alpha > 0$ is a parameter. 
We also consider
a more general function $g(\h u; \alpha)$, instead of  $\alpha g(\h u)$ when making a connection of the proposed model to several existing sparse-promoting regularizations in   Section~\ref{sec_connection}. 
Note that ``$\min$'' is used instead of ``$\inf $'' in \eqref{objective:newmodel}, since we assume that the infimum is attained by at least one point.

To find a sparse signal from a set of linear measurements, we propose the following constrained minimization problem:
\begin{equation}\label{eq:con}
  \min_{\h x, \h u\in U}  \left\langle \h u, |\h x| \right\rangle + \alpha g(\h u)  \ \ \mbox{s.t.} \ \ A \h x = \h b.
\end{equation}
We lift the dimension of 
a single vector $\h x\in \mathbb R^n$ in the original $\ell_0$ problem \eqref{model_l0} into a joint optimization over $\h x $ and $\h u$ in the proposed model \eqref{eq:con}. 
We can establish the equivalence between the two,  if the function $g$ and the set $U$ satisfy certain conditions.  For the measurements with noise, we also consider an unconstrained formulation, i.e.,
\begin{equation}\label{eq:uncon}
  \min_{\h x, \h u\in U}  \left\langle \h u, |\h x| \right\rangle + \alpha g(\h u)  + \frac {\gamma} 2 \| A \h x - \h b\|_2^2,
\end{equation}
where $\gamma>0$ is a balancing parameter. 

From an algorithmic point of view, the lifted form enables us to  minimize two variables each with fast implementation, 
and can lead to a better local minima in higher dimension than the original formulation \cite{nesterov2005smooth,zach2017iterated}.  To minimize  \eqref{eq:con} and \eqref{eq:uncon},
we apply the alternating direction method of multipliers (ADMM) \cite{boyd2011distributed,gabay1976dual}, and conduct convergence analysis of ADMM for solving  the unconstrained problem \eqref{eq:uncon}.  We demonstrate in experiments that the proposed approach outperforms  the state-of-the-art.  Our contributions are  threefold,
\begin{itemize}
    \item We propose a unified model that generalizes many existing regularizations and re-weighted algorithms for sparse recovery problems;
    \item We establish the equivalence between the proposed model (\ref{eq:con}) and the $\ell_0$ model \eqref{model_l0};
    \item We present an efficient algorithm which is supported by a  convergence analysis.
\end{itemize}


The rest of the  paper is organized as follows. In  \Cref{sec_model}, we present details of the proposed model, including 
its properties with an exact sparse recovery guarantee.  \Cref{sec_numAlg} describes  the ADMM implementation and its convergence. 
 \Cref{sec_Experiments} presents experimental results  to demonstrate how this lifted model improves the sparse recovery over the state-of-the-art.  Finally,  concluding remarks are given in  \Cref{sec_conc}.  

\section{The Proposed Lifted \texorpdfstring{$\ell_1 $}{L1} Framework} \label{sec_model}
We first introduce definitions and notations that are used throughout the paper. The connection to well-known sparsity promoting regularizations is presented in Section~\ref{sec_connection}.  We present useful properties of LL1 in Section~\ref{sect:prop}, and establish its equivalence to the original $\ell_0$ problem in Section~\ref{sect:exact}.

\subsection{Definitions and Notations}
We mark any vector in bold, specifically $\h 1$ denotes the all-ones vector and $\h 0$ denotes the zero vector.
For a vector $\h x \in \mathbb{R}^n, $ we define $|x|_{(i)} $ as the $i $-th largest component of $|\h x| $. 
We say that a vector $\h x \in  \mathbb{R}^n $ majorizes $ \h x' \in \mathbb{R}^n$ with the notation  $|\h x| \succ |\h x'|,$ if  we have $\sum_{i \leq j} |x|_{(i)} \leq \sum_{i \leq j} |x'|_{(i)} \ \forall 1\leq j < n$ and  $\sum_{i \leq n} |x_i| = \sum_{i \leq n} |x_i'| $. For two vectors $\h x,\h y\in\mathbb R^n,$ the notation $\h x\leq \h y$ means each element of $\h x$ is smaller  than or equal to the corresponding element in $\h y;$ similarly for $\h x\geq \h y$, $\h x>\h y,$ and $\h x<\h y$. The positive cone refers to the set $\mathbb{R}^n_+ = \{ \h x \in \mathbb{R}^n \mid \h x \geq \h 0 \} $. A rectangular subset of $ \mathbb{R}^n$ is a product of intervals. We define two elementwise operators, $\max(\h x,\h y)$ and $\h x \odot \h y,$ both returning a vector form by taking maximum and multiplication respectively  for every component. 
The proposed regularization \eqref{objective:newmodel} can be rewritten using the $\delta$-functon as
\begin{equation}\label{objective:newmodel1}
    F_{g,\alpha}^U(\h x) = \min_{\h u} \left\langle \h u, | \h x| \right\rangle + \alpha g(\h u)  + \delta_{U}(\h u),
    \;\; \text{where}  \;\;
\delta_{U}(\h x) = \left\{
	\begin{array}{ll}
		0  & \mbox{if } \h x \in U \\
		+\infty & \mbox{if } \h x \not\in U.
	\end{array}\right.
\end{equation}
We use \eqref{objective:newmodel} and \eqref{objective:newmodel1} interchangeably throughout the paper.  
The constrained LL1 problem \eqref{eq:con} is equivalent to 
\begin{equation}\label{eq:con1}
    \min_{\h x, \h u} \left\langle \h u, | \h x| \right\rangle + \alpha g(\h u)  + \delta_{U}(\h u) + \delta_{\Omega} (\h x),
\end{equation}
where $\Omega = \{ \h x \in \mathbb{R}^n \mid A \h x = \h b \} $. We denote $[n]$ as the set $ \{1,2,\dots,n \} $,  $S_n $ as the symmetric group of $n $ elements, and a permutation $\pi \in S_n$ of a vector $\h x$ is defined as $\h x \circ \pi = \left(x_{\pi(1)}, \dots, x_{\pi(n)} \right).$ We summarize relevant properties of a function as follows,

\begin{definition} \label{def:01}
Let $g:\mathbb{R}^n \rightarrow \mathbb{R} \cup \{ +\infty, -\infty \} $ be a function, we say that 
\begin{itemize}
    \item the function $g$ is \textit{separable} if there exists a set of functions $\{g_i: \mathbb{R} \rightarrow \mathbb{R}\}$ for  $i \in [n]$ s.t. 
    \begin{equation*}
        g(\h x) = \sum_{i = 1}^n g_i(x_i), \quad \forall \h x =[x_1,\cdots, x_n].
    \end{equation*}
   
    \item The function $g$ is  \textit{strongly convex} with parameter $\mu > 0 $ if
    \begin{equation}\label{ineq:strongly_convex}
        g(\h y) \geq  g(\h x) + \left\langle \nabla g(\h x) , \h y - \h x \right\rangle  + \frac{\mu}{2} \| \h y - \h x\|_2^2, \quad \forall \h x, \; \h y.    \end{equation}
     \item The function $g$ is \textit{symmetric} if $ g(\h x) = g(\h x \circ \pi), \; \forall \h x \in \mathbb{R}^n $ and $\forall \pi \in S_n $.
    \item  The function $g$ is  \textit{coercive} if $ g(\h x) \rightarrow +\infty \ \ as \ \ \| \h x \| \rightarrow +\infty.$
 
    \item  The function $g$ is \textit{decreasing} on $ U$ if  $g(\h x) \leq g(\h y), \; \forall \h x \geq \h y$ and  $\h x, \; \h y \in U $.
\end{itemize}
\end{definition}


\subsection{Connections to Sparsity Promoting Models}\label{sec_connection} 
Many existing models can be understood as a special case of the proposed LL1 model \eqref{eq:con}. 
We start by two recent works. 
One is a joint minimization model \cite{zhu2020iteratively} between the weights and the vector, 
\begin{equation}
\min_{\h x, \h u}  \langle \h u, |\h x|-\epsilon\rangle \ \ \mbox{s.t.} \ \ A \h x = \h b, \ \h u \in \{0,1 \}^n,
\label{model:weighted2020}
\end{equation}
where $\epsilon > 0$ is a fixed number. With the assumption that the weights are binary, the authors \cite{zhu2020iteratively} established the equivalence between \eqref{model_l0} and \eqref{model:weighted2020} for a small enough $\epsilon$. 
Another related work is the trimmed LASSO \cite{amir2021trimmed,bertsimas2017trimmed},  
\begin{equation*} 
\min_{\|\h y\|_0 \leq k} \| \h x - \h y \|_1 = \sum_{i = k+1}^{n} |x_{(i)}| = \min_{\h u} \left\langle \h u, | \h x| \right\rangle  + \delta_{U}(\h u),
\end{equation*}
which is equivalent to the LL1 form on the right with $U = \{ \h u \in \{0,1\}^n \ \mid \ \|\h u\|_1 = n-k \} $.  In the middle, the sum is over the $n-k$ smallest components of the vector $\h x $ for a given sparsity $ k$. 

In what follows, we consider a more general form of $g(\h u; \alpha),$  as opposed to $\alpha g(\h u),$ i.e.,
\begin{equation}
F_{g}^U(\h x) = \min_{\h u} \left\langle \h u, | \h x| \right\rangle + g(\h u; \alpha)  + \delta_{U}(\h u).
\label{objective:newmodel2}
\end{equation}
As a consequence of \textit{Fenchel–Moreau's} theorem, regularizations that are concave on the positive cone are of the form \cref{objective:newmodel2}. In other words, we have the following theorem:
\begin{theorem}
\label{theorem:Lifting}
Any proper and lower semi-continuous function $J(\h x) $ that is concave on the positive cone can be lifted by a convex function $g:\mathbb{R}^n \rightarrow \mathbb{R} $ and a set $U $ such that $J(|\h x|) = F_g^U(\h x) $ as in (\ref{objective:newmodel2}). 
Here $g(\h u) :=  \sup_{\h x \geq \h 0}  \left\langle \h x, -\h u \right\rangle + J(\h x) $ and $U = \{ \h u \geq 0 \mid g(\h u) \neq +\infty \}$. 
\end{theorem}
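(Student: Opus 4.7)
The plan is to recognize $g$ as essentially the negative of the concave conjugate of $J$, extended to all of $\mathbb{R}^n$ by $-\infty$ off the positive cone, and then invoke the Fenchel--Moreau biconjugation theorem. Since $|\h x|$ always lives in $\mathbb{R}^n_+$, the claim reduces to showing $J(\h y) = \min_{\h u \in U}\langle \h u, \h y\rangle + g(\h u)$ for every $\h y \geq \h 0$. Convexity of $g$ is immediate from its definition as a pointwise supremum of affine functions of $\h u$. The ``easy'' inequality $\langle \h u, \h y\rangle + g(\h u) \geq J(\h y)$ follows by evaluating the sup in $g(\h u)$ at $\h x = \h y$; minimizing over $\h u \in U$ yields $F_g^U(\h x) \geq J(|\h x|)$.

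For the reverse inequality, I would define $\tilde J : \mathbb{R}^n \to [-\infty, +\infty)$ by $\tilde J(\h x) := J(\h x)$ on $\mathbb{R}^n_+$ and $\tilde J(\h x) := -\infty$ elsewhere, so that $\tilde J$ is proper, concave, and --- under the stated semicontinuity hypothesis --- admits a well-defined concave biconjugate. A short calculation yields $\tilde J^*(\h u) := \inf_{\h x}\bigl(\langle \h u, \h x\rangle - \tilde J(\h x)\bigr) = -g(\h u)$, whence the concave Fenchel--Moreau theorem gives $\tilde J(\h y) = \tilde J^{**}(\h y) = \inf_{\h u \in \mathbb{R}^n}\langle \h u, \h y\rangle + g(\h u)$. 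For $\h y \geq \h 0$ the left-hand side equals $J(\h y)$, so $J(\h y) = \inf_{\h u \in \mathbb{R}^n}\langle \h u, \h y\rangle + g(\h u)$.

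The last step is to upgrade this unconstrained infimum to a minimum over $U = \{\h u \geq \h 0 : g(\h u) < +\infty\}$, which is the main technical obstacle. Any minimizer $\hat{\h u}$ automatically satisfies $g(\hat{\h u}) < +\infty$, so only the sign constraint $\hat{\h u} \geq \h 0$ requires work. I would use a KKT / normal-cone shift: optimality forces $\h y$ to attain the sup defining $g(\hat{\h u})$, and the KKT conditions for $\max_{\h x \geq \h 0}\{J(\h x) - \langle \h x, \hat{\h u}\rangle\}$ at $\h x = \h y$ write $\hat{\h u}$ as a supergradient of $\tilde J$ at $\h y$ plus a non-negative multiplier $\h \lambda$ supported on $\{i : y_i = 0\}$. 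Shifting $\hat{\h u}$ upward on those null coordinates preserves $\langle \hat{\h u}, \h y\rangle$ and weakly decreases $g(\hat{\h u})$, producing $\h u^{\star} \in U$ that realizes the minimum. The crux, which I view as the subtlest point, is producing a supergradient of $\tilde J$ whose entries on $\{i : y_i > 0\}$ are non-negative; this is where the proper lower-semicontinuous concave closure of $\tilde J$ and the geometry of the normal cone of $\mathbb{R}^n_+$ are essential.
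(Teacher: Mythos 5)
Your biconjugation set-up is sound and is, up to a sign convention, the same route the paper takes: the paper works with the convex function $f=-J+\delta_{\mathbb{R}^n_+}$ and its convex conjugate, you work with the concave extension $\tilde J$ and its concave conjugate, and both reduce the claim to Fenchel--Moreau together with the identity $\tilde J^{*}=-g$. Your observations that $g$ is convex as a supremum of affine functions and that $F_g^U(\h x)\ge J(|\h x|)$ by testing the supremum at $\h x=|\h x|$ are correct. (Both you and the paper gloss over the fact that for a \emph{concave} $\tilde J$ the biconjugation theorem wants upper, not lower, semicontinuity; that is a shared, minor wrinkle.) The divergence, and the genuine gap, is in your final step.

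Fenchel--Moreau gives $J(\h y)=\inf_{\h u\in\mathbb{R}^n}\langle\h u,\h y\rangle+g(\h u)$ for $\h y\ge\h 0$, and you must pass to the infimum over $U\subseteq\mathbb{R}^n_+$. Your KKT/normal-cone plan does not close: you yourself flag that the crux is producing a supergradient of $\tilde J$ at $\h y$ with non-negative entries on $\{i\mid y_i>0\}$, and no such supergradient need exist under the stated hypotheses. Concretely, take $J(\h x)=-\|\h x\|_1$, which is proper, continuous, and concave on the positive cone. Then $g(\h u)=0$ on $\{\h u\ge-\h 1\}$ and $+\infty$ elsewhere, so the unconstrained infimum equals $\langle-\h 1,\h y\rangle=J(\h y)$, while $\min_{\h u\in U}\langle\h u,\h y\rangle+g(\h u)=0$ with $U=\mathbb{R}^n_+$; the two quantities differ and the conclusion fails. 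So the restriction to $\h u\ge\h 0$ is not a formality recoverable by shifting multipliers supported on the null coordinates of $\h y$: it genuinely requires $J$ to be non-decreasing on the positive cone (true for all the penalties catalogued in the paper, but not part of the theorem's hypotheses). Once monotonicity is granted, the whole KKT apparatus is unnecessary: for any $\h u$ with $u_i<0$, sending $x_i\to\infty$ in the supremum defining $g$ shows $g(\h u)=+\infty$, hence $g\equiv+\infty$ off $\mathbb{R}^n_+$ and the unconstrained infimum coincides with the infimum over $U$ term by term. This is exactly how the paper's computation of $f^{*}$ (finite only on $\{\h y\le\h 0\}$) disposes of the issue. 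A secondary point: your argument reasons about ``any minimizer $\hat{\h u}$'' and so presupposes the infimum is attained; the paper assumes attainment by convention, but your route should either do the same explicitly or work with the infimum throughout.
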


Please refer to \nameref{sec:Supplement} for detailed computations and proof of \cref{theorem:Lifting}.
Using this idea, we relate \eqref{objective:newmodel2} to the following functionals that are widely used to promote sparsity. 




(i) The $\ell_p $ model \cite{chartrandY08} is defined as
$
J^{\ell_p} (\h x) = \sum_{i=1}^{n} |x_i| ^{p}.$
As $p \rightarrow 0 $,  $\ell_p^p $ approaches to  $\ell_0 $, while 
it reduces to   $\ell_1$   for $p=1$. Much research focuses on $p=1/2,$ due to a  closed-form solution \cite{Xu2012,zong2012representative} in the optimization process.
By choosing $ g(\h u; p) =  \frac{1-p}{p} \sum_{i = 1}^n u_i^{\frac{p}{p-1}}$ and $U = \mathbb{R}^n_+ $, we can express the $\ell_p $ regularization as
\begin{equation*}
\min_{\h u} \left\{ \sum_{i=1}^{n} \left( u_i|x_i| + \frac{1-p}{p} u_i^{\frac{p}{p-1}}  \right) \ \mid \ u_i \geq 0 \right\}.
\end{equation*}

(ii) The log-sum penalty \cite{candes2008enhancing}  is given by   $
J^{\log}_a (\h x) = \sum_{i=1}^{n} \log(|x_i| + a),
$
for a small positive parameter $a$ to make the logarithmic function well-defined.
The re-weighted $ \ell_1$ algorithm \cite{candes2008enhancing} (IRL1) minimizes  $J^{\log}_a (\h x)$, which is equivalent to \eqref{objective:newmodel2} in that
\[
 \min_{\h u} \Bigg\{ \sum_{i=1}^{n} \left( u_i|x_i| + a u_i - \log(u_i)  \right)  \mid  u_i \geq 0 \Bigg\}.
\]

(iii) Smoothly clipped absolute deviation (SCAD) \cite{fan2001variable} is defined by
\begin{equation}
J_{a, b}^{\text{SCAD}} (\h x) = \sum_{i=1}^{n} f_{a, b}^{\text{SCAD}} (x_i),
\label{fun:SCAD}
\text{ where }
f_{a, b} ^{\text{SCAD}}(t) = 
\left\{
	\begin{array}{ll}
		a |t| & \mbox{if } |t| \leq  a,
		 \\
   \frac{2 a b |t| - t^2 - a^2}{2(b - 1)} & \mbox{if } a < |t| \leq  a b,
		\\
	\frac{(b + 1)a^2}{2} & \mbox{if }  |t| > ab.
				
	\end{array}
\right.
\end{equation}
This penalty is designed to reduce the bias of the $\ell_1 $ model.
For $g(\h u, a, b) = -ab \| \h u \|_1 + (b - 1) \frac{\| \h u \|_2^2}{2} $ and $U = [0, a]^n $, we have $J_{a, b}^{\text{SCAD}}$ is equivalent to
\begin{equation*}
 \min_{\h u} \left\{ \sum_{i=1}^{n} \left( u_i|x_i| - ab u_i + (b - 1)\frac{u_i^2 }{2}  \right) \ \mid \ u_i \in [0, a] \right\}.
\end{equation*}

(iv) Mini-max concave penalty (MCP) \cite{zhang2010nearly} is defined by 
\begin{equation}
J_{a, b}^{\text{MCP}} (\h x) = \sum_{i=1}^{n} f_{a, b}^{\text{MCP}} (x_i),
\label{fun:MCP}
\text{ where  }
f_{a, b}^{\text{MCP}} (t) =
\left\{
	\begin{array}{ll}
		a |t| - \frac{t^2}{2 b}  & \mbox{if } |t| \leq ab, \\
		\frac{1}{2} b a^2 & \mbox{if }  |t| > ab.
	\end{array}
\right.
\end{equation}
Same purpose of reducing bias as SCAD, MCP consists of two piece-wise defined functions, which is simpler than SCAD.
For $g(\h u, a, b) = -b (a \| \h u\|_1 - \frac{\| \h u\|_2^2}{2}) $ and $U = \mathbb{R}^n_+ $, we can rewrite $J_{a, b}^{\text{MCP}}$ as
\begin{equation*}
  \min_{\h u} \left\{ \sum_{i=1}^{n} \left( u_i|x_i| + b (\frac{u_i^2}{2} -a u_i ) \right)  \ | \  u_i \geq 0  \right\}.
\end{equation*}

(v)  Capped $\ell_1 $ (CL1) \cite{zhang2008multi} is defined as
$
J^{\text{CL1}}_a (\h x) = \sum_{i=1}^{n} \min \{ |x_i|, a \},$
with a positive parameter $a>0.$ As $a \rightarrow 0 $ the function $F^{\text{CL1}}_a / a $  approaches to $\ell_0.$ 
The CL1 penalty is  unbiased, and has less internal parameters than SCAD/MCP.
For $g(\h u,a) = -a\| \h u\|_1 $ and $U = [0,1]^n $, the CL1 regularization can be expressed as
$$
\min_{ \h u} \Bigg\{ \sum_{i=1}^{n}\left(  u_i|x_i| -a u_i \right)  \ | \   0 \leq u_i  \leq 1  \Bigg\}.
$$
It is  similar to the model in \cite{zhu2020iteratively}, 
except for a binary restriction on $\h u$ as in \eqref{model:weighted2020}.

(vi) Transformed $\ell_1 $ (TL1) \cite{lv2009unified} is defined as
$
J_a^{\text{TL1}} (\h x) = \sum_{i=1}^{n} \frac{(a+1)|x_i|}{a + |x_i|}.
$
It reduces to $\ell_0$ for $a=0,$ and converges to $\ell_1$ as $a\rightarrow \infty.$
The TL1 regularization is Lipschitz continuous, unbiased, and  
equivalent to
$$ \min_{\h u} \Bigg\{ \sum_{i=1}^{n} \left( u_i|x_i| +a u_i - 2 \sqrt{a u_i} \right)  \ | \  u_i \geq 0  \Bigg\}.
$$

(vii)  Error function penalty (ERF) \cite{guo2021novel} is defined by
\begin{equation}
J_{\sigma}^{\text{ERF}} (\h x) = \sum_{i=1}^{n} f_{\sigma}^{\text{ERF}} (|x_i|),
\label{fun:ERF}
\text{ where  }
f_{\sigma}^{\text{ERF}} (t) = \int_{0}^{t} e^{-\tau^2 / \sigma^2} d \tau.
\end{equation}
This model is less biased than $\ell_1,$ and gives a good approximation to  $\ell_0 $  for a small value of $\sigma $. Let $h(t) =  \int_{t}^{1} \sqrt{- \log( \tau)} d \tau $, then for $g(\h u, \sigma) = \sigma \sum_{i=1}^{n} h(u_i)  $ and $U = [0,1]^n $, the ERF function is equivalent to 
\begin{equation*}
 \min_{\h u} \left\{ \sum_{i=1}^{n} \left( u_i|x_i| + \sigma  h(u_i) \right) \ \mid \ u_i \in [0, 1] \right\}.
\end{equation*}



\begin{table}[t]
    \centering
\begin{tabular}{ |c|c|c|c| } 
 \hline
 Model & Regularization &  Function $ g$ & Set $U $ \\ \hline \hline
    $ \ell_p$ & $\sum_{i=1}^{n} |x_i|^{p} $ & $\frac{1-p}{p} \sum_{i = 1}^n u_i^{\frac{p}{p-1}} $ & $ \mathbb{R}^n_+$
 \\
 \hline
 log-sum  & $  \sum_{i=1}^{n} \log(|x_i| + a)$ & $ a \| \h u \|_1 - \sum_{i=1}^n \log(u_i)$ & $ \mathbb{R}^n_+$
 \\
 \hline
 SCAD & \eqref{fun:SCAD} & $-ab \| \h u \|_1 + (b - 1) \frac{\| \h u \|_2^2}{2} $ & $[0, a]^n $
 \\
 \hline
  MCP &  \eqref{fun:MCP} & $-b (a \| \h u\|_1 - \frac{\| \h u\|_2^2}{2}) $ & $ \mathbb{R}^n_+$ \\ 
 \hline
  CL1 &  $\sum_{i=1}^{n} \min \{ |x_i|, a \} $ & $-a\| \h u\|_1$ &$ [0,1]^n$ \\ 
 \hline 
 TL1 & $ \sum_{i=1}^{n} \frac{(a+1)|x_i|}{a + |x_i|} $ & $ a \|\h u \|_1 - 2 \sum_{i} (\sqrt{a u_i})$ &  $ \mathbb{R}^n_+$ \\ 
 \hline
 ERF & \cref{fun:ERF}  &  $\sigma \sum_i h(u_i)$ &  $ [0,1]^n$ \\ 
 \hline
\end{tabular}
\caption{Summary of various regularizations and their corresponding $g$ function and $U$ set.}
    \label{tab:g_functions}
\end{table}

We summarize these existing regularizations with their corresponding $g$ function and $U$ set in  \cref{tab:g_functions}. All these  $g$ functions are separable, and hence we can plot $g$ as a univariate function. As illustrated in   \cref{fig:gplot}, each $g$ function is decreasing on a small interval near origin, thus motivating  the conditions on the function $ g$ presented in \cref{def:type_g} as well as \cref{theorem:exact01,theorem:exact02} for exact recovery guarantees.

\begin{figure}[h]
     \centering
     \begin{subfigure}[b]{0.4\textwidth}
         \centering
         \includegraphics[width=\textwidth]{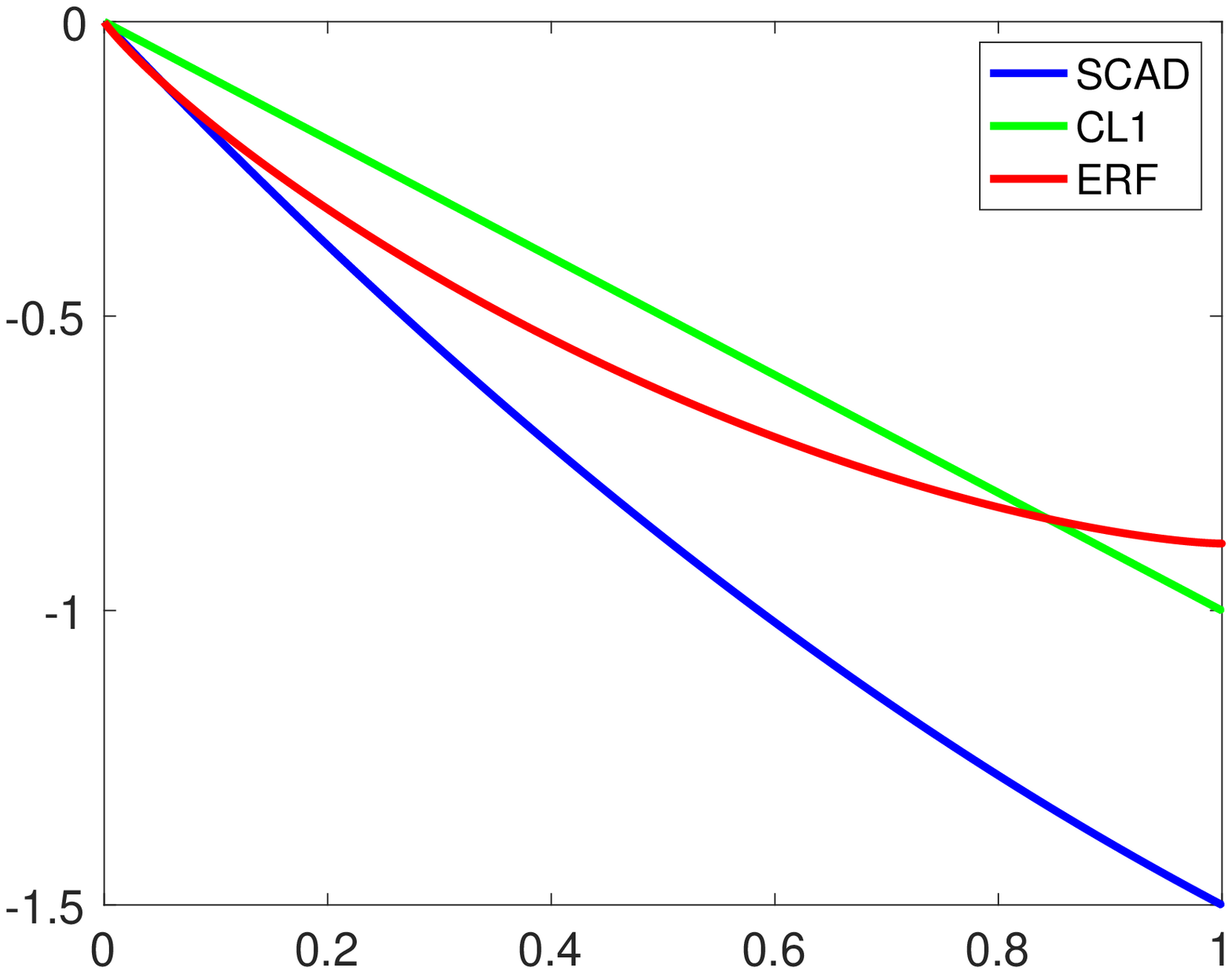}
         \caption{Decreasing functions on a bounded interval}
         \label{fig:gplot_a}
     \end{subfigure}
\hspace{0.5cm}   
     \begin{subfigure}[b]{0.4\textwidth}
         \centering
         \includegraphics[width=\textwidth]{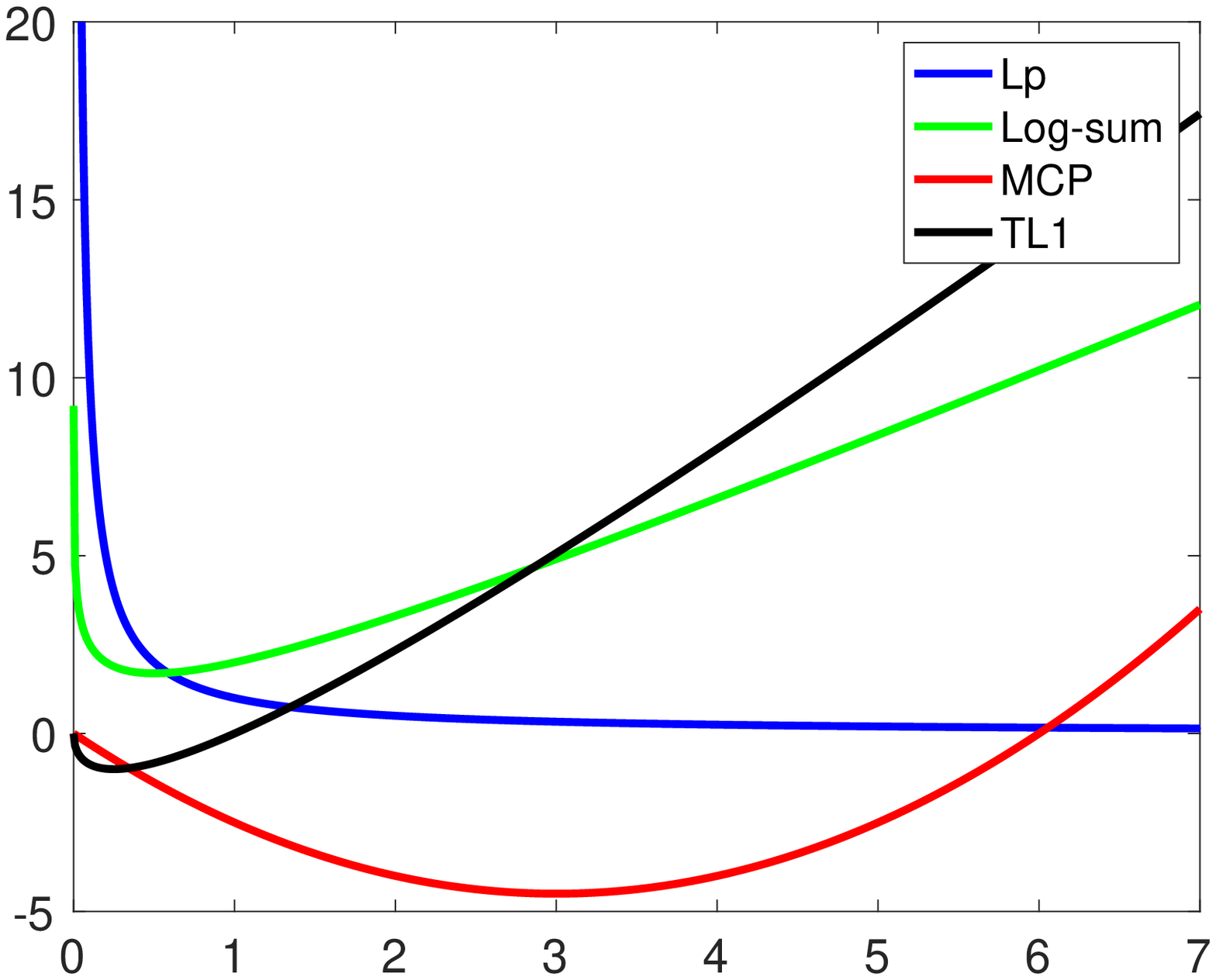}
         \caption{Convex functions on an unbounded interval}
         \label{fig:gplot_b}
     \end{subfigure}
     
        \caption{Plotting the associated $g$ function for  models listed in \cref{tab:g_functions} into two categories: (a)  SCAD with $a = 1, b = 2$, CL1 with $a = 1$, and ERF with $\sigma = 1 $;    and (b)
   $\ell_p $ with  $p = 1/2 $,  log-sum with $a = 1 $, MCP with $a = 1, b = 3 $, and  TL1 with $a = 4 $.
This motivates Type B and C in  \cref{def:type_g}.        }
\label{fig:gplot}
\end{figure}


\subsection{Properties of the 
Proposed Regularization}\label{sect:prop}

There is a wide range of analysis related to concave and symmetric regularization functions based on RIP and NSP conditions \cite{candesRT06,donoho2001uncertainty,tillmann2014computational,tran2017unified} regarding the sensing matrix $A $. Our general model $F^U_{g,\alpha}$ satisfies all the NSP-related conditions discussed in \cite{tran2017unified} so that the exact sparse recovery can be guaranteed. 
\cref{properties01} summarizes some important properties of the proposed regularization.

\begin{theorem}
\label{properties01}
For any $\h x \in \mathbb{R}^n,$ $\alpha > 0 ,$   and a feasible set $U $  on the weights $\h u, $ 
$F_{g, \alpha}^U (\h x)$ defined in \eqref{objective:newmodel} has the following properties,
\begin{enumerate}[label=(\roman*),left=\parindent]
\item{The function  $F^U_{g, \alpha}(\h x) = -\left( \alpha g + \delta_{U} \right)^*(-|\h x|),$ where $f^*$ denotes the \textit{convex conjugate} of a function $f$, thus $F^U_{g, \alpha}(\h x)$ is concave in the positive cone.
}

\item{If $g $ is symmetric on $ U $, then $ F^U_{g, \alpha}$ is symmetric on $\mathbb{R}^n $. }

\item{ If $g $ is separable on $ U $, then $ F^U_{g, \alpha}$ is  separable on $\mathbb{R}^n$.}

\item{ If $g $ is separable and symmetric on $U $, $ F^U_{g, \alpha}$  satisfies the increasing property on $\mathbb{R}^n_+ $, i.e., $F^U_{g, \alpha} (|\h x|) \geq F^U_{g, \alpha} ( |\h x'|) $ for any $|\h x| \geq |\h x'| $, and reverses the order of majorization, i.e.,   $F^U_{g, \alpha} (|\h x|) \leq F^U_{g, \alpha} ( |\h x'|) $  if $|\h x| \succ |\h x'|$.}

\item{If $g  $ is separable and $U $ is rectangular, then $ F^U_{g, \alpha}$  satisfies the sub-additive property on $\mathbb{R}^n $, i.e., 
\begin{equation*}
F^U_{g, \alpha}(\h x_1 + \h x_2) \leq F^U_{g, \alpha}(\h x_1) + F^U_{g, \alpha}(\h x_2), \quad \forall \h x_1, \h x_2 \in \mathbb{R}^n .
\end{equation*}
The equality holds  if  $\h x_1 $ and $\h x_2 $ have disjoint support, and each coordinate of $g $ has the same minimum. 

}


\item{
Let $U $ be compact and $g $ continuous. Then $F_{g,\alpha}^U $ is continuous and the set of sub-differentials of $-F_{g,\alpha}^U $ at the point $\h x \geq 0 $ is given by
\begin{equation*}
\partial  (-F_{g, \alpha}^U) (\h x)  = - \text{Conv} \left(\arg\min_{\h u \in U}   \left\langle \h u, |\h x | \right\rangle + \alpha g(\h u) \right),
\end{equation*}
where $\text{Conv} $ is the convex hall of the points. In addition, the function is differentaible at the points where the minimizer is unique. Consequently, if $g $ is strongly convex, $F_{g,\alpha}^U $ is continuously differentiable on the positive cone.
}
\item If $g(0) = 0 $ and $g$ takes its minimum value at some point in $ U$, then  we have for $\alpha_1 > \alpha_2 $ that
\begin{equation*}
    F^U_{g, \alpha_1}(\h x) \leq F^U_{g, \alpha_2}(\h x), \quad  \forall \h x \in \mathbb{R}^n .
\end{equation*}

\end{enumerate}
\end{theorem}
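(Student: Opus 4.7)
The plan is to handle the seven items in two waves: first the structural properties (i)--(iii), which follow essentially from the variational definition via conjugate duality; then (iv)--(vii), which combine the variational structure with classical tools (Schur's theorem, Berge's maximum theorem, Danskin's envelope theorem).

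For (i), I would rewrite the inner minimum as a negative supremum to obtain
\begin{equation*}
F^U_{g,\alpha}(\h x) = -\sup_{\h u}\bigl\{\langle \h u, -|\h x|\rangle - (\alpha g+\delta_U)(\h u)\bigr\} = -(\alpha g + \delta_U)^*(-|\h x|),
\end{equation*}
and since every conjugate is convex, the right-hand side shows $F^U_{g,\alpha}$ is concave in $|\h x|$, hence on the positive cone. For (ii), I would observe that for symmetric $g$ the substitution $\h u \mapsto \h u\circ \pi$ is a $g$-preserving bijection of $U$ that converts the problem for $\h x$ into the problem for $\h x\circ\pi$. For (iii), separability $g = \sum_i g_i$ combined with a product $U = \prod_i U_i$ decouples the minimization, giving $F^U_{g,\alpha}(\h x) = \sum_i F_i(x_i)$ with $F_i(t) = \min_{u\in U_i}\bigl(u|t| + \alpha g_i(u)\bigr)$; each $F_i$ is concave and non-decreasing in $|t|$, being an infimum of affine functions with non-negative slopes.

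With (iii) in hand, (iv) reduces to Schur's classical theorem applied to the separable symmetric concave function $\sum_i F_i(|x_i|)$ on $\mathbb R^n_+$: a two-coordinate transfer argument shows that concavity of the common profile yields the reversed-majorization inequality, and the monotone (increasing) property is already built into the $F_i$. For (v), I would reduce sub-additivity to the per-coordinate inequality $F_i(s+t)\leq F_i(s)+F_i(t)$ for univariate concave $F_i$, which follows from $|s+t|\leq|s|+|t|$, monotonicity, and the standard concavity estimate $F_i(s)+F_i(t)\geq F_i(s+t)+F_i(0)$. The equality case under disjoint support makes $|x_{1i}+x_{2i}|=|x_{1i}|+|x_{2i}|$ coordinatewise, and the ``same minimum'' hypothesis causes the constant contributions $F_i(0)$ from the zero coordinates to cancel on both sides.

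For (vi), continuity follows from Berge's maximum theorem applied to the minimization of a jointly continuous objective over the compact set $U$ with parameter $\h x$. The sub-differential formula is Danskin's envelope theorem: the partial gradient of the inner objective in $\h x$ (on the positive cone, where $|\h x|=\h x$) is $\h u$, so the convex sub-differential of $-F^U_{g,\alpha}$ at $\h x$ is the convex hull of the optimal $\h u$'s. Uniqueness of the minimizer collapses the hull to a point and yields differentiability, which strong convexity of $g$ guarantees by making the inner objective strictly convex. Finally for (vii), the hypotheses ensure that $\h u = \h 0$ is feasible with objective value $0$, so $F^U_{g,\alpha_2}(\h x)\leq 0$; combining this with $\langle \h u^{*},|\h x|\rangle \geq 0$ for the $\alpha_2$-optimizer $\h u^{*}$ forces $g(\h u^{*})\leq 0$, and using $\h u^{*}$ as a feasible candidate in the $\alpha_1$-problem then yields
\begin{equation*}
F^U_{g,\alpha_1}(\h x) \leq \langle \h u^{*},|\h x|\rangle + \alpha_1 g(\h u^{*}) = F^U_{g,\alpha_2}(\h x) + (\alpha_1-\alpha_2)\,g(\h u^{*}) \leq F^U_{g,\alpha_2}(\h x).
\end{equation*}
The main obstacle I anticipate is the sub-differential formula in (vi): carefully justifying the convex-hull-of-argmins form via the envelope theorem, especially around points where $\h x$ crosses coordinate hyperplanes (at which $|\h x|$ has kinks) or where the argmin set is non-singleton, will require the most care.
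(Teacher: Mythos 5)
Your treatment of items (i), (ii), (iii), (vi), and (vii) is correct and essentially coincides with the paper's: (i) is the same conjugate identity, (iii) the same coordinate decoupling, (vii) word-for-word the paper's candidate-substitution argument, and (vi) reaches the identical convex-hull-of-argmins formula (the paper invokes the Ioffe--Tikhomirov theorem where you invoke Danskin/Berge, but on the positive cone the inner objective is linear in $\h x$, so the two routes are interchangeable; your worry about kinks is moot there since $|\h x|=\h x$). For (iv) you take a genuinely different route on the majorization half: the paper simply cites Proposition~2.10 of Tran et al., whereas you reduce to Schur-concavity of $\sum_i \phi(|x_i|)$ for a common concave profile $\phi$ via a two-coordinate transfer. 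This is self-contained and valid --- symmetry of $g$ is exactly what makes all the $F_i$ equal --- at the cost of a slightly longer argument than a citation.

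The one genuine gap is in (v). Your per-coordinate chain is $F_i(s+t)\le F_i(|s|+|t|)\le F_i(s)+F_i(t)-F_i(0)$, so to conclude sub-additivity you need $F_i(0)\ge 0$. But $F_i(0)=\alpha\min_{u\in U_i}g_i(u)$, which is \emph{not} nonnegative under the stated hypotheses: most of the instances in \cref{tab:g_functions} have $g_i$ decreasing from $g_i(0)=0$, hence $\min g_i<0$. For the CL1 instance, $g_i(u)=-au$ on $[0,1]$ gives $F_i(t)=\min\{|t|,\alpha a\}-\alpha a$, so $F_i(0)=-\alpha a<0$ and the claimed inequality already fails at $\h x_1=\h x_2=\h 0$. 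In other words, the step ``follows from \dots the standard concavity estimate'' cannot be completed without an additional normalization such as $\min_{U_i}g_i\ge 0$, equivalently $F^U_{g,\alpha}(\h 0)=0$. The paper sidesteps this by citing Lemma~2.7 of Tran et al., where that normalization is part of the setup; to make your argument self-contained you should either add this hypothesis explicitly or work with the shifted function $F^U_{g,\alpha}-F^U_{g,\alpha}(\h 0)$, which also makes your equality-case discussion (the cancellation of the $F_i(0)$ contributions from the off-support coordinates) go through cleanly.
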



\begin{proof} 
\textit{(i)} Recall the convex conjugate of a function $f$ is defined as 
$
f^*(\h y)=\sup\{\langle \h x, \h y\rangle -f(\h x)\}.$
Comparing it to the definition $F^U_{g, \alpha}(\h x)$ in \eqref{objective:newmodel}, we have $F^U_{g, \alpha}(\h x) = -\left( \alpha g + \delta_{U} \right)^*(-|\h x|)$. As convex conjugate is always convex, $F_{g, \alpha}(\h x)$ is concave on the positive cone.


\textit{(ii)}  If $ g$ is symmetric, it is straightforward that  $F_{g, \alpha} $ is also symmetric.  

\textit{(iii)}
Since $g $ is a separable function,   $\min_{\h u\in U} f(\h x, \h u)$ breaks down into $n $  scalar problems, and hence  $  F^U_{g, \alpha}$ is   separable.

\textit{(iv)}
The increasing property follows from the fact that  we have for every $\h u \in U $
\begin{equation*}
 \left\langle \h u, |\h x | \right\rangle + \alpha g(\h u) \geq  \left\langle \h u, |\h x' | \right\rangle + \alpha g(\h u) \quad \forall |\h x| \geq |\h x'|.
\end{equation*}
Taking the minimum of both sides with respect to $\h u $ proves the increasing property.
The reverse order of majorization can be proved in the same way as in \cite[Proposition~2.10]{tran2017unified}.

\textit{(v)}  The sub-additive property can be proved in the same way as in   \cite[Lemma~2.7]{tran2017unified}.


\textit{(vi)}
It is straightforward that 
\begin{equation}
-F^U_{g, \alpha}(\h x) = -\min_{\h u\in U} \left\langle \h u, | \h x| \right\rangle + \alpha g(\h u)
= \max _{\h u\in U} -\left\langle \h u, | \h x| \right\rangle - \alpha g(\h u).
\end{equation}
for each $\h u \in U $. We set $f_{\h u}(\h x) = -\left\langle \h u, | \h x| \right\rangle - \alpha g(\h u)$.
As $g$ is continuous,  the map $\h u \mapsto f_{\h u}(\h x)$ is continuous for every $\h x $, and for each $\h u \in U $, the function $f_{\h u} $ is continuous. For $\h x \geq 0 $, it follows from \textit{Ioffe-Tikhomirov}'s Theorem \cite[Proposition~6.3]{hantoute2008characterizations} that 
\begin{equation}
\begin{split}
\partial  (-F_{g, \alpha}^U) (\h x)  & = \text{Conv} \left\{ \cup_{\h u \in T(\h x)} \partial f_{\h u}(\h x) \right\}
 =
\text{Conv} \left\{ \cup_{\h u \in T(\h x)} \{-\h u \} \right\}
\\ & =
\text{Conv} \{ -\h u \mid -F_{g, \alpha}^U(\h x) = f_{\h u}(\h x)  \}
 =
\text{Conv} \{ -\h u \mid \h u \in \arg\min_{\h u \in U}   \left\langle \h u, |\h x | \right\rangle + \alpha g(\h u)  \}
\\ & =
-\text{Conv} \{  \arg\min_{\h u \in U}   \left\langle \h u, |\h x | \right\rangle + \alpha g(\h u)  \},
\end{split}
\end{equation}
where $ T(\h x) = \{ \h u \mid -F_{g, \alpha}^U(\h x) = f_{\h u}(\h x)  \}$. 


 \textit{(vii)} Given $\h x \in \mathbb{R}^n $ with $\alpha_2$, we denote
$
     \h u^*_2 = \arg\min_{\h u \in U} \left\langle \h u, |\h x | \right\rangle + \alpha_2 g(\h u),
$
which implies $ F^U_{g, \alpha_2}(\h x) = \left\langle \h u^*_2, |\h x | \right\rangle + \alpha_2 g(\h u^*_2) $ and $\left\langle \h u^*_2, |\h x | \right\rangle + \alpha_2 g(\h u^*_2) \leq \left\langle \h 0, |\h x | \right\rangle + \alpha_2 g(\h 0) = 0.$ 
It further follows from $\left\langle \h u^*_2, |\h x | \right\rangle \geq 0$ that $g(\h u^*_2) \leq 0 $. For $\alpha_1 > \alpha_2 $, we have $\alpha_1 g(\h u^*_2) \leq \alpha_2 g(\h u^*_2)$ and hence
$
     F^U_{g, \alpha_1}(\h x) \leq  \left\langle \h u_2^*, |\h x | \right\rangle + \alpha_1 g(\h u^*_2) \leq \left\langle \h u_2^*, |\h x | \right\rangle + \alpha_2 g(\h u^*_2) = F^U_{g, \alpha_2}(\h x).
$
 \end{proof}
 




Using proprieties \textit{(i)-(v)}, we can prove that every $s $-sparse vector $\h x $ is the unique solution to  \eqref{eq:con} if and only if $F_{g, \alpha} $ satisfies the generalized null space property (gNSP) of order $s$. Please refer to \cite[Theorem~4.3]{tran2017unified} for more details on gNSP. As for the property \textit{(vii)}, it has algorithmic benefits, as many optimization algorithms are designed for continuously differentiable functions. 
We 
show in Theorem~\ref{thm:l0/l1} that $F^U_{g, \alpha}$ is related to $\ell_0$ and $\ell_1$  if $g$ is separable (without the assumption of strong convexity on $g$). The relationship of $F^U_{g, \alpha}$ to iteratively re-weighted algorithms, e.g., \cite{candes2008enhancing,guo2021novel} is characterized in \cref{thm:reweighted}.




\begin{theorem}\label{thm:l0/l1}
Suppose $U= [0,1]^n$ and $g$ is separable, i.e., $g(\h u) = \sum_{i=1}^n g_i(u_i)$ with each $g_i$ being a strictly decreasing  function  on $[0,1]$ with a bounded derivative. If $g_i(0) =0$ and $g_i(1) = -1 $ for $1\leq i \leq n$,  we have that for all $\h x \in \mathbb{R}^n $,
\begin{enumerate}[label=(\roman*), left=\parindent]
\item $ \frac 1 {\alpha} F^U_{g, \alpha} (\h x)  + n \rightarrow \|\h x\|_0 $, as $\alpha \rightarrow 0 $. 

\item $ F^U_{g, \alpha}(\h x) -\alpha g(\h 1)  \rightarrow \|\h x\|_1$, as $\alpha \rightarrow +\infty $.

\end{enumerate}
\end{theorem}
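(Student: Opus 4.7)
Since the feasible set $U=[0,1]^n$ is a product and $g(\h u)=\sum_{i=1}^n g_i(u_i)$ is separable, the minimization defining $F^U_{g,\alpha}$ decouples across coordinates. Write
\begin{equation*}
\phi_i(\alpha,t) := \min_{u\in[0,1]} u|t| + \alpha g_i(u), \qquad F^U_{g,\alpha}(\h x) = \sum_{i=1}^n \phi_i(\alpha,x_i).
\end{equation*}
By compactness of $[0,1]$ and continuity of $g_i$, the minimum is attained; denote a minimizer by $u^*_i(\alpha)\in[0,1]$. Both claims then reduce to a one-dimensional analysis of $\phi_i$.

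For (i), observe first that if $x_i=0$, then $\phi_i(\alpha,0)=\alpha\min_{u\in[0,1]} g_i(u)=\alpha g_i(1)=-\alpha$ by strict monotonicity of $g_i$ and the normalization $g_i(1)=-1$; hence $\phi_i(\alpha,0)/\alpha+1=0$. For $x_i\neq 0$, I exploit the bounded-derivative hypothesis: letting $M:=\sup|g_i'|<\infty$, the identity $g_i(0)=0$ combined with the mean value theorem gives $g_i(u)\geq -Mu$ on $[0,1]$. Therefore, for every $u\in[0,1]$,
\begin{equation*}
u|x_i|+\alpha g_i(u) \;\geq\; u\bigl(|x_i|-\alpha M\bigr),
\end{equation*}
which is non-negative as soon as $\alpha\leq |x_i|/M$. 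Since the value at $u=0$ is zero, this forces $\phi_i(\alpha,x_i)=0$ for all sufficiently small $\alpha$, so $\phi_i(\alpha,x_i)/\alpha+1\to 1$. Summing the two cases over $i$ gives $(1/\alpha)F^U_{g,\alpha}(\h x)+n\to\|\h x\|_0$.

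For (ii), note that $\alpha g(\h 1)=-\alpha n$, so $F^U_{g,\alpha}(\h x)-\alpha g(\h 1)=\sum_i(\phi_i(\alpha,x_i)+\alpha)$ and it suffices to show $\phi_i(\alpha,x_i)+\alpha\to|x_i|$. Evaluating at $u=1$ gives $\phi_i(\alpha,x_i)\leq|x_i|-\alpha$, i.e., the upper bound $\phi_i+\alpha\leq|x_i|$. For the matching lower bound, using $g_i(u)\geq g_i(1)=-1$,
\begin{equation*}
\phi_i(\alpha,x_i)+\alpha \;=\; u^*_i(\alpha)|x_i| + \alpha\bigl(g_i(u^*_i(\alpha))+1\bigr) \;\geq\; u^*_i(\alpha)|x_i|,
\end{equation*}
so the claim reduces to $u^*_i(\alpha)\to 1$ as $\alpha\to\infty$. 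Comparing the optimal value against the value at $u=1$ rearranges to $\alpha\bigl(g_i(u^*_i(\alpha))+1\bigr)\leq |x_i|(1-u^*_i(\alpha))\leq |x_i|$, hence $g_i(u^*_i(\alpha))\to -1$. Since $g_i$ is continuous and strictly decreasing on $[0,1]$ with $g_i(1)=-1$, the map $g_i\colon[0,1]\to[-1,0]$ is a homeomorphism; applying its continuous inverse yields $u^*_i(\alpha)\to 1$, completing (ii).

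\textbf{Main obstacle.} Both parts are essentially one-dimensional once separability is invoked; the only subtle step is part (i). The bounded-derivative hypothesis on $g_i$ is used precisely to rule out a small but positive minimizer producing a negative value: without a one-sided Lipschitz control near the origin a very steeply decreasing $g_i$ could overcome the linear penalty $u|x_i|$ and corrupt the $\ell_0$ limit. The $\ell_1$ limit in (ii), in contrast, needs only strict monotonicity at the right endpoint, which is already built into the hypotheses.
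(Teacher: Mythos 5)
Your proof is correct. Part (i) is in substance the paper's own argument: the paper reads off the sign of the coordinate-wise derivative $|x_i| + \alpha g_i'(u_i)$ to place the minimizer at $u_i=1$ when $x_i=0$ and at $u_i=0$ when $x_i\neq 0$ and $\alpha$ is small, which is exactly your two cases, with the derivative-sign step replaced by the equivalent bound $g_i(u)\geq -Mu$. (Minor remark: note that $M>0$ is forced by strict monotonicity, so the threshold $|x_i|/M$ is well defined.) Part (ii) is where you genuinely diverge, and to your advantage. The paper asserts that for $\alpha$ large the minimizer is attained \emph{exactly} at $\h u = \h 1$, so that $F^U_{g,\alpha}(\h x)-\alpha g(\h 1)$ equals $\|\h x\|_1$ for all large $\alpha$; that step implicitly needs $g_i'$ bounded away from zero near $u=1$, which the hypotheses do not supply. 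For instance $g_i(u)=(1-u)^2-1$ satisfies every assumption of the theorem, yet for $x_i\neq 0$ the coordinate minimizer is $u_i^*=1-|x_i|/(2\alpha)<1$ for every $\alpha$, so the objective is never exactly $\|\h x\|_1-\alpha n$. Your argument sidesteps this: you show only that $g_i(u_i^*(\alpha))\to -1$, invert through the homeomorphism $g_i\colon[0,1]\to[-1,0]$ to conclude $u_i^*(\alpha)\to 1$, and then sandwich $\phi_i+\alpha$ between $u_i^*(\alpha)|x_i|$ and $|x_i|$. This establishes the stated limit under exactly the stated hypotheses, whereas the paper's route proves a stronger ``eventually exact'' claim that fails in general.
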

\begin{proof}
\textit{(i)} For any 
fixed $\h x \in \mathbb{R}^n $, we consider the derivative of $F^U_{g, \alpha}$ with respect to each of its component, i.e.,  $f_i:=|x_i| + \alpha g'(u_i)$. If $x_i = 0,$ then $f_i$ is negative due to decreasing $g,$ and hence the minimum is attained at $u_i = 1$. If $x_i \neq 0$, $f_i$ is positive 
for a small enough $\alpha,$ due to the assumption that $g'$ is bounded. Then positive derivative implies that the function is increasing, and hence the minimum is attained at $u_i = 0 $. In summary,  if $\alpha$ is sufficiently small, then we obtain that $u_i=1$ if $x_i=0$ and $u_i=0$ if $x_i \neq 0$.
For this choice of $\h u, $ we get that $ \left\langle \h u, |\h x | \right\rangle = 0$ and  $$\sum_{i} g_i(u_i) = \sum_{x_i = 0} g_i(u_i) = \sum_{x_i = 0} (-1) = \|\h x \|_0 - n. $$
which implies that $F_{g, \alpha} (\h x) =  \alpha (\|\h x \|_0 - n)$ for a small enough $\alpha $ that depends on the choice of $\h x$. By letting $\alpha\rightarrow 0$, we have $F_{g, \alpha} (\h x) / \alpha =  \|\h x \|_0 - n$ for all $\h x.$

\textit{(ii)} Since $g(1) < g(0)$ there exists a value of $u_i \in (0,1] $ with $g'(u_i) < 0 $ and so for large enough $\alpha $, the derivative  $ |x_i| + \alpha g'(u_i) $ is always negative. It further follows from  the decreasing function $g$ that the minimizer is always attained at $\h u = \h 1$ to reach to the desired result. Similarly to (i), by letting $\alpha\rightarrow +\infty,$ the analysis holds for all $\h x.$
\end{proof}

\cref{thm:l0/l1} implies that the function $ \frac{1}{\alpha}F_{g,\alpha}^U + n$ approximates the $\ell_0$ norm from below.  We can define a function of  $H(\h x, \alpha):= \frac 1 {\alpha} F^U_{g, \alpha} (\h x)  + n : \mathbb{R}^n \times [0,\alpha_0] \rightarrow \mathbb{R}$ as a transformation between $\| \h x \|_0  $ and $\frac 1 {\alpha_0} F^U_{g, \alpha_0} (\h x)  + n $ for a fixed $\alpha_0$. As characterized in \cref{cor:l0/l1}, this relationship motivates to consider a type of homotopic algorithm (discussed in \Cref{sec_numAlg}) to better approximate the desired $\ell_0$ norm, although $H(\h x, \alpha)$ is not homotopy itself (it is not continuous with respect to $\h x$).

\begin{corollary}\label{cor:l0/l1}
If $\{ \alpha_i \} $ is a decreasing sequence converging to zero and $g  $ satisfies the conditions in \cref{thm:l0/l1}, then a sequence of functions $\{ F_{g,\alpha_i}^U \}$ and  $\{ \frac{1}{\alpha_i}F_{g,\alpha_i}^U \}$ are increasing, i.e.,
\begin{equation*}
 \frac{1}{\alpha_0}F_{g,\alpha_0}^U(\h x) \leq   
 \frac{1}{\alpha_1}F_{g,\alpha_1}^U(\h x)
 \leq \dots \rightarrow \|\h x \|_0 - n, \quad \forall \h x.
\end{equation*}
\end{corollary}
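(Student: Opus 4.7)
My plan is to decompose the corollary into three independent pieces: (a) monotonicity of the sequence $\{F^U_{g,\alpha_i}(\h x)\}$, (b) monotonicity of the rescaled sequence $\{\alpha_i^{-1}F^U_{g,\alpha_i}(\h x)\}$, and (c) identification of the pointwise limit as $\|\h x\|_0-n$. Piece (c) is handed to us directly by part (i) of \cref{thm:l0/l1}, so the only real work lies in the two monotonicity claims.

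For (a), I would simply invoke property (vii) of \cref{properties01}. The hypotheses needed there are $g(\h 0)=0$ and that $g$ attains its minimum somewhere in $U$; both are immediate from the setup of \cref{thm:l0/l1}: separability together with $g_i(0)=0$ gives $g(\h 0)=0$, and strict monotonicity of each $g_i$ on $[0,1]$ forces $g$ to attain its minimum value $-n$ at $\h 1\in U$. Since $\{\alpha_i\}$ is decreasing, property (vii) then yields $F^U_{g,\alpha_0}(\h x)\le F^U_{g,\alpha_1}(\h x)\le\cdots$ pointwise.

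For (b), my approach is to fix $\h x$ and study the scalar function $h(\alpha):=F^U_{g,\alpha}(\h x)$ on $[0,\infty)$. The key observation is that $h$ is concave in $\alpha$, because for each fixed $\h u\in U$ the map $\alpha\mapsto \langle \h u,|\h x|\rangle+\alpha g(\h u)$ is affine in $\alpha$, and $h$ is the pointwise infimum over $\h u\in U$ of this family. Moreover $h(0)=\min_{\h u\in U}\langle \h u,|\h x|\rangle=0$, attained at $\h u=\h 0$. A standard concave-function fact then gives that $\alpha\mapsto h(\alpha)/\alpha$ is non-increasing on $(0,\infty)$: writing $\beta=(\beta/\alpha)\,\alpha+(1-\beta/\alpha)\cdot 0$ for $0<\beta<\alpha$ and applying concavity together with $h(0)=0$ yields $h(\beta)/\beta\ge h(\alpha)/\alpha$. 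Specializing to the decreasing sequence $\{\alpha_i\}$ gives the claim.

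The only subtlety worth flagging is that concavity of $h(\alpha)$ does not require any convexity of $g$ itself; it comes entirely from the infimum-of-affine-functions structure in the scalar variable $\alpha$. With that observation in hand, I do not anticipate a genuine obstacle: (a) reduces to one citation, (b) is a three-line calculus fact applied to a function we already understand, and (c) is the limit just established in \cref{thm:l0/l1}.
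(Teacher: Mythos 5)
Your proof is correct and complete; the paper states this corollary without proof, treating it as an immediate consequence of Theorem~\ref{thm:l0/l1} and property \textit{(vii)} of Theorem~\ref{properties01}, and your argument supplies exactly the missing details. In particular, your observation that $\alpha\mapsto F^U_{g,\alpha}(\h x)$ is concave as an infimum of affine functions with value $0$ at $\alpha=0$, so that $F^U_{g,\alpha}(\h x)/\alpha$ is non-increasing in $\alpha$, is the right (and cleanest) way to justify the monotonicity of the rescaled sequence, which does not follow from property \textit{(vii)} alone.
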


If we do not restrict all the $g_i$ functions attain the same value at $1$ as in \cref{thm:l0/l1}, but rather $g_i(1)$ can take different values, then the proposed regularization
$ F^U_{g, \alpha}$ is equivalent to a weighted $\ell_1 $ model with a certain shift of $g(\h 1)$; see \cref{thm:reweighted}.

\begin{theorem}\label{thm:reweighted}
Suppose $U= [0,1]^n$ and $g$ is separable, i.e., $g(\h u) = \sum_{i=1}^n g_i(u_i)$ with each $g_i$ being a strictly decreasing  function  on $[0,1]$ with a bounded derivative. If $g_i(0) =0$ and $g_i(1) = -w_i $ for $1\leq i \leq n$,  we have  for all $\h x \in \mathbb{R}^n $,
\begin{equation}\label{eq:reweighed}
    F^U_{g, \alpha}(\h x) -\alpha g(\h 1) \rightarrow \left\langle \h w, \h x \right\rangle, \quad \text{as } \alpha\rightarrow +\infty.
\end{equation}
In another words, for a sufficiently large $\alpha$, $ F^U_{g, \alpha}$ is approaching to a weighted $\ell_1 $ model.
\end{theorem}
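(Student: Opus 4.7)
The plan is to extend the endpoint-localization argument from part (ii) of Theorem~\ref{thm:l0/l1} to the case where $g_i(1)=-w_i$ is allowed to vary across coordinates. Because $g$ is separable, the minimization defining $F^U_{g,\alpha}$ splits into $n$ independent scalar problems on $[0,1]$,
\[
\min_{u\in[0,1]}\; u|x_i| + \alpha g_i(u),
\]
so it suffices to track each coordinate separately.

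First I would show that, for every $\h x\in\mathbb{R}^n$ and every $i$, the minimizer $u_i^\star(\alpha)$ satisfies $u_i^\star=1$ once $\alpha$ exceeds a finite threshold. Applying the mean value theorem to $g_i(1)-g_i(0)=-w_i<0$ produces a point $\xi_i\in(0,1)$ with $g_i'(\xi_i)=-w_i<0$; the derivative $|x_i|+\alpha g_i'(\xi_i)$ is then strictly negative for $\alpha>|x_i|/w_i$, which forbids the minimum from occurring at $u=0$. Strict monotonicity of $g_i$ on $[0,1]$ together with boundedness of $g_i'$ forces any minimum to lie at the right endpoint: for $u<1$ we have $g_i(u)-g_i(1)>0$, and multiplying by $\alpha$ dominates the bounded contribution $(1-u)|x_i|$ for $\alpha$ large enough. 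Setting $\alpha_0(\h x)=\max_i|x_i|/w_i$ then yields $\h u^\star(\alpha)=\h 1$ simultaneously in all coordinates for $\alpha>\alpha_0(\h x)$.

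With $\h u^\star=\h 1$ substituted into the definition and the shift $\alpha g(\h 1)$ subtracted, the proof produces
\[
F^U_{g,\alpha}(\h x)-\alpha g(\h 1) \;=\; \sum_{i=1}^n\bigl(|x_i|+\alpha g_i(1)-\alpha g_i(1)\bigr) \;=\; \sum_{i=1}^n|x_i|,
\]
and the weighted interpretation in \eqref{eq:reweighed} would then follow by identifying this residual with $\langle\h w,|\h x|\rangle$ under the convention that the weights $\h w$ enter the regularizer through $\alpha g(\h 1)=-\alpha\langle\h w,\h 1\rangle$ and are read off from the asymptotic per-coordinate rate at which $u_i^\star$ locks onto $1$.

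The main obstacle is the last identification step: the endpoint-localization argument is robust and generalizes cleanly from Theorem~\ref{thm:l0/l1} (ii) even when $g_i'$ may vanish at isolated points (the mean value theorem supplies a single usable point $\xi_i$ and strict monotonicity handles the rest), but the algebra after substitution produces $\|\h x\|_1$ rather than $\langle\h w,|\h x|\rangle$. Closing the gap will require either a re-expression of the shift that prevents the $w_i$ from cancelling---for instance, tracking each coordinate's shift as $\alpha g_i(1)\cdot w_i$ rather than $\alpha g_i(1)$---or a finer asymptotic expansion of $h_{i,\alpha}(u_i^\star)$ beyond the leading order that exposes a $w_i|x_i|$ contribution; this is the step I expect to be delicate and to determine the precise form in which \eqref{eq:reweighed} should be read.
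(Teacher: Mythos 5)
Your localization of the scalar minimizers at the right endpoint is exactly the route the paper takes: its proof likewise concludes that $\h u^*=\h 1$ once $\alpha$ is large enough and then simply asserts \eqref{eq:reweighed}. The computation you carry out after that substitution is correct, and the ``gap'' you flag is real --- but it is a defect of the theorem statement, not of your argument. With $\h u^*=\h 1$ one gets $F^U_{g,\alpha}(\h x)-\alpha g(\h 1)=\langle \h 1,|\h x|\rangle+\alpha g(\h 1)-\alpha g(\h 1)=\|\h x\|_1$, independently of the values $w_i=-g_i(1)$. Even handling the case where the minimizer does not sit exactly at $1$ (possible when $g_i'$ vanishes at the endpoint --- a point both you and the paper gloss over, since ``strictly decreasing with bounded derivative'' does not bound $g_i'$ away from $0$), one still has $u_i^*(\alpha)\,|x_i|\le F_i-\alpha g_i(1)\le |x_i|$ with $u_i^*(\alpha)\to 1$, so the limit is $\|\h x\|_1$ for every strictly decreasing $g_i$ on $[0,1]$. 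Hence the statement as printed can hold only when $\h w=\h 1$, in which case it collapses to part (ii) of \cref{thm:l0/l1}.

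Consequently, the ``delicate identification step'' you hope to close does not exist: neither re-expressing the shift nor a higher-order expansion will expose a $w_i|x_i|$ term, because the leading term is already $1\cdot|x_i|$ and the correction $\alpha\bigl(g_i(u_i^*)-g_i(1)\bigr)+(u_i^*-1)|x_i|$ tends to $0$. A genuinely weighted limit $\langle\h w,|\h x|\rangle$ would require the coordinates to be treated asymmetrically in a way that survives $\alpha\to\infty$ --- for instance a feasible box $U=\prod_i[0,w_i]$ forcing $u_i^*\to w_i$ (and subtracting $\alpha g(\h w)$) --- which is not the hypothesis here. In short: your proof matches the paper's, your algebra is right, and you should report the discrepancy rather than try to repair it with the speculative fixes in your last paragraph, which cannot work.
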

\begin{proof}
Since each derivative $g_i'$ is bounded, then there exists   a positive number  $M_{\h x}$ (depending on $\h x$) such that $|x_i| + \alpha g'(u_i)$ is  negative for $ \alpha > M_{\h x}$. As a result, the minimizer of  $\arg\min_{u_i}\langle u_i, |x_i|\rangle + \alpha g_i(u_i)$ is attained at $u_i = 1 $. For a sufficiently large $\alpha,$ the minimizer $\h u^* =\mathbf{1} $. By letting $\alpha\rightarrow +\infty,$ \cref{eq:reweighed} holds for all $\h x.$
\end{proof}

\subsection{Exact Recovery Analysis}\label{sect:exact}

There are many models approximating the $\ell_0$ minimization problem \eqref{model_l0}, and yet only a few of them have exact recovery guarantees. 
Motivated by the equivalence \cite{zhu2020iteratively} between the $\ell_0 $ model \eqref{model_l0} and \eqref{model:weighted2020} with a sufficiently small parameter $\epsilon$,
we  give conditions on the $g $ function to establish the equivalence between our proposed model \eqref{eq:con} and \eqref{model_l0}.  Note that  the weight vector $\h u $ in our formulation is not binary, but takes continuous values.  
By taking \cref{tab:g_functions} and \cref{fig:gplot} into account, we consider two types of $g $ functions defined as follows:

\begin{definition} \label{def:type_g}
Let $g:\mathbb{R}^n \rightarrow \mathbb{R} \cup \{ +\infty, -\infty \} $ be a separable function
with $g(\h u) = \sum_{i = 1}^n g_i(u_i),$ for $\h u =[u_1,\cdots,u_n]\in \mathbb{R}^n $.  We define 
\begin{itemize}
    \item \textbf{Type B:} All $g_i $ functions have bounded derivatives on $[0,1],$ and are strictly decreasing on $[0,1] $ with the same value at $0 $ and $ 1$, i.e. there exist two constants $a, b \in \mathbb{R} $ such that  $g_i(0) = a, g_i(1) = b,  \forall i\in[n].$
    \item \textbf{Type C:} All $g_i $ functions are convex on $[0, \infty)$ with the same value at $0 $ and with the same minimum at a point other than zero, i.e. there exist two constants $a>b \in \mathbb{R}$ such that  $g_i(0) = a $ and $\min_{t \geq 0} g_i(t) = b, \forall i\in[n]$ .
\end{itemize}
\end{definition}

An important characteristic both types of functions share is that they are decreasing near zero. 
Type B functions are defined on a bounded interval, and we enforce a box constraint on $\h u$ for strictly decreasing $g$.  
Type C refers to convex functions defined on an unbounded interval due to $\lim_{t \rightarrow \infty} g(t) > g(0).$  Note that \cref{thm:l0/l1,thm:reweighted}  hold when $g$ is a Type B or Type C function.
We establish the equivalent between \eqref{eq:con} and \eqref{model_l0} for Type B and Type C functions in \cref{theorem:exact01} and \cref{theorem:exact02}, respectively.  



\begin{theorem} \label{theorem:exact01}
Suppose $g $ is a Type B function in \cref{def:type_g} on $U = [0,1]^n $.
There exists $\alpha > 0$ such that the model \eqref{eq:con} is equivalent to \eqref{model_l0}, i.e., if $(\h x^*,\h u^*) $ is a minimizer of \eqref{eq:con}, then $\h x^* $ is a minimizer of \eqref{model_l0}; conversely, if $\h x^* $ is a minimizer of \eqref{model_l0} then by taking $u^*_i = 1 $ for $x^*_i = 0 $ and $u^*_i = 0 $ otherwise, $(\h x^*,\h u^*) $ is a minimizer of \eqref{eq:con}.
\end{theorem}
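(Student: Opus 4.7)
The plan is to compute the exact minimum value of \eqref{eq:con} for sufficiently small $\alpha$ and identify its minimizers via a pointwise analysis of the coordinate envelope $\phi^*_i(y) := \min_{u \in [0,1]} [\,u|y| + \alpha g_i(u)\,]$, together with a geometric ``support-gap'' argument.

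First I will collect three facts about $\phi^*_i$ using only the Type B hypotheses. Let $M$ denote a uniform upper bound on $|g_i'|$ on $[0,1]$, valid across all $i$. (a) If $|y| > \alpha M$, then $|y| + \alpha g_i'(u) > 0$ on $[0,1]$, so $\phi^*_i(y) = \alpha a$ with unique minimizer at $u=0$. (b) If $y = 0$, strict monotonicity of $g_i$ makes $u = 1$ the unique minimizer and $\phi^*_i(0) = \alpha b$. (c) If $y \neq 0$, then $\phi^*_i(y) > \alpha b$ strictly: equality at some $u^*$ would force $g_i(u^*) = b$ (hence $u^* = 1$) together with $u^* |y| = 0$, which is impossible. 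Note that convexity of $g_i$ is not invoked anywhere.

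The geometric ingredient is the constant $\sigma := \min\{\mathrm{dist}(\h b,\mathrm{range}(A_S)) : S \subseteq [n],\,|S| < s^*\}$, where $s^* := \min\{\|\h x\|_0 : A\h x = \h b\}$ and $A_S$ denotes the submatrix of $A$ indexed by $S$. Since $\h b \notin \mathrm{range}(A_S)$ whenever $|S| < s^*$ (otherwise $s^*$ would be smaller), and only finitely many such $S$ exist, $\sigma > 0$. I will choose $\alpha > 0$ so small that $\|A\|_{\mathrm{op}}\sqrt{n}\,\alpha M < \sigma$. I then claim that, for every feasible $\h x$, the set $S(\h x) := \{i : |x_i| > \alpha M\}$ has cardinality at least $s^*$: if not, the truncation $\tilde{\h x} := \h x|_{S(\h x)}$ has support of size less than $s^*$, yet $\|A\tilde{\h x} - \h b\|_2 = \|A\h x|_{[n]\setminus S(\h x)}\|_2 \leq \|A\|_{\mathrm{op}}\sqrt{n}\,\alpha M < \sigma$, contradicting the definition of $\sigma$.

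Combining these pieces, for every feasible $\h x$,
\begin{equation*}
F^U_{g,\alpha}(\h x) = \sum_i \phi^*_i(x_i) \geq \alpha a\,|S(\h x)| + \alpha b\,(n - |S(\h x)|) \geq \alpha\bigl[(a-b)s^* + nb\bigr],
\end{equation*}
by (a) on $S(\h x)$, the trivial bound $\phi^*_i \geq \alpha b$ off it, $|S(\h x)| \geq s^*$, and $a > b$. For any $\ell_0$-minimizer $\h x^*$, the chain $s^* \leq |S(\h x^*)| \leq \|\h x^*\|_0 = s^*$ collapses both inequalities to equalities, and by (a)-(b) the prescribed $\h u^*$ is precisely the inner minimizer; this proves the reverse implication. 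Conversely, if $(\hat{\h x},\hat{\h u})$ minimizes \eqref{eq:con}, tightness of the chain forces $|S(\hat{\h x})| = s^*$, and the strict inequality (c) forces $\hat{x}_i = 0$ for every $i \notin S(\hat{\h x})$, giving $\|\hat{\h x}\|_0 = s^*$, so $\hat{\h x}$ solves \eqref{model_l0}. The main obstacle is the support-gap step: since Type B does not require convexity, $\phi^*_i$ need not be concave and may dip below $\alpha(a-b)$ on $\{0 < |y| \leq \alpha M\}$, so the proof cannot rely on an envelope-style lower bound and must instead exploit the rigid identity $\phi^*_i(y) = \alpha a$ for $|y| > \alpha M$.
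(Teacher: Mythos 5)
Your proof is correct, and its skeleton matches the paper's: analyze the scalar inner minimization coordinate-by-coordinate (large entries force $u_i=0$ and contribute $\alpha g_i(0)$, zero entries give $\alpha g_i(1)$, and nonzero-but-small entries give \emph{strictly} more than $\alpha g_i(1)$), then combine with a ``every feasible point has at least $s^*$ large entries'' fact to pin down the exact minimum value and its minimizers. Where you genuinely diverge is in how that support-gap fact is obtained. The paper defines $\epsilon_0:=\min\{|\h x|_{(s)} \mid A\h x=\h b\}$ and asserts $\epsilon_0>0$ in one line; since this is an infimum of a continuous function over an unbounded affine set, its positivity (and attainment) actually requires an argument — e.g.\ passing to a fixed index set of the $s-1$ largest entries and using closedness of $\mathrm{range}(A_S)$. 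Your constant $\sigma=\min_{|S|<s^*}\mathrm{dist}(\h b,\mathrm{range}(A_S))$ is a minimum of finitely many positive numbers, so it is manifestly positive, and your truncation estimate $\|A\tilde{\h x}-\h b\|_2\le \|A\|_{\mathrm{op}}\sqrt{n}\,\alpha M<\sigma$ turns the support-gap claim into a clean contradiction; this patches the one loose point in the paper's argument. The trade-off is a different order of quantifiers (the paper fixes $\epsilon_0$ first and then shrinks $\alpha$; your threshold $\alpha M$ scales with $\alpha$, so the gap condition and the choice of $\alpha$ are coupled through $\sigma$), but both choices are consistent. One immaterial quibble: your closing remark that $\phi_i^*$ ``need not be concave'' without convexity of $g_i$ is inaccurate — $\phi_i^*$ is a pointwise minimum of functions affine in $|y|$, hence always concave in $|y|$ — but nothing in your argument relies on that remark.
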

\begin{proof}
Since $g $ is a Type B function, we represent
$g(\h u) = \sum_{i=1}^n g_i(u_i)$ with each $g_i$ strictly decreasing  and having bounded derivatives on $[0,1].$
Without loss of generality, we assume that $g_i(0) = 0  $ and $g_i(1) = -1, \forall i\in [n].$
Denote
$ s := \min_{\h x} \{ \|\h x\|_0   \mid  A \h x = \h b \}$  
and $      \epsilon_0 := \min_{\h x} \{ |\h x|_{(s)} \mid A \h x = \h b \}.
$ 
Here  $\epsilon_0 > 0;$ otherwise there exists a solution to \eqref{model_l0} with sparsity less than $s$. Since $g_i $ has bounded derivatives, there exists a scalar $ \alpha > 0 $ such that $-\frac{\epsilon_0}{\alpha} < g'_i(u), \forall u \in [0,1]$ and $i \in [n] $.


Let $(\h x^*, \h u^*) $ be a solution of \eqref{eq:con}.  If $|x_i^*| \geq \epsilon_0, $ we obtain  $\frac{\partial f}{\partial u_i}=|x_i^*|+\alpha g_i'(u_i^*) > 0.$ Therefore, $h(t):= t|x^*_i| + \alpha g_i(t)$ is an increasing function on $[0,1], $ thus attaining its minimum at $t = 0 $. As a result,  we have $ u_i^* = 0$; otherwise $(\h x^*, \h u^*) $ is not a minimizer of \eqref{eq:con}. In addition, we have $ u|x| + \alpha g_i(u) \geq \alpha g_i(u) \geq \alpha g_i(1), \forall u \in [0,1]$ and $\forall x$, 
as $g_i$ is strictly decreasing. By combining two cases of $|x_i^*| \geq \epsilon_0$ and $|x_i^*| < \epsilon_0,$ we estimate a lower bound of 
\begin{equation*}
\begin{split}
f(\h x^*,\h u^*) &:= \sum_{i =1}^{n}  \Big[u^*_i|x^*_i| + \alpha g_i(u^*_i)  \Big]
 =
\sum_{\{i \mid |x^*_i| \geq \epsilon_0 \}} \alpha g_i(0)  + \sum_{\{i \mid |x^*_i| < \epsilon_0 \}}    \Big[u^*_i|x^*_i| + \alpha g_i(u^*_i)  \Big]
\\ & \geq
\sum_{\{i \mid |x^*_i| \geq \epsilon_0 \}} \alpha g_i(0)  + \sum_{\{i \mid |x^*_i| < \epsilon_0 \}}     \alpha g_i(1)  
 =
-\alpha | \{i \mid |x^*_i| < \epsilon_0 \}| \geq -\alpha (n-s),
\end{split}
\end{equation*}
where we use the assumptions of $g_i(0) = 0$  and $g_i(1) = -1$  together with $| \{ i \mid |x_i| < \epsilon_0\}| \leq n-s  $ by the definitions of $s$ and $\epsilon_0.$
On the other hand, the lower bound $-\alpha (n-s) $ for $f(\h x^*, \h u^*) $ can be achieved by any solution $\h x $ of $ A \h x = \h b$ with sparsity $ s$,  by choosing $u_i = 0 $ for $ x_i \neq 0$ and $u_i = 1 $ otherwise. Therefore, we   have $f(\h x^*, \h u^*) = -\alpha (n-s)$.

Next we show that $\h x^* $ must have sparsity $ s$. If $ |x^*_i| \in (0,\epsilon_0) $ for some $i$, then we have $u_i|x_i| + \alpha g_i(u_i)  > \alpha g_i(1).$ Note that  the inequality is strict, forcing $ f(\h x^*, \h u^*)$ to be strictly greater than the lower bound $-\alpha (n-s) $, which is a contradiction. Therefore, if $|x^*_i| < \epsilon_0,$
we must have $ x_i = 0$. Based on the definition of $\epsilon_0,$ we have $\|\h x^* \|_0 = s $, which implies $\h x^* $ is a minimizer of \eqref{model_l0}. 





Conversely, if $\h x^* $ is a solution of \eqref{model_l0}, then $\h x^*$ satisfies $A\h x^*=\h b.$ With the choice of $ u_i^* = 0$ for $ |x_i^*| \neq 0$ and  $u_i^* = 1 $ otherwise, we get $(\h x^*, \h u^*)$ is a minimizer of \eqref{eq:con} such that the objective function attains the minimum value $-\alpha (n-s) $.
\end{proof}

\begin{theorem} \label{theorem:exact02}
Suppose $U = [0,\infty)^n $ and $g $ is a Type C function in \cref{def:type_g}. Then there exists a constant $\alpha > 0$ such that the model \eqref{eq:con} is equivalent to \eqref{model_l0}. 
\end{theorem}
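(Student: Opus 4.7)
My plan is to follow the strategy used in the proof of Theorem 2.11, adapted to the feasible set $U = [0,\infty)^n$ by exploiting the convexity of each $g_i$ and the fact that its minimum is attained at some interior point $t_i^*>0$. I first normalize so that $g_i(0)=a$ and $\min_{t\geq 0} g_i(t)=b$ for common constants $a>b$, and define $s := \min\{\|\h x\|_0 : A\h x=\h b\}$ together with $\epsilon_0 := \min\{|\h x|_{(s)} : A\h x=\h b\}>0$, exactly as in the Type B proof. The essential observation that I will use throughout is that every feasible $\h x$ has at least $s$ components with $|x_i|\geq \epsilon_0$.

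The key technical step is a pointwise analysis of the inner problem $\psi_i(y):=\min_{u\geq 0} uy+\alpha g_i(u)$. Convexity of $g_i$ with minimum at $t_i^*>0$ forces $g_i'(0^+) \leq (b-a)/t_i^* < 0$; write $c_i := -g_i'(0^+)$ and assume first that every $c_i$ is finite. Choosing $\alpha>0$ small enough that $\alpha c_i < \epsilon_0$ for all $i$, I can verify that (i) for $y\geq\epsilon_0$, the right derivative of $u\mapsto uy+\alpha g_i(u)$ at $u=0$ is strictly positive, so convexity places the minimum at $u=0$ and $\psi_i(y)=\alpha a$; (ii) for $y=0$, the minimum is $\alpha b$, attained at $u=t_i^*$; (iii) for $y\in(0,\epsilon_0)$, $\psi_i(y)>\alpha b$ strictly, since the $u=0$ value $\alpha a$ and every $u>0$ value $uy+\alpha g_i(u) > 0+\alpha b$ both exceed $\alpha b$.

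Partitioning the indices of any minimizer $(\h x^*,\h u^*)$ into $S_1=\{i:|x_i^*|\geq\epsilon_0\}$, $S_2=\{i:0<|x_i^*|<\epsilon_0\}$, and $S_3=\{i:x_i^*=0\}$, the bounds in (i)--(iii) together with $|S_1|\geq s$ give
\begin{equation*}
 f(\h x^*,\h u^*)\;\geq\;|S_1|\alpha a+(n-|S_1|)\alpha b\;=\;n\alpha b+\alpha(a-b)|S_1|\;\geq\;\alpha a s+\alpha b (n-s),
\end{equation*}
and the same value is attained by taking any $s$-sparse solution paired with $u_i=0$ on its support and $u_i=t_i^*$ elsewhere, giving a matching upper bound. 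A nonempty $S_2$ would make the bound in (iii) strict and hence the total strictly larger than $\alpha a s+\alpha b(n-s)$, contradicting optimality; thus $|S_2|=0$. Matching the lower bound with equality then forces $|S_1|=s$, so $\h x^*$ is $s$-sparse and minimizes \eqref{model_l0}. The converse direction is obtained by the same explicit choice of $\h u^*$ used to attain the upper bound.

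The main obstacle I anticipate is the case $c_i=-g_i'(0^+)=+\infty$, which occurs for the TL1 function. Then the inner minimizer $u_i^*$ is strictly positive for every $y>0$ and $\psi_i(y)<\alpha a$, so step (i) fails as an equality. To extend the argument I would quantify the gap $\alpha a-\psi_i(\epsilon_0)$ using the Legendre transform of $g_i$ and show that for sufficiently small $\alpha$ this gap is dominated by $\alpha(a-b)$; the same lower and upper bounds then match up to a negligible correction, and the rest of the proof carries through.
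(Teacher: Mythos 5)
Your proof is correct and takes essentially the same route as the paper's: the same quantities $s$ and $\epsilon_0$, the same choice of $\alpha$ small enough that $\epsilon_0+\alpha g_i'(0^+)>0$ forces $u_i=0$ on components with $|x_i|\geq\epsilon_0$, the same matching lower and upper bounds $\alpha a s+\alpha b(n-s)$ (the paper simply normalizes $a=0$, $b=-1$), and the same strictness argument eliminating components in $(0,\epsilon_0)$. Your closing concern about $g_i'(0^+)=-\infty$ is legitimate --- the paper's proof also implicitly assumes $g_i'(0)$ is finite when it selects $\alpha$ with $g_i'(0)>-\epsilon_0/\alpha$, so a Type C function such as the TL1 lift $g_i(u)=au-2\sqrt{au}$ is not actually covered by either argument --- but since your treatment of that case is only a sketch, your completed proof covers exactly the same class of functions as the paper's does.
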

\begin{proof}
We define $f(\h x, \h u) $, $ s$, and $\epsilon_0 $ in the same way as in the proof of  \cref{theorem:exact01}.
Since $g_i$ is convex, $g_i' $ is increasing and hence $g_i'(u) \geq g_i'(0), \forall u \in [0,\infty)$ and $i\in [n]$. Then there exists a scalar $\alpha>0$ such that $g_i'(0) > -\frac{\epsilon_0}{\alpha}, \forall i\in [n]$.
Without loss of generality, we assume 
 $a=0,  b = \arg\min_{t \geq 0} g_i(t) =-1$. In this setting, we get
$$ u|x| + \alpha g_i(u) \geq \alpha g_i(u) \geq \alpha g_i(d_i) = -\alpha, \ \forall u \in [0,\infty), \ \forall x, \ i\in [n],$$
which implies that   $f(\h x^*, \h u^*) \geq  -\alpha(n-s).$
The rest of the proof  follows exactly from the one of  \cref{theorem:exact01}, thus omitted. 
\end{proof}

\begin{remark}
In \cref{theorem:exact01,theorem:exact02}, we consider a linear constraint set  $\Omega = \{ \h x \in \mathbb{R}^n \mid A \h x = \h b \}.$ All the analysis can be extended to a feasible set of inequality constraints, e.g., $\Omega_\epsilon = \{ \h x \in \mathbb{R}^n \mid  \|A \h x - \h b\| \leq \epsilon \} $ for $\epsilon \geq 0 $. In this case, we can show that our model \cref{eq:con1} with a given $\Omega_\epsilon$ is equivalent to the following $\ell_0$ formulation, 
\begin{equation}
\arg \min_{\h x \in \mathbb{R}^n}  \|\h x\|_0 + \delta_{\Omega_\epsilon} (\h x).
\end{equation}
\end{remark}



\section{Numerical Algorithms and Convergence Analysis}\label{sec_numAlg}

We describe  in Section~\ref{sec_admm} the alternating direction method of multipliers (ADMM) \cite{boyd2011distributed,gabay1976dual} for solving the general model,  with convergence analysis presented in Section~\ref{sec_convergence}. In Section~\ref{sec_CompG}, we discuss closed-form solutions of the $\h u$-subproblem for two specific choices of  $g$.


\subsection{The Proposed Algorithm} \label{sec_admm}

We define a function $\psi(\cdot)$ to unify the constrained  and the unconstrained formulations, i.e., $\psi(\h x) = \delta_\Omega(\h x)$ for \eqref{eq:con} and $\psi(\h x) = \frac \gamma 2 \|A\h x-\h b\|_2^2$ for \eqref{eq:uncon}. 
We introduce a new variable $\h y$ in order to apply ADMM to minimize 
\begin{equation}\label{eq:ADMMobj}
\arg \min_{\h u, \h x,\h y} \{ \left\langle \h u, |\h x | \right\rangle + \alpha g(\h u)  + \delta_{U}(\h u) + \psi(\h y)\ | \  \ \h y = \h x \}.
\end{equation}
The corresponding augmented Lagrangian becomes 
\begin{equation}\label{eq_L}
L_{\rho}(\h u, \h x, \h y; \h v) = \left\langle \h u, |\h x | \right\rangle + \alpha g(\h u) + \delta_U(\h u) + \psi(\h y)+ \left\langle \h v, \h x - \h y \right\rangle +  \frac{\rho}{2} \| \h x - \h y  \|_2^2,
\end{equation}
where $\h v$ is the Lagrangian dual variable and $\rho$ is a positive parameter. The ADMM scheme involves 
 the following   iterations,
\begin{equation}\label{alg_admm}
\left\lbrace
\begin{split}
\h u^{k+1} & = \arg\min_{\h u} L_{\rho}(\h u, \h x^{k}, \h y^{k}; \h v^k)
\\
\h x^{k+1} & = \arg\min_{\h x} L_{\rho}(\h u^{k+1}, \h x, \h y^{k}; \h v^k)
\\
\h y^{k+1} & = \arg\min_{\h y} L_{\rho}(\h u^{k+1}, \h x^{k+1}, \h y; \h v^k)
\\
\h v^{k+1} & = \h v^k + \rho(\h x^{k+1} - \h y^{k+1}).
\end{split}
\right.
\end{equation}
The original problem \eqref{eq:con} jointly minimizes  $\h u $ and $\h x ,$ which are
updated separately in \eqref{alg_admm}. 
In particular, the $\h u$-subproblem can be expressed as
\begin{equation}
\h u^{k+1} = \arg\min_{\h u \in U}   \left\langle \h u, |\h x^k | \right\rangle + \alpha g(\h u).
\label{eq:update_u}
\end{equation}
In general, one may not  find a closed-form solution to \eqref{eq:update_u}. For a separable function $g$ and a rectangular set $ U$, the $\h u $-update simplifies into $n $ one-dimensional minimization problems; refer to Section~\ref{sec_CompG} for the $\h u $-update with two specific $g$ functions that are used in experiments. For the $\h x$-update, it has a  closed-form solution given by  
\begin{equation*}
\h x^{k+1}  = \mathbf{shrink}\left( \h y^{k} - \frac{1}{\rho} \h v^k, \frac{1}{\rho} \h u^{k+1} \right),
\end{equation*}
where 
$
    \mathbf{shrink}(\h v, \h u)  = \mathrm{sign}( \h v)  \odot \max(|\h v|-\h u,\h 0).
$

For the constrained formulation, i.e., $\psi(\h y) = \delta_{\Omega}(\h y) $, the $\h y$-subproblem becomes
\[
\arg\min_{\h y} \left\lbrace \frac{1}{2} \| \h y - (\h x^{k+1} + \frac{1}{\rho} \h v^k ) \|_2^2 \mid A \h y = \h b  \right\rbrace.
\]
It is equivalent to a projection into the affine solution of $A \h x = \h b $, which has a closed-form solution,
\begin{equation*}
\h y^{k+1} = 
\left( I_n - A^T(AA^T)^{-1}A \right)\left(\h x^{k+1} +\frac{1}{\rho} \h v^k + A^{T} (AA^{T})^{-1} \h b\right).
\end{equation*}
For the unconstrained formulation, $\psi(\h y) =  \frac{\gamma}{2} \| A \h y - \h b \|_2^2 $, the $\h y$-subproblem also has a closed-form solution by solving a linear system,
\begin{equation}\label{eq:y-update}
\begin{split}
\h y^{k+1} & = \arg\min_{\h y} \frac{\gamma}{2} \| A \h y - \h b \|_2^2 +  \frac{\rho}{2} \| \h y - (\h x^{k+1} + \frac{1}{\rho} \h v^k ) \|_2^2 
\\ & =
\left( \rho I_n + \gamma A^T A \right)^{-1} \left( \rho \h x^{k+1} + \h v^k + \gamma A^T \h b \right).
\end{split}
\end{equation}

\begin{algorithm}[t]
	\SetAlgoLined
Input: $A$ and $\h b$. Set parameters: $\rho, \epsilon$ and $Max$.

	Initialize: $\h x^0 $, $\h y^0 $,  $\alpha_0 $,  \\
	\For{ $k  = 1$ to $ Max$}{
$\h u^{k+1} =  \arg\min_{\h u \in U}   \left\langle \h u, |\h x^k | \right\rangle + \alpha^k g(\h u)$
	\\
	$\h x^{k+1}  = \mathbf{shrink}\left( \h y^{k} - \frac{1}{\rho} \h v^k, \frac{1}{\rho} \h u^{k+1} \right)$
	\\
	$\h y^{k+1}  = \arg\min_{\h y} \psi(\h y) +  \frac{\rho}{2} \| \h y - (\h x^{k+1} + \frac{1}{\rho} \h v^k ) \|_2^2 $
	\\
	$\h v^{k+1}  = \h v^k + \rho (\h x^{k+1} - \h y^{k+1}) $
	\\
	$\alpha^{k+1} = (1 - \eta) \alpha^k $
	}
	\caption{Adaptive ADMM algorithm for solving the general model \eqref{eq:ADMMobj}}
	\label{Alg:01}
\end{algorithm}

The  ADMM iterations  \eqref{alg_admm} minimize the general model \eqref{eq:con1} for a fixed value of $\alpha $. Following \cref{thm:l0/l1} and  \cref{cor:l0/l1}, we consider a type of homotopy optimization (also known as continuation approach) \cite{dunlavy2005homotopy, watson2001theory} to update $\alpha$ in order to better approximate the $\ell_0$ norm.  In particular,
we gradually decrease $\alpha $  to $0,$ while optimizing \eqref{eq:ADMMobj} for each $\alpha$. 
 \cref{Alg:01} summarizes the overall iteration for the proposed approach. 


\begin{remark}
We remark that letting $\alpha$ approach to $0$ is not exactly a homotopy algorithm, as the transformation between $\| \h x \|_0  $ and $\frac 1 {\alpha_0} F^U_{g, \alpha_0} (\h x)  + n $ is not continuous.   
We observe empirically the rate that $\alpha$ decays to zero plays a critical role in the performance of sparse recovery. 
On the other hand, we should minimize 
$\frac 1 \alpha F^U_{g, \alpha_0} (\h x) +\frac \gamma 2\|A\h x-\h b\|_2^2$ to approximate the $\ell_0$ norm.  This formulation requires the inversion of $(\rho I_n+\gamma\alpha A^TA)$ for different $\alpha$ in the $\h y$-update, which is computationally expensive, as opposed to pre-computing the inverse of $(\rho I_n+\gamma A^TA)$ with a fixed value $\gamma.$
\end{remark}




\subsection{Convergence Analysis for ADMM}\label{sec_convergence}

We prove the convergence for the ADMM method \eqref{alg_admm} for the unconstrained case, i.e.,  $\psi(\h y) = \frac{\gamma}{2} \|A \h y - \h b \|_2^2 $. In addition, we assume that $ g$ is a Type B or Type C function that is continuously differentiable on $ U$.
Note that we apply an adaptive $\alpha$ update in Algorithm~\ref{Alg:01}, but the convergence analysis is restricted to a fixed $\alpha$ value.

In this case of Type C functions $ U = [0,\infty)^n$, and it follows from optimality conditions for each sub-problem in \eqref{alg_admm} that there exists 
$ \h s^{k+1} \geq 0$ and $\h p^{k+1}  \in \partial |\h x^{k+1} | $ such that
\begin{equation}\label{eq:opt}
\left\lbrace
\begin{split}
\h 0 & = |\h x^k| + \alpha \nabla g(\h u^{k+1}) - \h s^{k+1} 
\\
\h 0 &= \h u^{k+1} \odot \h p^{k+1} + \h v^k + \rho(\h x^{k+1} - \h y^{k})
\\
\h 0 & = \nabla \psi(\h y^{k+1}) - \h v^k - \rho(\h x^{k+1}-\h y^{k+1}),
\end{split}
\right.
\end{equation}
with $\h s^{k+1} \odot \h u^{k+1} = \h 0$. 

Note that for the Type B functions, $U = [0,1]^n $ and the optimality condition for $u$ sub-problem is that there exists $\h s^{k+1}, \h t^{k+1} \geq 0 $ such that $ \h s^{k+1} \odot \h u^{k+1} = 0$ and $\h t^{k+1} \odot (\h u^{k+1} - \h 1) = \h 0 $ and $ \h 0  = |\h x^k| + \alpha \nabla g(\h u^{k+1}) - \h s^{k+1} + \h t^{k+1}$.

\begin{lemma} \label{Lemma}
Suppose the sequence $\{\h u^k, \h x^k, \h y^k, \h v^k\}$ is generated by \eqref{alg_admm}, then the following inequality holds, with $C_{A} := \|A^T A\|_2$,
\[
\|\h x^{k+1} -\h y^{k+1} \|_2 \leq \frac{\gamma C_{A}}{\rho} \|\h y^{k+1} - \h y^{k} \|_2.
\]
\end{lemma}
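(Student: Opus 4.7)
The plan is to exploit the optimality condition for the $\h y$-subproblem together with the dual update in \eqref{alg_admm}. For the unconstrained case, $\psi(\h y)=\frac{\gamma}{2}\|A\h y-\h b\|_2^2$ is differentiable with $\nabla\psi(\h y)=\gamma A^T(A\h y-\h b)$, so the third equation in \eqref{eq:opt} reads
\begin{equation*}
\gamma A^T(A\h y^{k+1}-\h b) = \h v^k + \rho(\h x^{k+1}-\h y^{k+1}).
\end{equation*}
Combining this with the dual update $\h v^{k+1}=\h v^k+\rho(\h x^{k+1}-\h y^{k+1})$ yields the key identity
\begin{equation*}
\h v^{k+1} = \gamma A^T(A\h y^{k+1}-\h b),
\end{equation*}
valid for every $k\geq 0$ (and, assuming $\h v^0$ is initialized consistently or after a one-step shift, for $k-1$ as well).

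Next I would shift the index and subtract to eliminate the measurement vector $\h b$, obtaining
\begin{equation*}
\h v^{k+1}-\h v^{k} = \gamma A^TA(\h y^{k+1}-\h y^{k}).
\end{equation*}
Using the dual update once more on the left-hand side gives
\begin{equation*}
\rho(\h x^{k+1}-\h y^{k+1}) = \gamma A^TA(\h y^{k+1}-\h y^{k}).
\end{equation*}
Taking Euclidean norms of both sides and applying the operator-norm bound $\|A^TA \h w\|_2 \leq \|A^TA\|_2\|\h w\|_2 = C_A \|\h w\|_2$ immediately produces the claimed inequality.

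There is essentially no obstacle here beyond a minor indexing subtlety: the identity $\h v^{k}=\gamma A^T(A\h y^{k}-\h b)$ is derived from the $\h y$-update at iteration $k-1$, so it holds for $k\geq 1$; the base case $k=0$ either requires initializing $\h v^0$ consistently or the lemma is read for $k\geq 1$. Apart from this bookkeeping, the proof reduces to reading off one equation from the optimality system, substituting the dual update, and applying submultiplicativity of the spectral norm.
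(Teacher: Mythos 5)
Your proof is correct and follows essentially the same route as the paper: both establish the identity $\h v^{k+1}=\nabla\psi(\h y^{k+1})$ from the $\h y$-optimality condition combined with the dual update, and then bound $\rho\|\h x^{k+1}-\h y^{k+1}\|_2=\|\h v^{k+1}-\h v^k\|_2$ by the Lipschitz constant $\gamma C_A$ of $\nabla\psi$. The indexing caveat you raise (validity of $\h v^k=\nabla\psi(\h y^k)$ for $k\geq 1$) is a fair observation that the paper's proof glosses over, but it does not change the argument.
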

\begin{proof}
It is straightforward that $\psi $ has Lipschitz continuous gradient with parameter $\gamma C_A.$ 
From the $\h v$-update in \eqref{alg_admm} and the last optimality condition in \eqref{eq:opt}, we have 
$
\h v^{k+1} = \nabla \psi (\h y^{k+1}),
$
and hence
$
\| \nabla \psi (\h y^{k+1}) - \nabla \psi (\h y^{k}) \|_2 \leq \gamma C_{A} \|\h y^{k+1} - \h y^{k} \|_2,
$
which implies that
\begin{equation*}
 \|\h x^{k+1} -\h y^{k+1}\|_2 = \frac{1}{\rho}  \|\h v^{k+1} -\h v^{k}\|_2 =\frac{1}{\rho} \| \nabla \psi (\h y^{k+1}) - \nabla \psi (\h y^{k}) \|_2 \leq \frac{\gamma C_A}{\rho} \|\h y^{k+1} - \h y^{k} \|_2.
\end{equation*}
\end{proof}


\begin{theorem}[Sufficient decrease condition] \label{Thm_dec2}
Suppose the sequence $\{\h u^k, \h x^k, \h y^k, \h v^k\}$ is generated by \eqref{alg_admm}. Let $\rho \geq \sqrt{2}\gamma C_A $,
then there exists a  constant $C>0$ such that the augmented Lagrangian $L_{\rho}(\h u^k, \h x^k, \h y^k; \h v^k)$ defined in \eqref{eq_L} satisfies
\begin{eqnarray}
L_{\rho}(\h u^{k+1}, \h x^{k+1}, \h y^{k+1}; \h v^{k+1}) & \leq & L_{\rho}(\h u^k, \h x^k, \h y^k; \h v^k) - \frac{\gamma}{2} \|A(\h y^{k+1} - \h y^{k}) \|_2^2 - C \|\h y^{k+1} - \h y^{k} \|_2^2
\notag\\ &&
  - \frac{\rho}{2} \| \h x^{k+1} - \h x^k \|_2^2.\label{ineq:suff-dec2}
\end{eqnarray}
\eqref{ineq:suff-dec2} implies that $L_{\rho}$ decreases sufficiently.
\end{theorem}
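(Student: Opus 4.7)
The plan is to split the one-step change $L_{\rho}(\h u^{k+1},\h x^{k+1},\h y^{k+1};\h v^{k+1}) - L_{\rho}(\h u^k,\h x^k,\h y^k;\h v^k)$ into a telescoping sum of four differences, one for each block update, estimate each term individually, and then absorb the (positive) contribution of the dual update using \cref{Lemma}.

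First I would bound the contributions of the primal updates. For the $\h u$-step, since $\h u^{k+1}$ minimizes $L_{\rho}(\,\cdot\,,\h x^k,\h y^k;\h v^k)$, we simply have $L_{\rho}(\h u^{k+1},\h x^k,\h y^k;\h v^k) - L_{\rho}(\h u^k,\h x^k,\h y^k;\h v^k) \le 0$. For the $\h x$-step, the map $\h x \mapsto \langle \h u^{k+1}, |\h x|\rangle + \langle \h v^k,\h x\rangle + \tfrac{\rho}{2}\|\h x-\h y^k\|_2^2$ is $\rho$-strongly convex, so using the optimality of $\h x^{k+1}$ and the standard strong-convexity inequality yields
\begin{equation*}
L_{\rho}(\h u^{k+1},\h x^{k+1},\h y^k;\h v^k) - L_{\rho}(\h u^{k+1},\h x^k,\h y^k;\h v^k) \le -\frac{\rho}{2}\|\h x^{k+1}-\h x^k\|_2^2.
\end{equation*}
For the $\h y$-step, $\h y \mapsto \tfrac{\gamma}{2}\|A\h y-\h b\|_2^2 - \langle \h v^k,\h y\rangle + \tfrac{\rho}{2}\|\h x^{k+1}-\h y\|_2^2$ has Hessian $\gamma A^{T}A + \rho I$, hence is strongly convex with modulus governed by $\rho I + \gamma A^T A$; expanding strong convexity around the minimizer $\h y^{k+1}$ gives
\begin{equation*}
L_{\rho}(\h u^{k+1},\h x^{k+1},\h y^{k+1};\h v^k) - L_{\rho}(\h u^{k+1},\h x^{k+1},\h y^k;\h v^k) \le -\frac{\gamma}{2}\|A(\h y^{k+1}-\h y^k)\|_2^2 - \frac{\rho}{2}\|\h y^{k+1}-\h y^k\|_2^2.
\end{equation*}

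The main obstacle is the dual update, which \emph{increases} $L_{\rho}$ by exactly $\langle \h v^{k+1}-\h v^k,\h x^{k+1}-\h y^{k+1}\rangle = \rho\|\h x^{k+1}-\h y^{k+1}\|_2^2$. To control it, I would invoke \cref{Lemma}, which gives $\|\h x^{k+1}-\h y^{k+1}\|_2 \le \tfrac{\gamma C_A}{\rho}\|\h y^{k+1}-\h y^k\|_2$, and therefore
\begin{equation*}
L_{\rho}(\h u^{k+1},\h x^{k+1},\h y^{k+1};\h v^{k+1}) - L_{\rho}(\h u^{k+1},\h x^{k+1},\h y^{k+1};\h v^k) \le \frac{\gamma^2 C_A^2}{\rho}\|\h y^{k+1}-\h y^k\|_2^2.
\end{equation*}

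Adding the four estimates, the $\|\h y^{k+1}-\h y^k\|_2^2$ terms combine into a coefficient $-\bigl(\tfrac{\rho}{2}-\tfrac{\gamma^2 C_A^2}{\rho}\bigr)$. Setting $C := \tfrac{\rho}{2}-\tfrac{\gamma^2 C_A^2}{\rho}$, the hypothesis $\rho \ge \sqrt{2}\,\gamma C_A$ gives $C \ge 0$ (and $C>0$ as soon as the inequality is strict), producing exactly the claimed inequality \eqref{ineq:suff-dec2}. The only subtlety is confirming that the $\h x$-subproblem really is $\rho$-strongly convex despite the nonsmooth term $\langle \h u^{k+1}, |\h x|\rangle$: this follows because $\h x \mapsto \langle \h u^{k+1},|\h x|\rangle$ is convex (as $\h u^{k+1}\in U \subset \mathbb{R}^n_+$), so the quadratic in $\h x$ supplies the full modulus $\rho$, and the analogous remark settles the $\h y$-step.
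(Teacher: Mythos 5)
Your proposal is correct and follows essentially the same route as the paper: the same four-term telescoping decomposition over the block updates, the same bound on the dual ascent term via \cref{Lemma}, and the same constant $C=\tfrac{\rho}{2}-\tfrac{\gamma^2C_A^2}{\rho}$ (you derive the $\h x$- and $\h y$-step decreases from strong convexity around the minimizer, while the paper expands the differences algebraically and invokes the subgradient inequality, but for these subproblems the two arguments are identical in content). Your parenthetical remark that the hypothesis $\rho\geq\sqrt{2}\gamma C_A$ only yields $C\geq 0$ at equality is a fair observation that applies equally to the paper's own statement.
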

\begin{proof}
The $\h v$-update in \eqref{alg_admm} and  \cref{Lemma} lead to
\begin{equation*}
\begin{split}
  &L_{\rho}(\h u^{k+1}, \h x^{k+1}, \h y^{k+1}; \h v^{k+1}) - L_{\rho}(\h u^{k+1}, \h x^{k+1}, \h y^{k+1}; \h v^{k}) \\
  & =  \left\langle \h v^{k+1} - \h v^{k}, \h x^{k+1} - \h y^{k+1} \right\rangle   = \rho \|\h x^{k+1}-\h y^{k+1}\|^2_2 \leq \frac{\gamma^2 C^2_{A}}{\rho} \|\h y^{k+1} - \h y^{k} \|_2^2.
\end{split}
\end{equation*}
 Using the $\h y$-update in \eqref{alg_admm} and the last optimality condition in \eqref{eq:opt}, we have
\begin{equation*}
\begin{split}
& L_{\rho}(\h u^{k+1}, \h x^{k+1}, \h y^{k+1}; \h v^{k}) - L_{\rho}(\h u^{k+1}, \h x^{k+1}, \h y^{k}; \h v^{k}) 
\\
& = \frac{\gamma}{2} (\| A\h y^{k+1} \|_2^2 - \| A\h y^{k} \|_2^2)  + \frac{\rho}{2} (\| \h y^{k+1} \|_2^2 - \| \h y^{k} \|_2^2) + \left\langle  \gamma A^T \h b + \rho \h x^{k+1} + \h v^{k}, \h y^{k} - \h y^{k+1} \right\rangle
\\
& = \frac{\gamma}{2} (\| A\h y^{k+1} \|_2^2 - \| A\h y^{k} \|_2^2)  + \frac{\rho}{2} (\| \h y^{k+1} \|_2^2 - \| \h y^{k} \|_2^2) + \left\langle  \gamma \rho \h y^{k+1} + \gamma A^T A \h y^{k+1}, \h y^{k} - \h y^{k+1} \right\rangle
\\
& = -\frac{\gamma}{2} \| A\h y^{k+1} - A\h y^{k} \|_2^2  - \frac{\rho}{2} \| \h y^{k+1} - \h y^{k} \|_2^2.
\end{split}
\end{equation*}
As for the $\h x$-update, we get
\begin{equation*}
\begin{split}
& L_{\rho}(\h u^{k+1}, \h x^{k+1}, \h y^{k}; \h v^{k}) - L_{\rho}(\h u^{k+1}, \h x^{k}, \h y^{k}; \h v^{k}) 
\\ & = 
\left\langle \h u^{k+1}, |\h x^{k+1}| - |\h x^k| \right\rangle + \left\langle \h v^k,\h x^{k+1} - \h x^k \right\rangle + \frac{\rho}{2} \|\h x^{k+1} - \h y^k \|_2^2 - \frac{\rho}{2} \|\h x^{k} - \h y^k \|_2^2
\\ & =
\left\langle \h u^{k+1}, |\h x^{k+1}| - |\h x^k| \right\rangle - \left\langle \h u^{k+1}\odot\h p^{k+1} + \rho \h x^{k+1},\h x^{k+1} - \h x^k \right\rangle + \frac{\rho}{2} \|\h x^{k+1}  \|_2^2 - \frac{\rho}{2} \|\h x^{k}  \|_2^2
 \\ & =  
\left[ \left\langle \h u^{k+1}, |\h x^{k+1}| - |\h x^k| \right\rangle - \left\langle \h u^{k+1}\odot\h p^{k+1} ,\h x^{k+1} - \h x^k \right\rangle \right] - \frac{\rho}{2} \| \h x^{k+1} - \h x^{k}  \|_2^2 
\;\; \leq 
   - \frac{\rho}{2} \| \h x^{k+1} - \h x^{k}  \|_2^2,
 \end{split}
\end{equation*}
where the last inequality comes from the definition of subgradient. 
For the $\h u$-update, we use the fact that $\h u^{k+1} $ is a minimizer, hence 
\begin{equation*}
\begin{split}
&  L_{\rho}(\h u^{k+1}, \h x^{k}, \h y^{k}; \h v^{k}) - L_{\rho}(\h u^{k}, \h x^{k}, \h y^{k}; \h v^{k})  \leq 0
 \end{split}
\end{equation*}
Adding all these inequalities yields the desired inequality \eqref{ineq:suff-dec2} with  $C =  ( \frac{\rho}{2} -\frac{\gamma^2 C_A^2}{\rho} )$. If $\rho \geq \sqrt{2}\gamma C_A $, then $C > 0,$ leading to the sufficient decreasing of $L_\rho.$ 
\end{proof}

\begin{theorem}[Residue convergent] 
\label{theorem:res_conv2}
Suppose the sequence $\{\h u^k, \h x^k, \h y^k, \h v^k\}$ is generated by \eqref{alg_admm}. If $g$ is a Type B or Type C function with, $\rho \geq \sqrt{2}\gamma C_A$, the following hold as $ k \rightarrow \infty$
\begin{eqnarray*}
\h x^{k+1}- \h x^k \rightarrow \h 0, \;\;\; 
\h y^{k+1}- \h y^k \rightarrow  \h 0, \;\; \text{ and } \;\; \h r^{k}:= \h x^k-\h y^k \rightarrow \h 0.
\end{eqnarray*}
\end{theorem}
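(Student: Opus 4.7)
The plan is to combine the sufficient decrease estimate from \cref{Thm_dec2} with a lower bound on the augmented Lagrangian to obtain summable residues, and then to invoke \cref{Lemma} for the final $\h r^k$ claim. I will carry this out in three stages.

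First, I would show that $L_\rho(\h u^k,\h x^k,\h y^k;\h v^k)$ is uniformly bounded below along the iterates. Using the identity $\h v^k = \nabla\psi(\h y^k)$ already observed in the proof of \cref{Lemma}, together with the descent lemma for $\psi(\h y)=\frac{\gamma}{2}\|A\h y-\h b\|_2^2$ (whose gradient is $\gamma C_A$-Lipschitz), I get
\begin{equation*}
\psi(\h y^k) + \langle \h v^k, \h x^k-\h y^k\rangle \;\geq\; \psi(\h x^k) - \tfrac{\gamma C_A}{2}\|\h x^k-\h y^k\|_2^2.
\end{equation*}
Plugging this into \eqref{eq_L} yields
\begin{equation*}
L_\rho(\h u^k,\h x^k,\h y^k;\h v^k) \;\geq\; \langle \h u^k,|\h x^k|\rangle + \alpha g(\h u^k) + \delta_U(\h u^k) + \psi(\h x^k) + \tfrac{\rho-\gamma C_A}{2}\|\h x^k-\h y^k\|_2^2.
\end{equation*}
Since $\h u^k\in U\subseteq[0,\infty)^n$, we have $\langle\h u^k,|\h x^k|\rangle\geq 0$, $\delta_U(\h u^k)=0$, $\psi(\h x^k)\geq 0$, and the last term is nonnegative because $\rho\geq\sqrt{2}\gamma C_A>\gamma C_A$. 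For Type~B, $g(\h u^k)=\sum g_i(u_i^k)\geq nb$ on $[0,1]^n$; for Type~C, convexity and the existence of a minimum value $b$ on $[0,\infty)$ again give $g(\h u^k)\geq nb$. Therefore $L_\rho(\h u^k,\h x^k,\h y^k;\h v^k)\geq \alpha nb$ for all $k$.

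Second, combining this lower bound with the monotone inequality \eqref{ineq:suff-dec2} from \cref{Thm_dec2}, the sequence $\{L_\rho(\h u^k,\h x^k,\h y^k;\h v^k)\}$ is decreasing and bounded below, hence convergent. Telescoping \eqref{ineq:suff-dec2} then yields
\begin{equation*}
\sum_{k=0}^{\infty}\Bigl[\tfrac{\gamma}{2}\|A(\h y^{k+1}-\h y^k)\|_2^2 + C\|\h y^{k+1}-\h y^k\|_2^2 + \tfrac{\rho}{2}\|\h x^{k+1}-\h x^k\|_2^2\Bigr] < \infty,
\end{equation*}
with $C=\frac{\rho}{2}-\frac{\gamma^2 C_A^2}{\rho}>0$ under the step-size assumption. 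Summability of each nonnegative term forces $\|\h x^{k+1}-\h x^k\|_2\to 0$ and $\|\h y^{k+1}-\h y^k\|_2\to 0$.

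Finally, applying \cref{Lemma} gives
\begin{equation*}
\|\h r^{k+1}\|_2 = \|\h x^{k+1}-\h y^{k+1}\|_2 \;\leq\; \tfrac{\gamma C_A}{\rho}\|\h y^{k+1}-\h y^k\|_2 \;\longrightarrow\; 0,
\end{equation*}
completing all three claims. The main technical obstacle is the lower bound step: we must ensure that the coupling term $\langle\h v^k,\h x^k-\h y^k\rangle$ does not drive $L_\rho$ to $-\infty$, which is why the descent-lemma trick tied to the identity $\h v^k=\nabla\psi(\h y^k)$ is essential, and why we need both the coercivity/boundedness-from-below properties encoded in the Type~B and Type~C definitions and the condition $\rho\geq\sqrt{2}\gamma C_A$ (which is strictly stronger than what is needed for the decrease but conveniently dominates the $\gamma C_A$ threshold here as well).
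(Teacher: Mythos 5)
Your proposal is correct and follows essentially the same route as the paper: telescope the sufficient-decrease inequality of \cref{Thm_dec2} against a lower bound on $L_\rho$ to get summability of the successive differences, then invoke \cref{Lemma} for $\h r^k\rightarrow \h 0$. The only difference is that you explicitly justify the lower bound $L_\rho\geq \alpha n b$ via the identity $\h v^k=\nabla\psi(\h y^k)$ and the quadratic expansion of $\psi$, whereas the paper simply asserts $L_\rho(\h x^{N+1},\dots)\geq \alpha m_g$ at this point and only supplies that argument later as inequality \eqref{eq:L_lower2} in the proof of \cref{thm_stat2} — so your version is, if anything, more self-contained.
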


\begin{proof}
Since $g $ is a Type B or Type C, then it is bounded below, and hence we denote $ m_g:=\min_{\h u\in U} g(\h u).$ By telescoping summation of \eqref{ineq:suff-dec2} from $k=1$ to $N,$ we obtain  
\begin{equation*}
\begin{split}
& \sum_{k = 0}^{N} \left( \frac{\gamma}{2} \|A(\h y^{k+1} - \h y^{k}) \|_2^2 + C \|\h y^{k+1} - \h y^{k} \|_2^2 + \frac{\rho}{2} \| \h x^{k+1} - \h x^k \|_2^2 \right)
 \\ & \leq 
   \sum_{k = 0}^{N}  L_{\rho}(\h x^k, \h y^k, \h u^k, \h v^k) - L_{\rho}(\h x^{k+1}, \h y^{k+1}, \h u^{k+1}, \h v^{k+1})
    \\ & = 
  L_{\rho}(\h x^0, \h y^0, \h u^0, \h v^0) - L_{\rho}(\h x^{N+1}, \h y^{N+1}, \h u^{N+1}, \h v^{N+1})
   \;  \leq  \;
  L_{\rho}(\h x^0, \h y^0, \h u^0, \h v^0) - \alpha m_g \; < \; \infty,
\end{split}
\end{equation*}
which implies that $ \sum_{k = 0}^{\infty} \| \h x^{k+1}- \h x^k \|_2^2 \leq \infty,$ and $ \sum_{k = 0}^{\infty} \| \h y^{k+1}- \h y^k \|_2^2 \leq \infty.$
Therefore, we must have $ \| \h x^{k+1}- \h x^k \|_2^2 \rightarrow 0,$  and  $ \| \h y^{k+1}- \h y^k \|_2^2 \rightarrow 0,$ as $k \rightarrow \infty. $ It further follows from \cref{Lemma}  that $\| \h x^k-\h y^k \|_2 \rightarrow 0,$ which completes the proof.
\end{proof}

\begin{theorem} [Stationary points] \label{thm_stat2}
Suppose the sequence $\{\h u^k, \h x^k, \h y^k, \h v^k\}$ is generated by \eqref{alg_admm}. If $g\in\mathcal C^1(\mathbb R)$ is a Type B or Type C function with $\rho \geq \sqrt{2}\gamma C_A$, and $(\h u^k, \h x^k)$ is bounded, then every limit point   of $\{\h u^k, \h x^k, \h y^k, \h v^k\},$ denoted by $\{ \h u^*, \h x^*, \h y^*, \h v^* \} $  is a stationary point of $L_\rho(\h u, \h x, \h y; \h v) $ and also $\{\h x^*, \h u^*\}$ is a stationary point of \eqref{eq:uncon}.
\end{theorem}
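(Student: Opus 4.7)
The plan is to pass to the limit in the optimality system \eqref{eq:opt} along a convergent subsequence, using the residue estimates of \cref{theorem:res_conv2} together with the $\mathcal C^1$ regularity of $g$ and $\psi$.

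First I would verify boundedness of the whole tuple $\{\h u^k,\h x^k,\h y^k,\h v^k\}$. Boundedness of $\h u^k$ and $\h x^k$ is assumed. Since $\h y^k = \h x^k - (\h x^k - \h y^k)$ and $\h x^k - \h y^k \to \h 0$ by \cref{theorem:res_conv2}, the sequence $\h y^k$ is bounded as well. The identity $\h v^{k+1}=\nabla\psi(\h y^{k+1})=\gamma A^T(A\h y^{k+1}-\h b)$ obtained in the proof of \cref{Lemma} then shows $\h v^k$ is bounded. Consequently, by the Bolzano--Weierstrass theorem, any limit point $(\h u^*,\h x^*,\h y^*,\h v^*)$ is attained along a subsequence (still indexed by $k$ for brevity), and \cref{theorem:res_conv2} forces $\h x^*=\h y^*$.

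Next I would take limits in the three optimality relations \eqref{eq:opt}. Because $g\in\mathcal C^1$, $\nabla g(\h u^{k+1})\to\nabla g(\h u^*)$; the nonnegative multiplier $\h s^{k+1}=|\h x^k|+\alpha\nabla g(\h u^{k+1})$ therefore converges to some $\h s^*\geq \h 0$, and the complementarity $\h s^{k+1}\odot\h u^{k+1}=\h 0$ passes to $\h s^*\odot\h u^*=\h 0$. (For Type B the extra multiplier $\h t^{k+1}$ associated with the upper box constraint is handled analogously and yields $\h t^*\odot(\h u^*-\h 1)=\h 0$.) For the $\h x$-optimality, the residue bounds $\|\h x^{k+1}-\h y^k\|_2\leq \|\h x^{k+1}-\h x^k\|_2+\|\h x^k-\h y^k\|_2\to 0$ let me conclude $\h u^*\odot\h p^*+\h v^*=\h 0$, where $\h p^*$ is the limit of the bounded subsequence $\h p^{k+1}\in\partial|\h x^{k+1}|$. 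Finally, continuity of $\nabla\psi$ gives $\nabla\psi(\h y^*)-\h v^*=\h 0$ from the third relation.

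The main obstacle, and the point I would handle most carefully, is the closedness of the subdifferential $\partial|\cdot|$ at coordinates where $x_i^*=0$: there $p_i^{k+1}\in[-1,1]$ is bounded but not determined, so one must extract a further subsequence so that $\h p^{k+1}\to\h p^*$ and then verify $\h p^*\in\partial|\h x^*|$ using that $p_i^{k+1}=\operatorname{sign}(x_i^{k+1})$ whenever $x_i^{k+1}\neq 0$, together with the outer semicontinuity of $\partial|\cdot|$. Once this is in place, the limiting system reads
\begin{equation*}
\h 0\in |\h x^*|+\alpha\nabla g(\h u^*)+N_U(\h u^*),\quad \h 0\in \h u^*\odot\partial|\h x^*|+\h v^*,\quad \h v^*=\nabla\psi(\h y^*),\quad \h x^*=\h y^*,
\end{equation*}
which is exactly the KKT condition for $L_\rho$ at $(\h u^*,\h x^*,\h y^*,\h v^*)$. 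Eliminating $\h v^*$ and $\h y^*$ via $\h v^*=\nabla\psi(\h x^*)$ gives the stationarity of $(\h x^*,\h u^*)$ for \eqref{eq:uncon}, completing the proof.
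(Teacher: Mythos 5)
Your proposal is correct and follows essentially the same route as the paper: establish boundedness of the full tuple, extract a convergent subsequence with $\h x^*=\h y^*$, and pass to the limit in the optimality system \eqref{eq:opt} (including the box-constraint multipliers for Type B and the closedness of $\partial|\cdot|$ for $\h p^*$), using the residue estimates of \cref{theorem:res_conv2} and the $\mathcal C^1$ regularity of $g$. The only minor difference is that you deduce boundedness of $\h y^k$ directly from $\h x^k-\h y^k\to\h 0$, whereas the paper first derives a lower bound on $L_\rho$ involving $\|\h x^k-\h y^k\|_2^2$; both are valid and yours is slightly more direct.
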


\begin{proof}
Using  $\h v^{k} = \nabla \psi (\h y^{k})$ from \cref{Lemma}, i.e., $\h v^{k} = \gamma A^T(A\h y^{k}-\h b),$ we have 
\begin{equation*}
\begin{split}
  & \psi(\h y^{k}) + \left\langle \h v^{k}, \h x^{k} - \h y^{k} \right\rangle
 = 
    \frac{\gamma}{2} \|A \h y^{k} - \h b\|_2^2 + \left\langle \gamma A^T(A \h y^{k}-\h b), \h x^{k} - \h y^{k} \right\rangle
    \\ & =
    \frac{\gamma}{2} \|A \h x^{k} - \h b\|_2^2 - \frac{\gamma}{2} \|A (\h x^{k} - \h y^{k})\|_2^2
\; \geq \;
    \frac{\gamma}{2} \|A \h x^{k} - \h b\|_2^2 - \frac{\gamma C_A}{2}  \|\h x^{k} - \h y^{k}\|_2^2.
\end{split}
\end{equation*}
Consequently, we obtain that
\begin{equation}
\label{eq:L_lower2}
L_{\rho}(\h u^{k}, \h x^{k}, \h y^{k}; \h v^{k})
  \geq 
 \langle \h u^{k}, |\h x^{k} |  \rangle+
\alpha g(\h u^{k}) + \frac{\gamma}{2} \| A \h x^{k} - \h b\|_2^2 +
\frac {\rho-\gamma C_A} 2\|\h x^{k}-\h y^{k}\|_2^2.
\end{equation}
If $\rho \geq \gamma C_A,$ then $L_{\rho}(\h u^{k}, \h x^{k}, \h y^{k}; \h v^{k}) \geq  \alpha g(\h u^{k}) \geq \alpha m_g$, showing $L_\rho$ is lower bounded. On the other hand,  \cref{Thm_dec2} gives an upper bound of  $L_{\rho}(\h u^{k}, \h x^{k}, \h y^{k}; \h v^{k}),$ i.e., $L_{\rho}(\h u^0, \h x^0, \h y^0; \h v^0)$. 

The boundedness of $L_\rho,$  $\h u^k,$ and $\h x^k$ together with \eqref{eq:L_lower2} implies that $\h y^k$ is bounded, and hence $\h v^k$ is bounded due to $\h v^{k} = \nabla \psi (\h y^{k}).$ Then the Bolzano-Weierstrass Theorem guarantees that  there exists a subsequence, denoted by $\{ \h x^{k_s}, \h y^{k_s}, \h u^{k_s}, \h v^{k_s} \}$, that converges to a limit point, i.e. $(\h x^{k_s}, \h y^{k_s}, \h u^{k_s}, \h v^{k_s}) \rightarrow (\h x^*, \h y^*, \h u^*, \h v^*). $

By \cref{theorem:res_conv2}, we get $\h x^{k_s}-\h y^{k_s} \rightarrow \h 0, $ leading to $\h x^* = \h y^* $, and $$(\h x^{k_s-1}, \h y^{k_s-1}) \rightarrow (\h x^*, \h y^*),$$ leading to $\h v^{k_s-1} \rightarrow \h v^* $. 

Let $\h p^{k_s} $ be the corresponding variables in the optimality condition \eqref{eq:opt}.  As $\h p^{k_s}  \in \partial |\h x^{k_s} |,$ we know $\h p^{k_s} $ is bounded by $[-1,1].$ Therefore, there exists a limit point of the sequence $\h p^{k_s}.$ Without loss of generality, we assume it is the sequence itself, i.e., $\h p^{k_s}\rightarrow \h p^*,$ and hence we have  $\h p^*  \in \partial |\h x^* | $.  

\textbf{Type C:} If $ g$ is a type C function then $\h u \in [0,\infty)^n $ and hence the optimality condition for the $\h u- $update is $\h 0  = |\h x^k| + \alpha \nabla g(\h u^{k+1}) - \h s^{k+1},$ with $\h s^{k+1} \odot \h u^{k+1} = \h 0$. We define 
\begin{equation*}
    \h s^* := \lim_{s \rightarrow \infty} |\h x^{k_s-1}| + \alpha \nabla g(\h u^{k_s}),
\end{equation*}
and  so $\h s^* \geq 0 $ and  $\h s^{k_s} \rightarrow \h s^*$ (since $g $ is continuously differentiable). The optimality condition $\h s^{k_s} \odot \h u^{k_s} = \h 0$ implies that $\h s^* \odot \h u^* = \h 0$.
Since all the equations in  \eqref{eq:opt} are continuous, we can replace $k $ by $k_s-1 $ and take the limit as $k_s \rightarrow \infty $ to get
\begin{equation*}
\left\lbrace
\begin{split}
\h 0 & = |\h x^*| + \alpha \nabla g(\h u^{*}) - \h s^{*} 
\\
\h 0 &= \h u^{*} \odot \h p^{*} + \h v^* + \rho(\h x^{*} - \h y^{*})
\\
\h 0 & = \nabla \psi(\h y^{*}) - \h v^* - \rho(\h x^{*}-\h y^{*}),
\end{split}
\right.
\end{equation*}
where $\h s^* \geq 0 $ with $\h s^* \odot \h u^* = \h 0$, and $\h p^*  \in \partial |\h x^* | $. Hence, $(\h x^*, \h y^*, \h u^*, \h v^*)$ is a stationary point of $L_\rho(\h u, \h x, \h y; \h v) $.
Furthermore, we have 
 $\h v^{k_s} = \nabla \psi (\h y^{k_s}) $ from the proof of \cref{Lemma}, leading to  $\h v^* =  \nabla \psi (\h y^*)$. Together with $\h x^* = \h y^*, $ we get  
\begin{equation*}
\left\lbrace
\begin{split}
\h 0 & = |\h x^*| + \alpha \nabla g(\h u^{*}) - \h s^{*},
\\
\h 0 & = \h u^{*} \odot \h p^{*} + \nabla \psi (\h y^*),
\end{split}
\right.
\end{equation*}
which means that $(\h x^*, \h u^*) $ is a stationary point of \eqref{eq:uncon} for $U = [0,\infty)^n $.

\textbf{Type B:} If $ g$ is a type B function then $\h u \in [0,1]^n $ and hence the optimality condition for the $\h u- $update is $\h 0  = |\h x^k| + \alpha \nabla g(\h u^{k+1}) - \h s^{k+1} + \h t^{k+1},$ with $\h s^{k+1} \odot \h u^{k+1} = \h 0$ and $\h t^{k+1} \odot (\h u^{k+1}-1) = \h 0$ with $\h s^{k+1} \geq 0 $ and $\h t^{k+1} \geq 0 $.

Note that we have $\h s^{k_s} - \h t^{k_s} = |\h x^{k_s-1}| + \alpha \nabla g(\h u^{k_s})$.
Since $\h u^{k_s} \rightarrow \h u^*,$ $\h x^{k_s-1} \rightarrow \h x^*,$ and $g $ is continuously differentiable hence the sequence $\h s^{k_s} - \h t^{k_s} $ is bounded and converges to the limit $ |\h x^*| + \alpha \nabla g(\h u^*)$. Combining the boundedness of $\h s^{k_s} - \h t^{k_s} $ together with the optimality conditions, the sequences $\h s^{k_s} $ and $\h t^{k_s} $ must be bounded. Therefore, each sequence has a convergent sub-sequence and without loss of generality, we may assume it is the sequence itself, i.e., $\h s^{k_s}\rightarrow \h s^*,$ and $\h t^{k_s}\rightarrow \h t^*$. We must have
\begin{equation*}
    \h s^* - \h t^* = \lim_{s \rightarrow \infty} |\h x^{k_s-1}| + \alpha \nabla g(\h u^{k_s}),
\end{equation*}
and $\h s^* \geq 0 $ and  $\h t^* \geq 0 $ with the conditions $\h s^* \odot \h u^* = \h 0$ and $\h t^* \odot (\h u^*-1) = \h 0$. The rest of the analysis is similar to the Type C functions and we get
\begin{equation*}
\left\lbrace
\begin{split}
\h 0 & = |\h x^*| + \alpha \nabla g(\h u^{*}) - \h s^{*} + \h t^* 
\\
\h 0 &= \h u^{*} \odot \h p^{*} + \h v^* + \rho(\h x^{*} - \h y^{*})
\\
\h 0 & = \nabla \psi(\h y^{*}) - \h v^* - \rho(\h x^{*}-\h y^{*}),
\end{split}
\right.
\end{equation*}
which means $(\h x^*, \h y^*, \h u^*, \h v^*)$ is a stationary point of $L_\rho(\h u, \h x, \h y; \h v) $ and 
\begin{equation*}
\left\lbrace
\begin{split}
\h 0 & = |\h x^*| + \alpha \nabla g(\h u^{*}) - \h s^{*} + \h t^{*},
\\
\h 0 & = \h u^{*} \odot \h p^{*} + \nabla \psi (\h y^*),
\end{split}
\right.
\end{equation*}
which means that $(\h x^*, \h u^*) $ is a stationary point of \eqref{eq:uncon} for $U = [0,1]^n $.
\end{proof}

\subsection{Algorithm updates for different \texorpdfstring{$g $}{g} functions}  \label{sec_CompG}

Here we consider two examples of $g $ functions, with which the $\h u$-subproblem has a closed-form solution. We define one function as $g_1(\h u) = -\frac{1}{2} \| \h u\|_2^2 $, a Type B function, with $U_1 = [0,1]^n$, and a Type C function $g_2(\h u) = \frac{1}{2} \| \h u\|_2^2 - \|\h u \|_1 $ with $U_2 = [0,\infty)^n$.  For these combinations the update for \eqref{eq:update_u} simplifies to
\begin{equation*}
\text{For } g_1, U_1, \;\;
u^{k+1}_i =  
\left\{
\begin{array}{cc}
    1 & \mbox{if } |x_i^k| \leq  \frac{\alpha}{2}, \\
    0 & \mbox{if } |x_i^k| \geq  \frac{\alpha}{2},
\end{array}
\right.
\;\; \text{and} \;\;\text{ for }g_2, U_2, \;\; 
 u^{k+1}_i =  \max \left\{ 1 - \frac{|x_i^k|}{\alpha}, 0 \right\}.
\end{equation*}
%
Note that for this choice of $ g_2$, the proposed model simplifies to 
\begin{equation*}
  \min_{\h x \in \mathbb{R}^n, \h u\in \mathbb{R}^n_+}  \left\langle \h u, |\h x| \right\rangle + \alpha (\frac{1}{2} \|\h u\|_2^2 - \|\h u \|_1)   \ \ \mbox{s.t.} \ \ A \h x = \h b,
\end{equation*}
which can be solved by a quadratic programming with linear constraints and positive semi-definite matrix.

    


%
%

\section{Numerical Experiments} \label{sec_Experiments}

We demonstrate the performance of  \cref{Alg:01} with $\epsilon=0.01$ and two specific $g$ functions discussed in Section \ref{sec_CompG}.
We compare with the following sparsity promoting regularizations: $\ell_1 $ \cite{chen2001atomic}, $\ell_p $ \cite{chartrand2008iteratively,Xu2012},  transformed $\ell_1 $ (TL1) \cite{zhang2014minimization}, $\ell_1 - \ell_2 $ \cite{yinLHX14}, and ERF \cite{guo2021novel}. For each model, we consider  both constrained and unconstrained formulations. Specifically for the $\ell_p$ model, we adopt the iteratively reweighted least-squares (IRLS) algorithm  \cite{chartrand2008iteratively} in the constrained case, and use the half thresholding \cite{Xu2012} as a proximal operator for minimizing the unconstrained $\ell_{1/2}$ formulation. Both $\ell_1 - \ell_2$ and TL1 are minimized by the difference of convex algorithm (DCA) for the best performance as reported in 
\cite{yinLHX14,zhang2014minimization}. We use the online code provided by the authors  of \cite{guo2021novel}  to solve for the ERF model. We use the default values of model parameters suggested in respective papers; note that $\ell_1$ and $\ell_1 - \ell_2 $ do not involve any parameters. 
All the experiments are conducted on a Windows desktop with CPU (Intel
i7-6700, 3.19GHz) and MATLAB (R2021a).




\subsection{Constrained Models}

\begin{figure}[t]
\centering
\begin{subfigure}{0.3\linewidth}
\includegraphics[width=\textwidth]{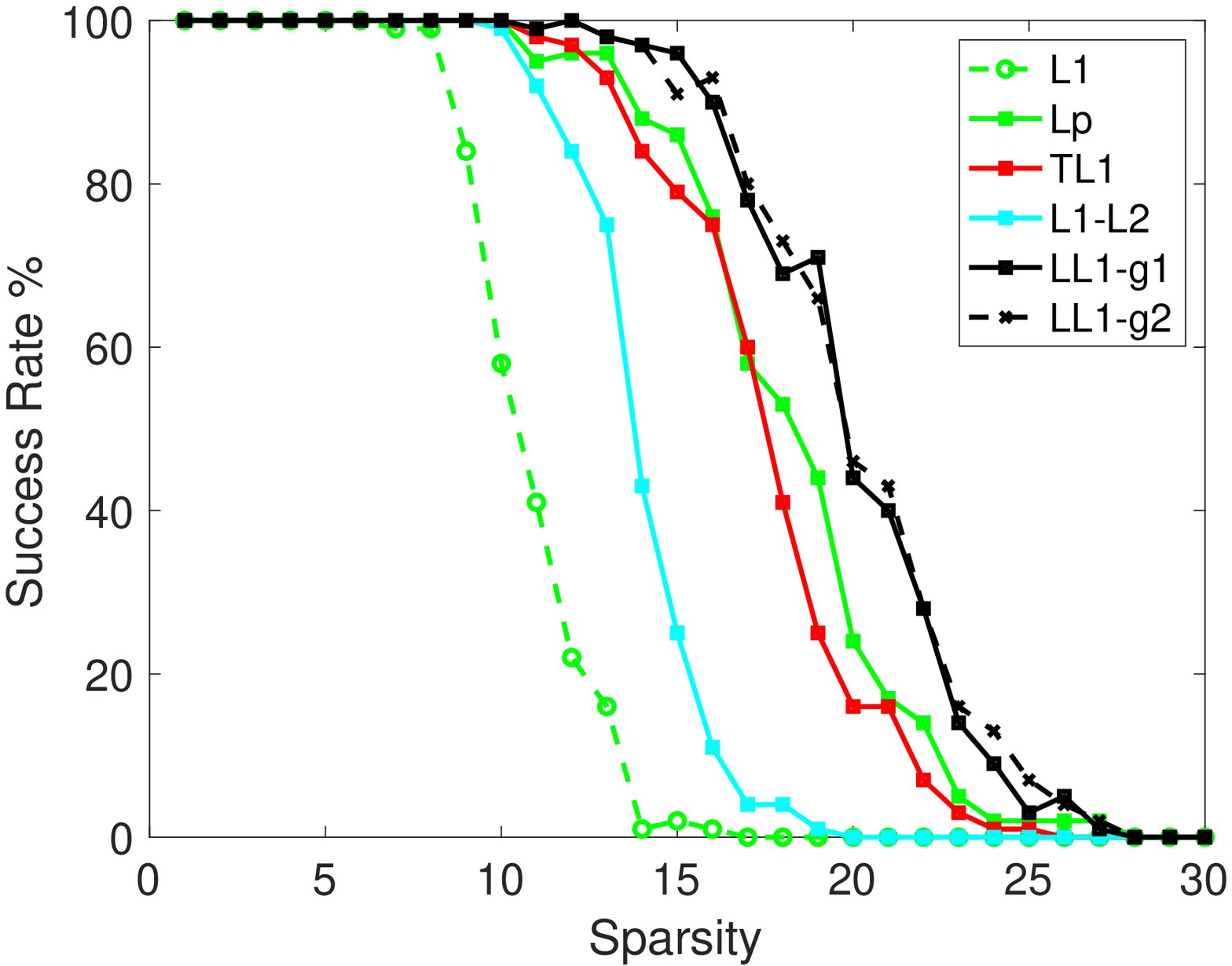}
\caption{$r = 0 $}
\label{fig:subG1}
\end{subfigure}%
~
\begin{subfigure}{0.3\linewidth}
\centering
\includegraphics[width=\textwidth]{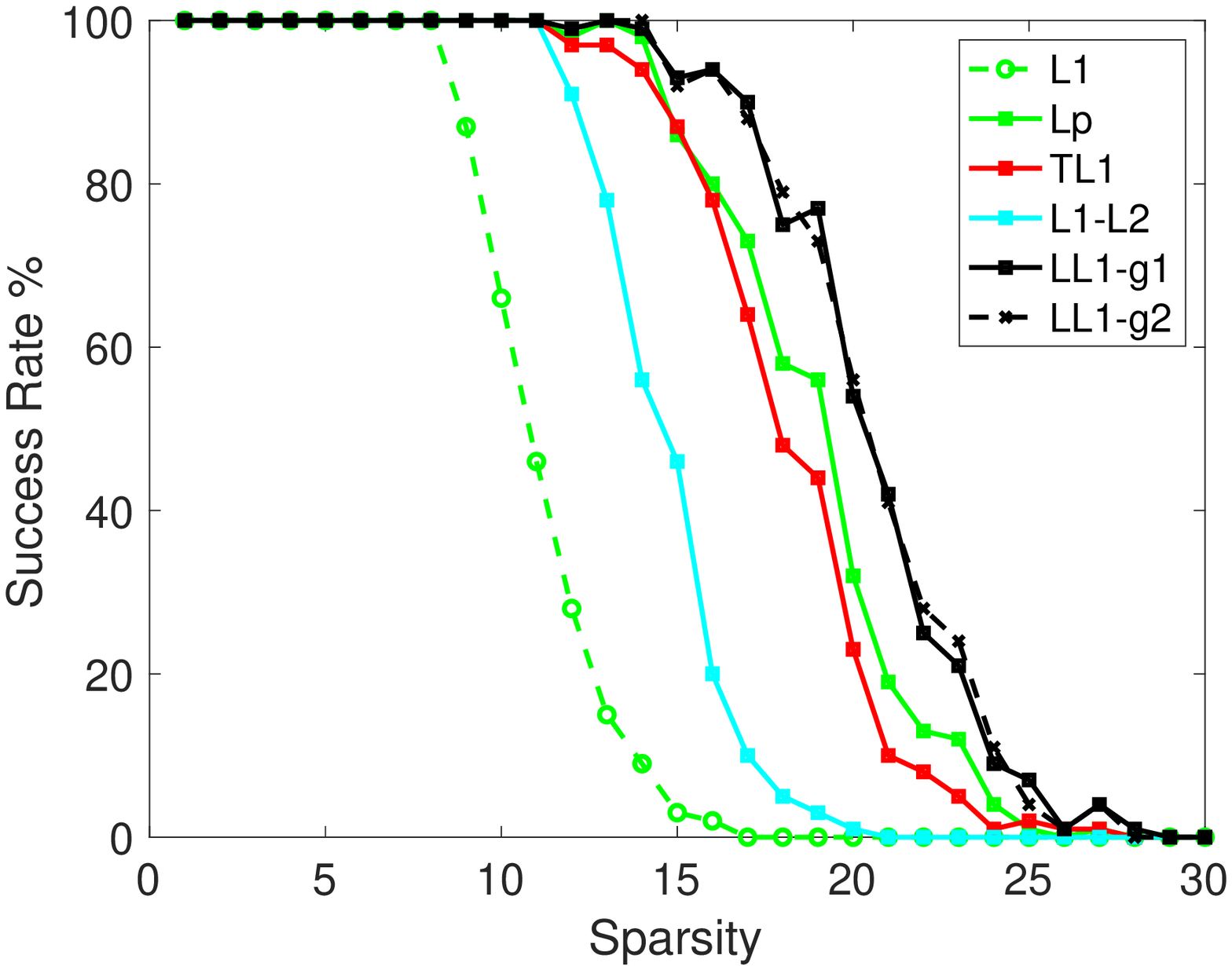}
\caption{$F = 1 $}
\label{fig:subDCT1}
\end{subfigure}
\\
\begin{subfigure}{0.3\linewidth}
\centering
\includegraphics[width=\textwidth]{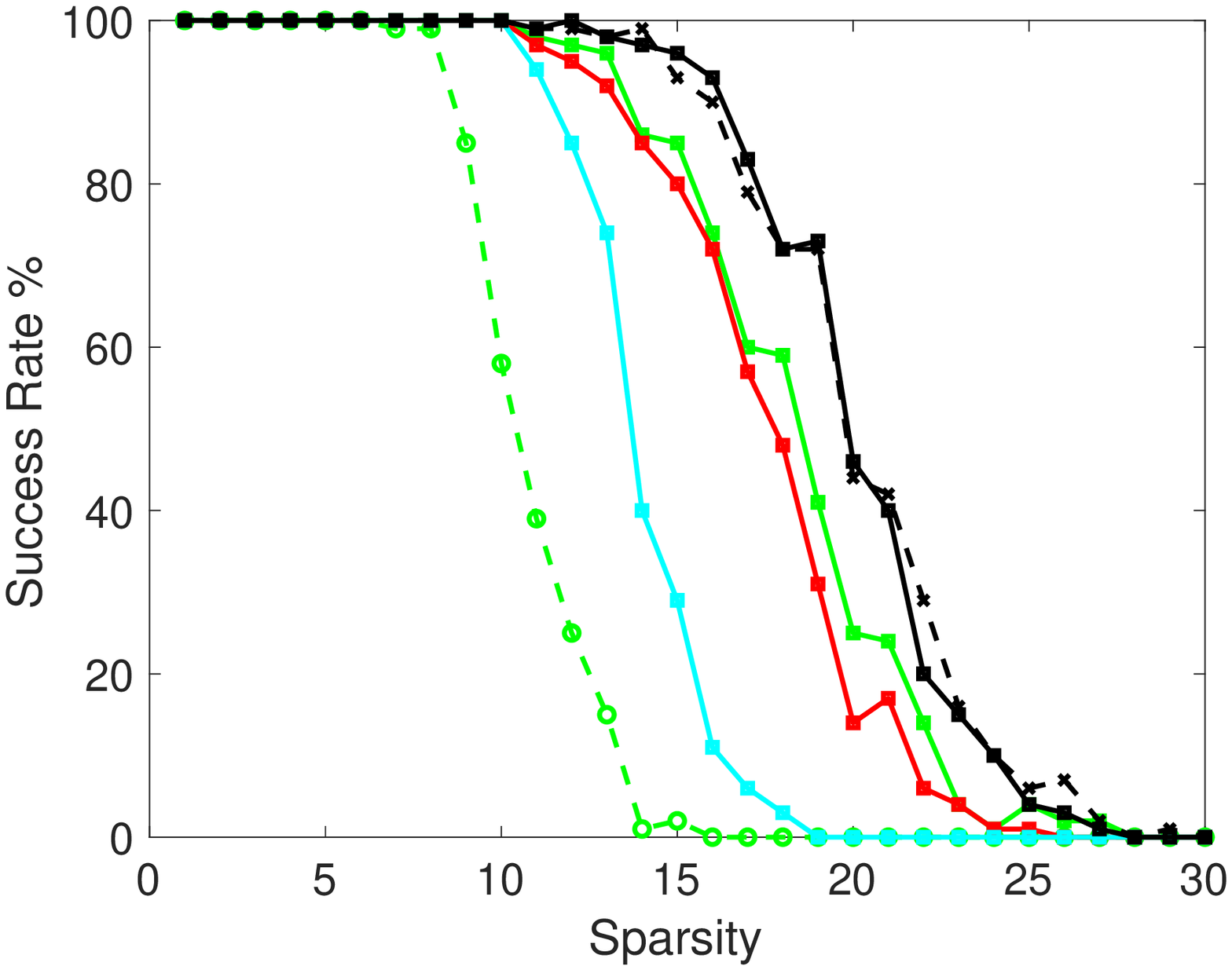}
\caption{$r = 0.8 $}
\label{fig:subG3}
\end{subfigure}
~
\begin{subfigure}{0.3\linewidth}
\centering
\includegraphics[width=\textwidth]{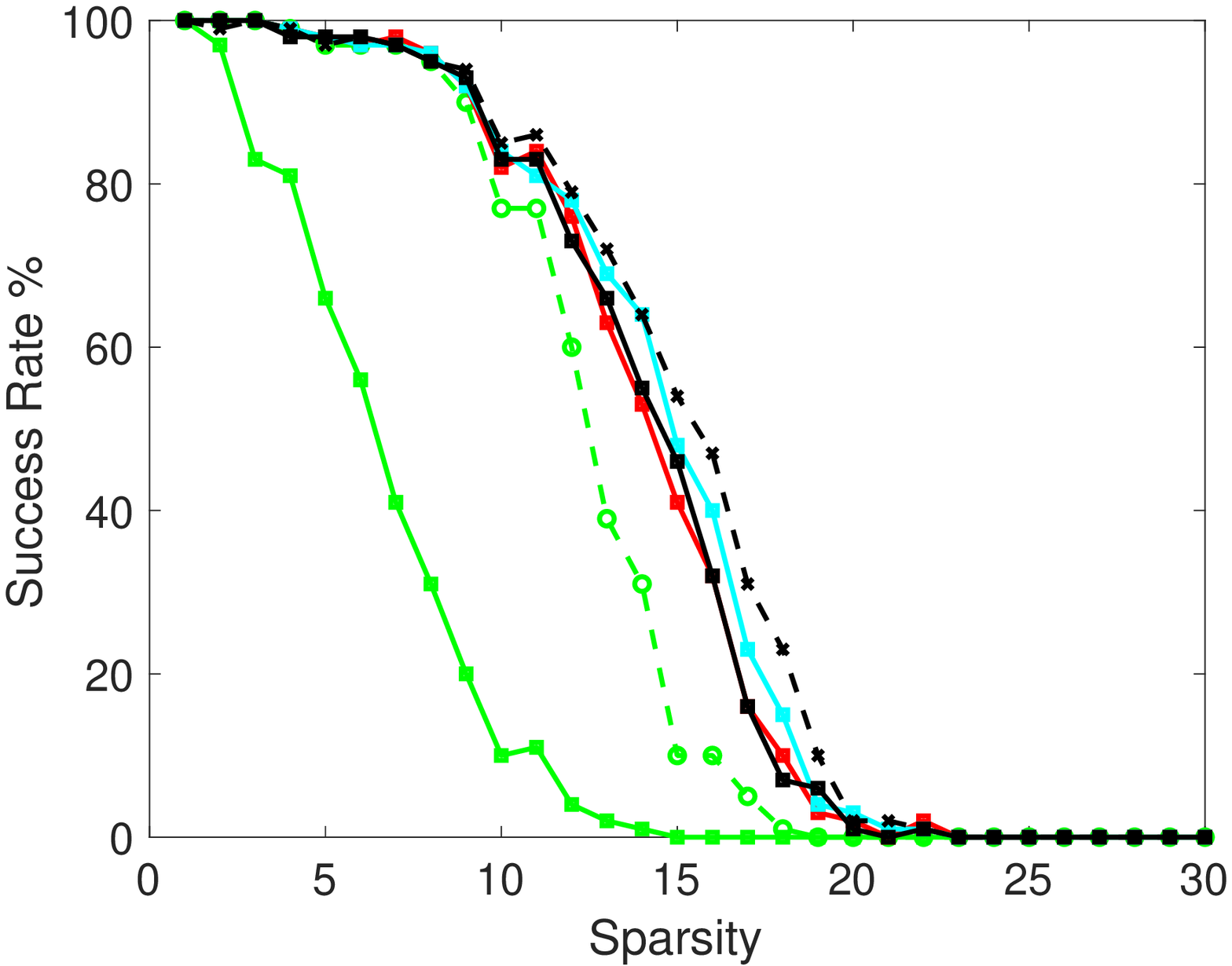}
\caption{$F = 10 $}
\label{fig:subDCT3}
\end{subfigure}
\caption{Success rate comparison among all the competing methods based on   Gaussian matrices (left) with  $r = 0, 0.8$ and DCT matrices  (right) with  $F = 1, 10$. }
\label{fig:SR_GaussianDCT}
\end{figure}

We examine the performance of finding a sparse solution that satisfies the constraint $A \h x = \h b$. 
We consider two types of sensing matrices, Gaussian and over-sampled discrete cosine transform (DCT).  The Gaussian matrix is generated based on the multivariate normal distribution $\mathcal{N}(0, \Sigma) $, where $\Sigma_{i,j} = (1-r) \delta(i = j) + r  $ for a parameter $r > 0.$
The over-sampled DCT matrix is defined by $A = [\h a_1, ... , \h a_n]  \in \mathbb{R}^{m \times n}$ with each column defined as
\begin{equation*}
\h a_j := \frac{1}{\sqrt{m}} \cos \left( \frac{2 \pi \h w_j}{F}  \right),
\end{equation*}
where $\h w $ is
a uniformly random vector and $F \in \mathbb{R}_+ $ is a scalar. The larger $F$ is, the larger the coherence of the matrix $A$ is, thus more challenging to find a sparse solution.

We fix the dimension as $64 \times 1024 $ for Gaussian and DCT matrix, while generating  Gaussian matrices with $ r \in \{0, 0.2, 0.8 \}$ and DCT matrices with $F \in \{1, 5, 10 \}$. The ground truth vector $ \h x_g \in  \mathbb{R}^n$ is simulated as $s$-sparse signal, where $s$ is the total number of nonzero entries each drawn from normal distribution $\mathcal{N}(0, 1) $ and the support index set is also drawn randomly.
We evaluate the performance by success rates where
a ``successful'' reconstruction refers to the case when the distance of the output vector $\h x $ and the ground truth $\h x_g $ is less than $10^{-2} $, i.e. $$ \frac{\|\h x - \h x_g \|_2}{\| \h x_g \|_2} \leq 10^{-2}. $$

\cref{fig:SR_GaussianDCT} presents success rates for both Gaussian and  DCT matrices, and   
demonstrates that the proposed  LL1 outperforms the state of the art in all the testing cases. For the Gaussian matrices, the parameter $r$ has little affect on the performance, as we observe the same ranking of these models under various $r$ values. As for the DCT matrices, the parameter $F$ influences the coherence of the resulting matrix. For smaller $F$ value, $\ell_p$ is the second best, while TL1 and $\ell_1-\ell_2$ perform well for coherent matrices (for $F=10$). With a well-chosen $g$ function, the proposed LL1 framework always achieves the best results among the competing methods.  
The results of LL1 using $g_1$ with $U_1$ and $g_2$ with $U_2$  are similar. 
This phenomenon illustrates  that our model works best as it is equivalent to the $\ell_0 $ model for small enough $\alpha$. 


\subsection{Unconstrained Models}

We consider the unconstrained $\ell_0$ model for comparison on noisy data:
\begin{equation}
\arg \min_{\h x \in \mathbb{R}^n}  \|\h x\|_0 + \frac{\gamma}{2} \| A \h x - \h b \|_2^2,
\end{equation}
where $\gamma $ is a regularization parameter.
We consider signals of length $512  $ with sparsity $130 $, and 
$m $ measurements $\h b $,  determined by a Gaussian sensing matrix $A $. The columns of $A $ are normalized with mean zero and unit norm. A Gaussian noise with means zero and standard deviation $\sigma$ is also added to the measurements. To evaluate the success rate of algorithms, we consider the mean-square-error (MSE) of the output signal  $\h x$ with the ground-truth solution $\h x^* $ using the formula
\begin{equation*}
\text{MSE}(\h x) = \|\h x - \h x^* \|_2.    
\end{equation*}
For each algorithm, 
we compute the average of  MSE for 100 realizations by ranging the number of measurements between $60 <m <120 $. 
\cref{fig:Noisy_comALL01} present the comparison results for two noise levels $\sigma \in \{10^{-6}, 0.01 \} $.
All the algorithms perform badly with a few measurements, and as the number of measurements $m $ increases, their MSE error decreases. For the smaller amount of the noise ($\sigma=10^{-6}$), our approach almost works perfectly in  around $100$ measurements, while other algorithms either require more measurements to achieve the nearly perfect MSE or are unable to do so.
\cref{fig:Noisy_comALL01_d} presents the computational times, which suggests that LL1 performs as fast as  the $\ell_1 $ model and at the same time it has the lowest recovery error.

When the noise level is high, for instance $\sigma = 0.1 $, then it is almost impossible to reconstruct the ground-truth signal using any number of measurements. In such cases, our algorithm finds a signal that is sparser and has smaller objective for any choice of the regularization parameter $\gamma $. 


\begin{figure}
\begin{subfigure}{.3\linewidth}
\centering
\includegraphics[width=\textwidth]{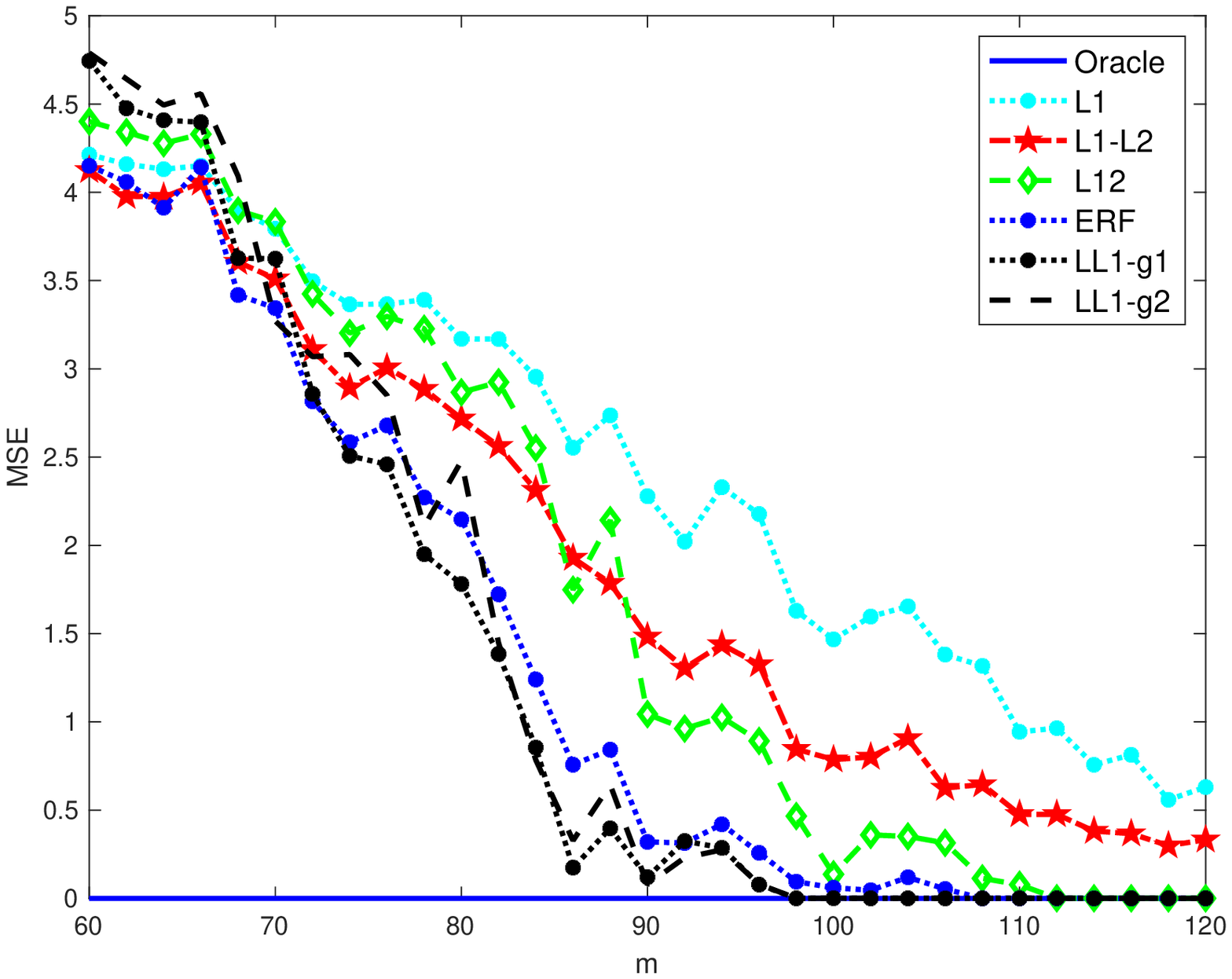}
\caption{$\sigma = 10^{-6}$}
\end{subfigure}%
\hspace{0.5cm}   
\begin{subfigure}{.3\linewidth}
\centering
\includegraphics[width=\textwidth]{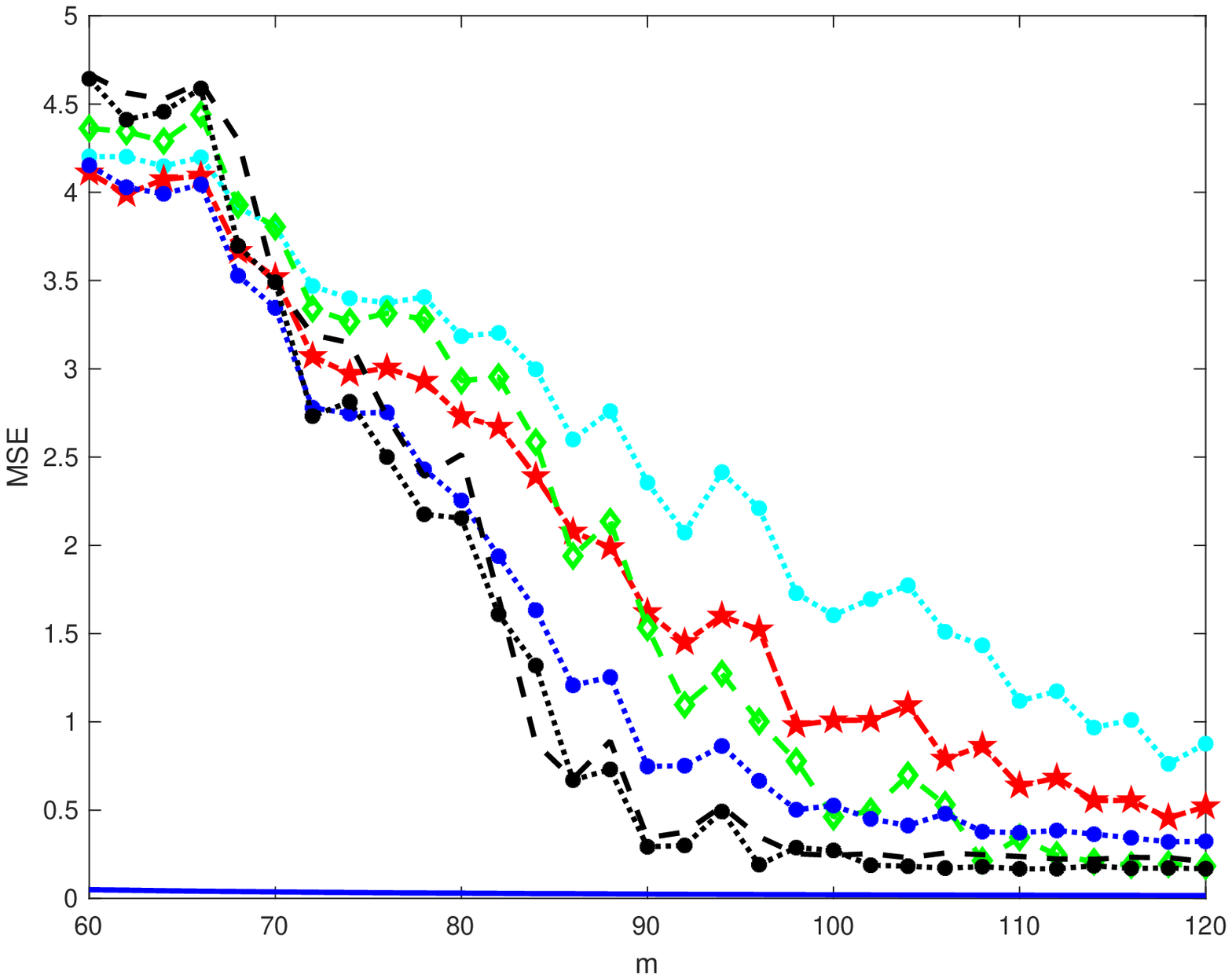}
\caption{$\sigma = 0.01$}
\end{subfigure}
\hspace{0.5cm}   
\centering
\begin{subfigure}{0.3\linewidth}
\centering
\includegraphics[width=\textwidth]{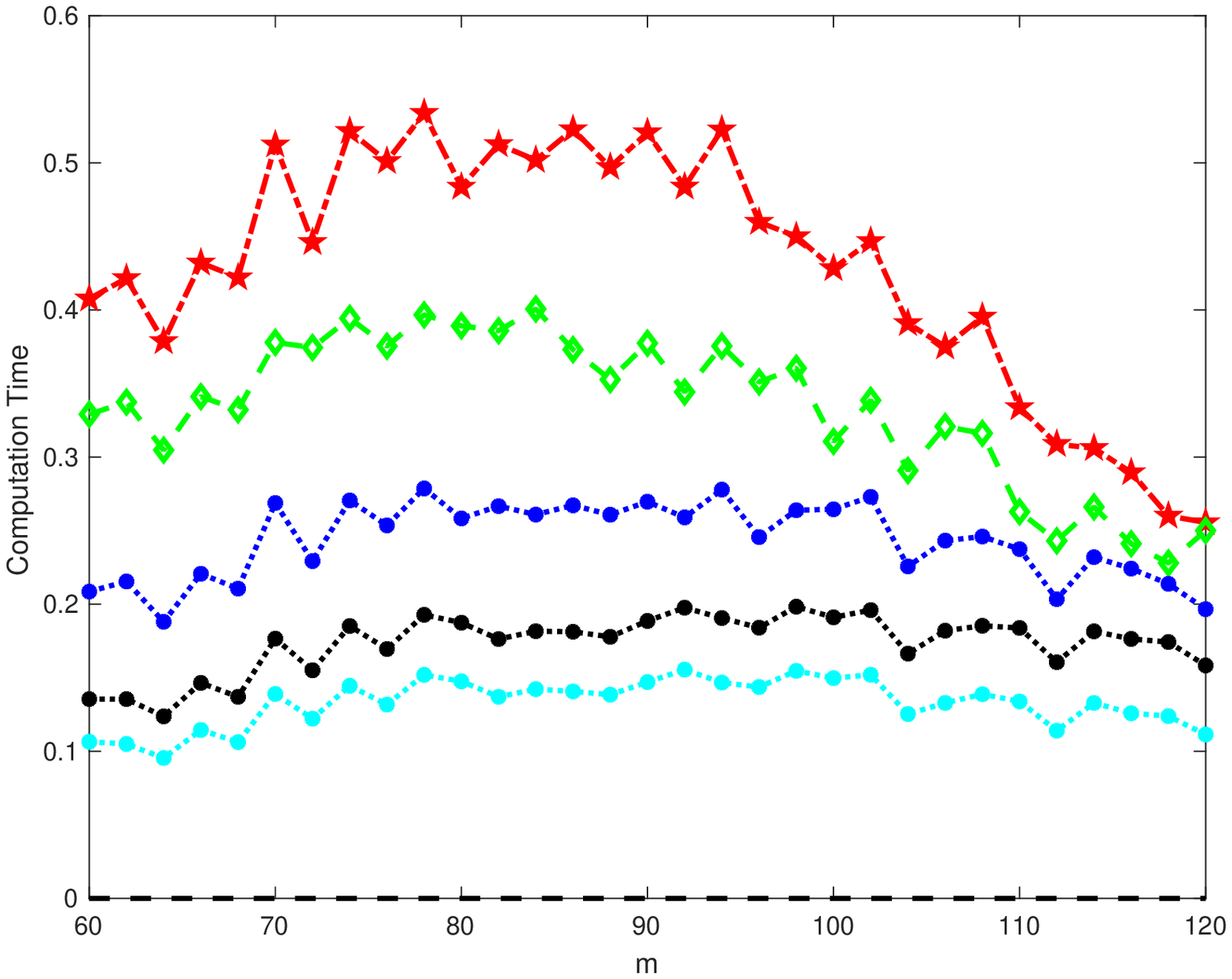}
\caption{Time comparison, $\sigma = 0.01$}
\label{fig:Noisy_comALL01_d}
\end{subfigure}%

\caption{Comparison of all algorithms for $m \times 512 $ matrices with different noise level $\sigma$.}
\label{fig:Noisy_comALL01}
\end{figure}


\subsection{Comparing the ADMM and DCA based algorithms}\label{sect:DCA}

Alternatively to ADMM, one can minimize the proposed model \eqref{eq:con} and \eqref{eq:uncon} by the difference of convex algorithm (DCA) \cite{horst1999dc,taoA97,taoA98}. More specifically, since our function $ -F_{g,\alpha}^U$ is convex on the positive cone, then we use the algorithm introduced in \cite{ochs2015iteratively}, where we only need to find the sub-differential of the function on $\mathbb{R}^n_+ $. We used the function $\psi(\cdot)$ to unify the constrained  and the unconstrained formulations and we have the model
\begin{equation}
\arg \min_{\h x \in \mathbb{R}^n}  F_{g,\alpha}^U(\h x) + \psi(\h x).
\end{equation}
Since the function $F_{g,\alpha}^U $ is concave on $\mathbb{R}^n_+ $, it can be  written as a  difference of two convex functions, i.e.,  $F_{g, \alpha}^U(\h x) +  \psi(\h x) = h_1(\h x) - h_2(\h x) $ where $h_1(\h x) = \psi(\h x) + \frac{\beta}{2} \|\h x \|_2^2 $ and $ h_2(\h x) = \frac{\beta}{2} \| \h x \|_2^2 - F_{g, \alpha}^U (\h x) $ for $\beta \geq 0.$ 
%
An interesting fact about using a DCA form is that if $g $ is a Type B or Type C  then for $\h x \geq 0 $ we have  sub-differentials of the form 
\begin{equation*}
\partial  (-F_{g, \alpha}^U) (\h x)  = - \arg\min_{\h u \in U}   \left\langle \h u, |\h x | \right\rangle + \alpha g(\h u).
\end{equation*}
For $\h x \in \mathbb{R}^n $, take $\h u_{g, \alpha}(\h x^k) \in \arg\min_{\h u \in U}   \left\langle \h u, |\h x^k | \right\rangle + \alpha g(\h u) $ then
the DCA iterations become  
 \begin{equation}
\h x^{k+1} = \arg\min_{\h x \in \mathbb{R}^n} \psi(\h x) + \frac{\beta}{2} \|\h x \|_2^2  - \left\langle \beta |\h x^k| - \h u_{g, \alpha}(\h x^k) ,| \h x|  \right\rangle.
\label{eq:DCA02}
\end{equation}
We implement the DCA iterations of \eqref{eq:DCA02} for $\beta = 0 $ for its simplicity and efficiency as opposed to $ \beta > 0$. In addition,  we can consider 
an adaptive scheme to update $\alpha,$  which is adopted in \cref{Alg:02}. 
 

\begin{algorithm}[t]
	\SetAlgoLined
	Input: $\h x^0 $, $ \alpha^0 > 0$, $ \gamma \geq 0$, function $ g$, set $U $, and MaxOuter/MaxInner   \\
	\For{ $j  = 1$ to $ MaxOuter$}{
		$\alpha^{j+1} = (1 - \eta) \alpha^j $ \\
	\For{$k = 1 $ to $MaxInner$}{
$
 \h u^k =  \arg\min_{\h u \in U}   \left\langle \h u, |\h x^k | \right\rangle + \alpha^j g(\h u).
$
\\
$\h x^{k+1} =\arg\min_{\h x \in \mathbb{R}^n} \psi(\h x) + \left\langle \h u_{g, \alpha}(\h x^k), |\h x | \right\rangle.$
	}}
	\caption{Homotopy based DCA algorithm}
	\label{Alg:02}
\end{algorithm}

We compare 
ADMM (\cref{Alg:01}) and DCA   (\cref{Alg:02}) for minimizing the same constrained formulation \eqref{eq:con}  with  $g_1$ and $g_2$ discussed in \Cref{sec_CompG}. 
We are particularly interested in the algorithmic behaviors when dynamically updating $\alpha$. As mentioned in  \cref{theorem:exact01},  $\alpha$ is supposed to be small enough to approximate the $\ell_0$ solution.
A common way involves an exponential decay in the form of $\alpha^{k+1} = (1 - \eta) \alpha^k,$ for $\eta \in (0,1)$. If the parameter $ \eta$ is close to $ 1$ then $\alpha $ converges to zero too quickly and hence the algorithm cannot converge to a good local minimum as it is equivalent to having $ \alpha = 0$ in the very beginning. On the other hand, if $ \eta$ is close to $0 $ then $\alpha $ slowly decreases to zero; and as a result, \Cref{Alg:01} may terminate before $\alpha$ is small enough for $F^U_{g, \alpha}$ to approximate the $\ell_0$ norm. 


The comparison between ADMM and DCA 
on Gaussian and DCT matrices is presented 
in \cref{fig:SR_All_g1g2}. By fixing $\eta = 0.1,$ we select the optimal  $\rho = 2^j $ for ADMM with $j \in  \{-1,0,1,2,3,4,5,6,7,8 \}$ that achieves the smallest relative error to the ground-truth.  Then using the optimal parameters $\rho$ and $\gamma,$
\cref{fig:SR_All_g1g2} presents the ADMM results for $\eta \in \{ 0.001,0.01,0.1 \}$ and the DCA ones for $\eta \in \{0.01,0.1 \} $.
For  all the cases, ADMM  is superior to  DCA in that it is less sensitive to $\eta$. In addition, DCA consists of two loops and hence it is generally slower than ADMM. Our experiment shows that a suitable choice for our experiments is $\eta = 0.01 $.








\begin{figure}
\centering
\begin{subfigure}{.3\linewidth}
\includegraphics[width=\textwidth]{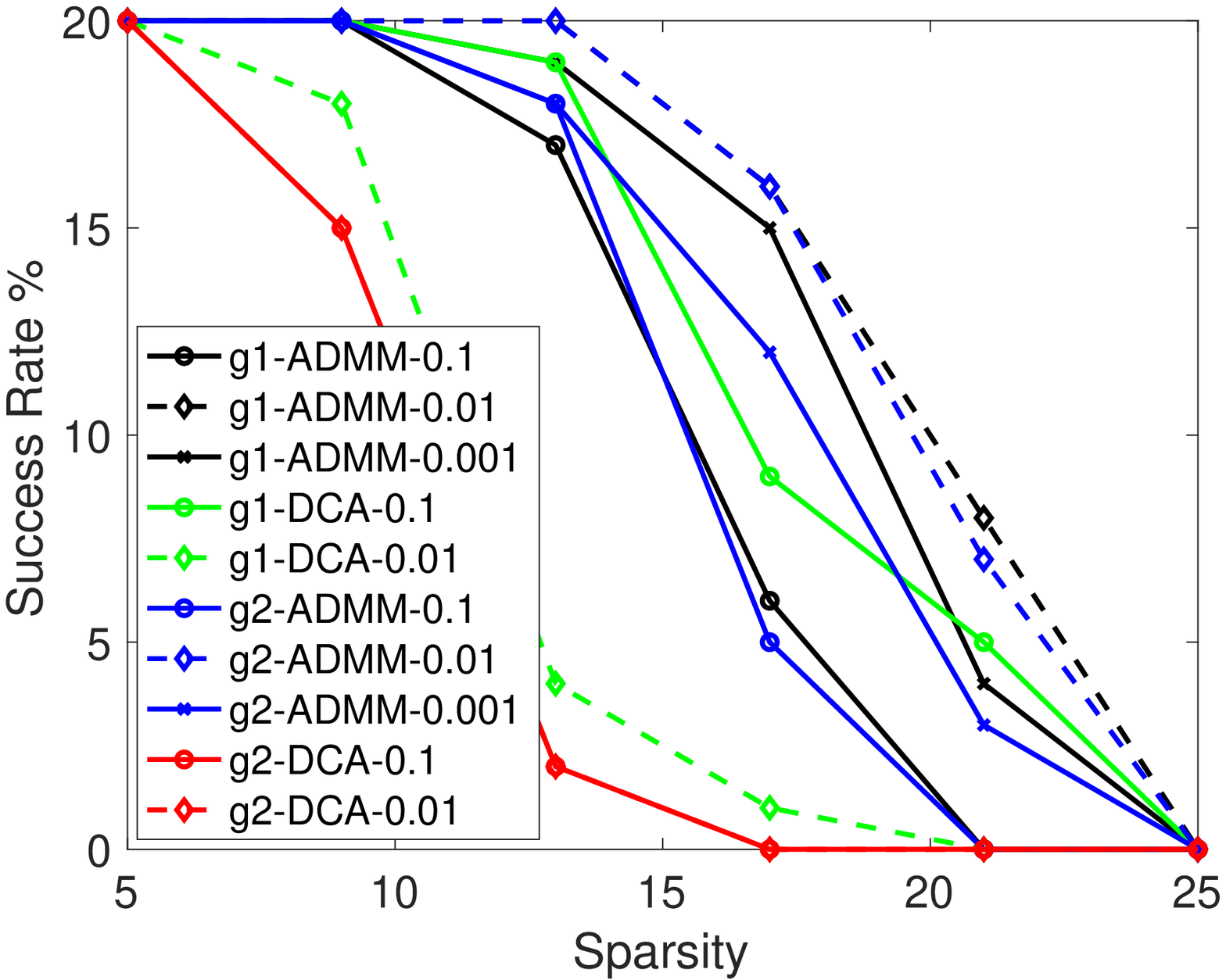}
\caption{$r = 0 $}
\end{subfigure}
~
\begin{subfigure}{.3\linewidth}
\centering
\includegraphics[width=\textwidth]{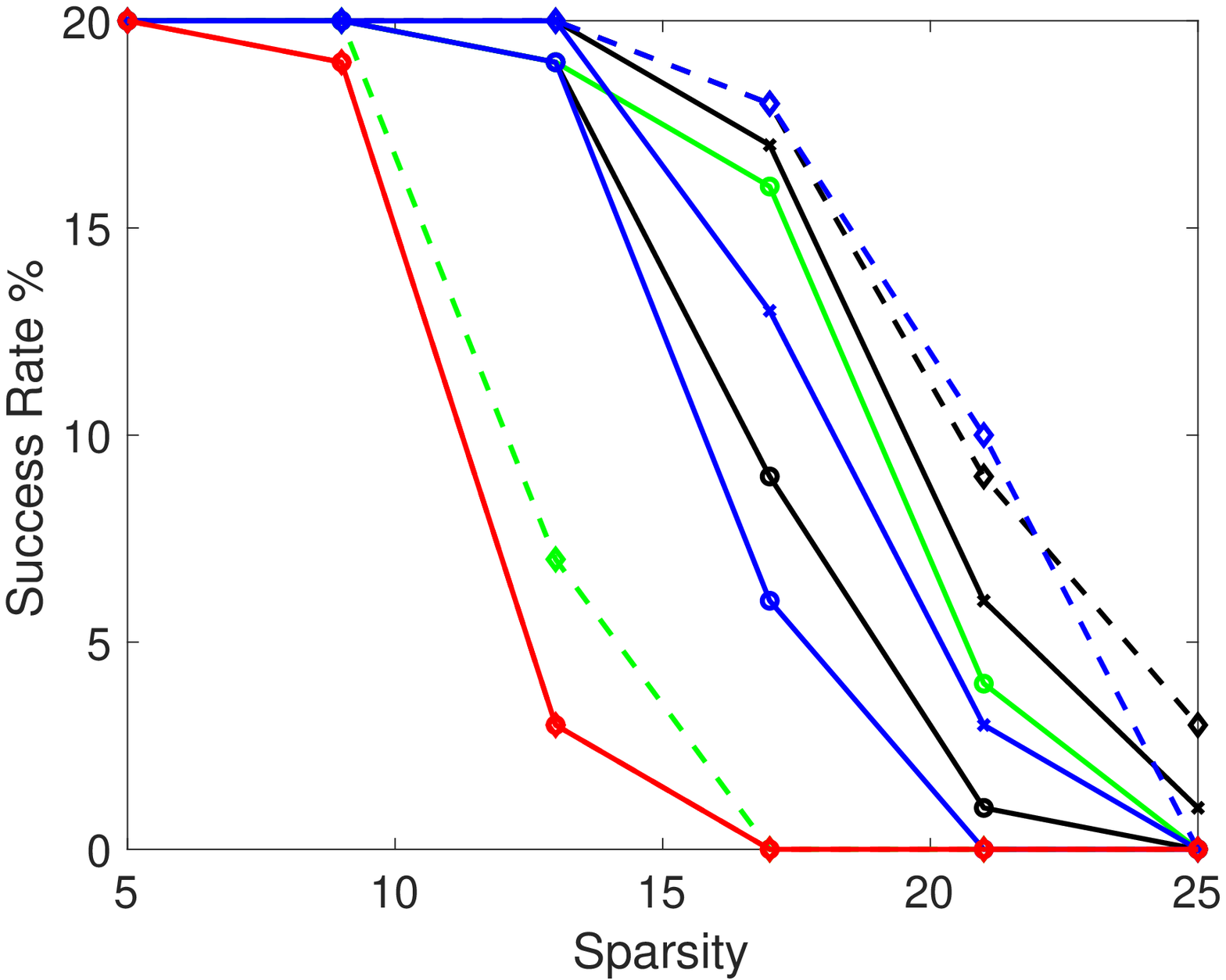}
\caption{$F = 1 $}
\end{subfigure}%
\\
\begin{subfigure}{.3\linewidth}
\centering
\includegraphics[width=\textwidth]{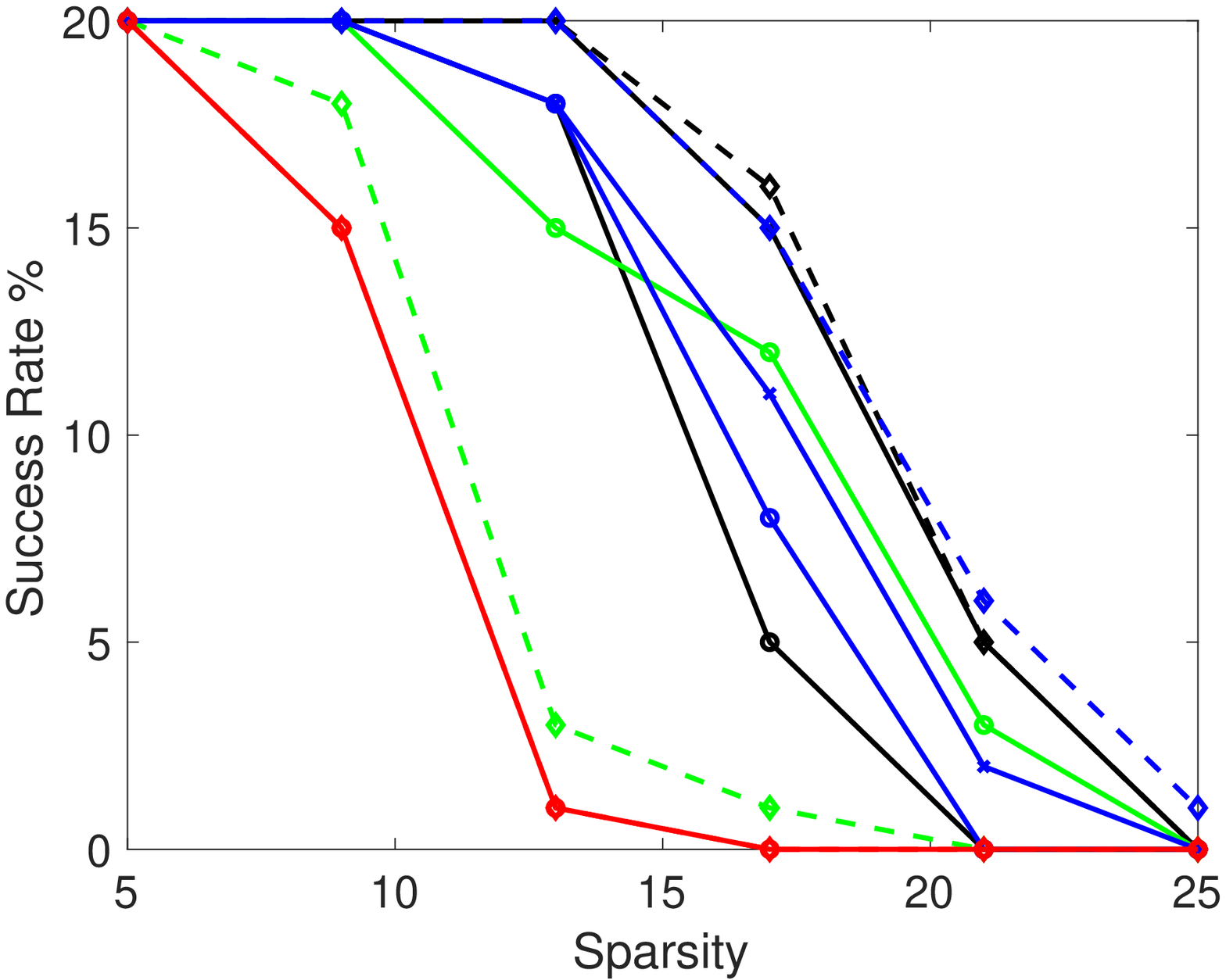}
\caption{$r = 0.8 $}
\end{subfigure}
~
\begin{subfigure}{.3\linewidth}
\centering
\includegraphics[width=\textwidth]{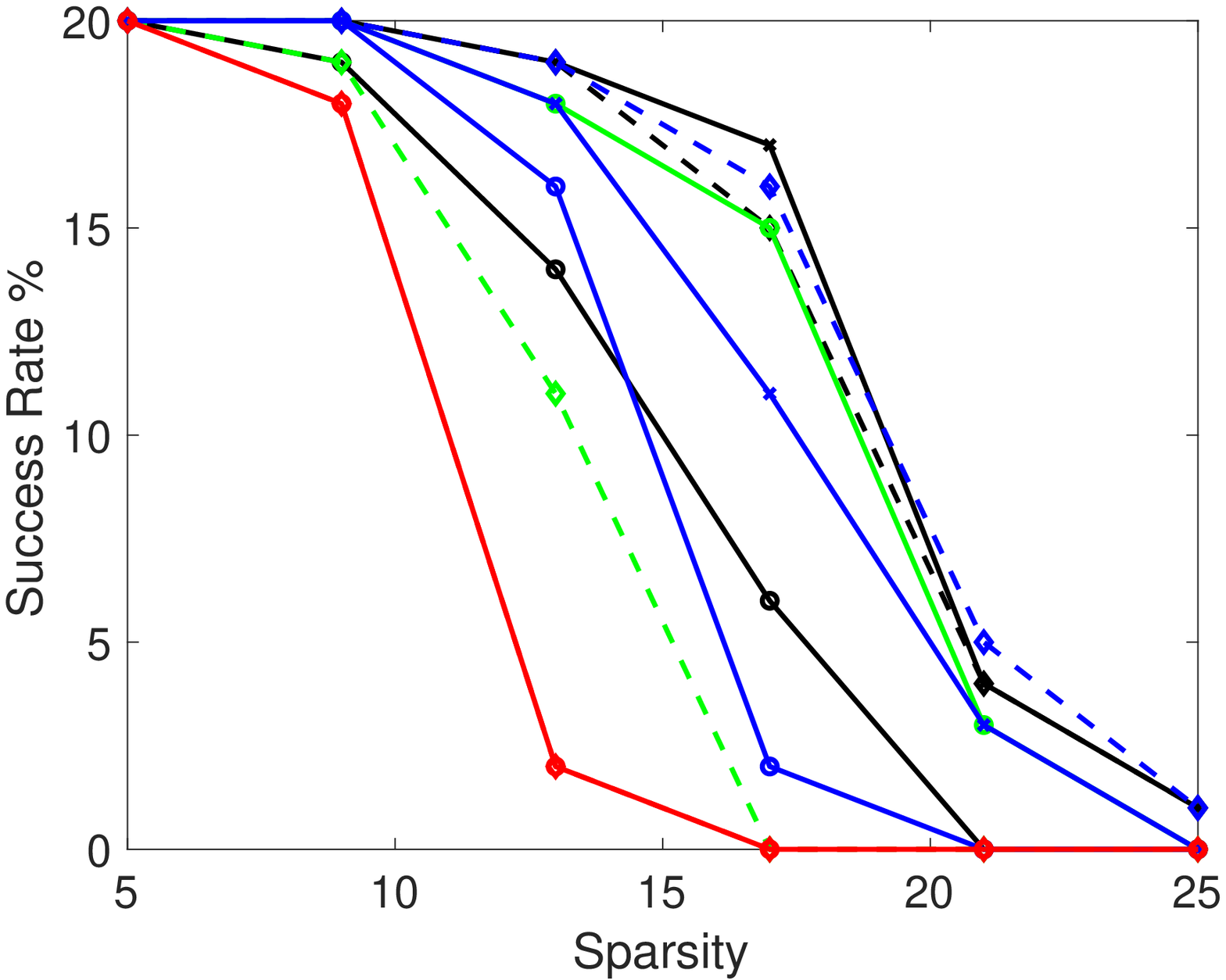}
\caption{$F = 5 $}
\end{subfigure}
\caption{Success rate comparison of ADMM and DCA with two $g$ functions and different $\eta $ for Gaussian matrices (left) with  $r = 0$ and $r = 0.8 $ and DCT matrices (right) with  $F = 1, 5$. }
\label{fig:SR_All_g1g2}
\end{figure}

\section{Concluding Remarks} \label{sec_conc}

In this paper, we proposed a lifted $\ell_1$ model for sparse recovery, which describes a class of  regularizations.  
Specifically we established the connections of this framework to various existing methods that aim to promote sparsity of the model solution.  Furthermore, we proved that our method can exactly recover the sparsest solution under a constrained formulation.  We promoted the use of ADMM to solve for the proposed model with convergence analysis. An alternative approach of using DCA  was discussed in \Cref{sect:DCA}, showing the efficiency of ADMM over DCA. Experimental results on both noise-free and noisy cases illustrate that the proposed framework outperforms the state-of-the-art methods in terms of accuracy and is comparable to the convex $\ell_1$ approach in terms of computational time.


One future work involves the convergence analysis of ADMM for solving the constrained model. One diffciculty lies in the fact that the corresponding function $ \psi(\cdot)$ is a $\delta$-function, which is not differentiable nor \textit{coercive}, and as a result, the proof presented in \Cref{sec_convergence} for the unconstrained minimization is not applicable for the constrained case. 
We observe that the ADMM algorithm for the constrained case does converge and the augmented Lagrangian is decreasing. This empirical evidence suggests the potential to prove the convergence or sufficiently decreasing  of the augmented Lagrangian, which will be left as a future work.








\bibliographystyle{siamplain}
\bibliography{references}

\begin{thebibliography}{10}

\bibitem{amir2021trimmed}
{\sc T.~Amir, R.~Basri, and B.~Nadler}, {\em The trimmed lasso: Sparse recovery
  guarantees and practical optimization by the generalized soft-min penalty},
  SIAM Journal on Mathematics of Data Science, 3 (2021), pp.~900--929.

\bibitem{bertsimas2017trimmed}
{\sc D.~Bertsimas, M.~S. Copenhaver, and R.~Mazumder}, {\em The trimmed lasso:
  Sparsity and robustness}, arXiv preprint arXiv:1708.04527,  (2017).

\bibitem{boyd2011distributed}
{\sc S.~Boyd, N.~Parikh, and E.~Chu}, {\em Distributed optimization and
  statistical learning via the alternating direction method of multipliers},
  Now Publishers Inc, 2011.

\bibitem{candesRT06}
{\sc E.~J. Cand{\`e}s, J.~Romberg, and T.~Tao}, {\em Stable signal recovery
  from incomplete and inaccurate measurements}, Comm. Pure Appl. Math., 59
  (2006), pp.~1207--1223.

\bibitem{candes2008enhancing}
{\sc E.~J. Candes, M.~B. Wakin, and S.~P. Boyd}, {\em Enhancing sparsity by
  reweighted l1 minimization}, Journal of Fourier analysis and applications, 14
  (2008), pp.~877--905.

\bibitem{chartrand2007exact}
{\sc R.~Chartrand}, {\em Exact reconstruction of sparse signals via nonconvex
  minimization}, IEEE Signal Processing Letters, 14 (2007), pp.~707--710.

\bibitem{chartrandY08}
{\sc R.~Chartrand and W.~Yin}, {\em Iteratively reweighted algorithms for
  compressive sensing}, in International Conference on Acoustics, Speech, and
  Signal Processing, 2008, pp.~3869--3872.

\bibitem{chartrand2008iteratively}
{\sc R.~Chartrand and W.~Yin}, {\em Iteratively reweighted algorithms for
  compressive sensing}, in 2008 IEEE international conference on acoustics,
  speech and signal processing, IEEE, 2008, pp.~3869--3872.

\bibitem{chen2001atomic}
{\sc S.~S. Chen, D.~L. Donoho, and M.~A. Saunders}, {\em Atomic decomposition
  by basis pursuit}, SIAM review, 43 (2001), pp.~129--159.

\bibitem{donoho06}
{\sc D.~L. Donoho}, {\em Compressed sensing}, IEEE Trans. on Inf. Theory, 52
  (2006).

\bibitem{donoho2001uncertainty}
{\sc D.~L. Donoho and X.~Huo}, {\em Uncertainty principles and ideal atomic
  decomposition}, IEEE Trans. Inf. Theory, 47 (2001), pp.~2845--2862.

\bibitem{dunlavy2005homotopy}
{\sc D.~M. Dunlavy and D.~P. O’Leary}, {\em Homotopy optimization methods for
  global optimization}, Report SAND2005-7495, Sandia National Laboratories,
  (2005).

\bibitem{eldar2012compressed}
{\sc Y.~C. Eldar and G.~Kutyniok}, {\em Compressed sensing: theory and
  applications}, Cambridge university press, 2012.

\bibitem{fan2001variable}
{\sc J.~Fan and R.~Li}, {\em Variable selection via nonconcave penalized
  likelihood and its oracle properties}, Journal of the American statistical
  Association, 96 (2001), pp.~1348--1360.

\bibitem{foucart2017mathematical}
{\sc S.~Foucart and H.~Rauhut}, {\em A mathematical introduction to compressive
  sensing}, Bull. Am. Math, 54 (2017), pp.~151--165.

\bibitem{gabay1976dual}
{\sc D.~Gabay and B.~Mercier}, {\em A dual algorithm for the solution of
  nonlinear variational problems via finite element approximation}, Computers
  \& mathematics with applications, 2 (1976), pp.~17--40.

\bibitem{guo2021novel}
{\sc W.~Guo, Y.~Lou, J.~Qin, and M.~Yan}, {\em A novel regularization based on
  the error function for sparse recovery}, Journal of Scientific Computing, 87
  (2021), pp.~1--22.

\bibitem{hantoute2008characterizations}
{\sc A.~Hantoute and M.~L{\'o}pez}, {\em Characterizations of the
  subdifferential of the supremum of convex functions}, Journal of Convex
  Analysis, 15 (2008), pp.~831--858.

\bibitem{horst1999dc}
{\sc R.~Horst and N.~V. Thoai}, {\em Dc programming: overview}, Journal of
  Optimization Theory and Applications, 103 (1999), pp.~1--43.

\bibitem{huang2015nonconvex}
{\sc X.-L. Huang, L.~Shi, and M.~Yan}, {\em Nonconvex sorted $\ell \_1 $
  minimization for sparse approximation}, Journal of the Operations Research
  Society of China, 3 (2015), pp.~207--229.

\bibitem{louY18}
{\sc Y.~Lou and M.~Yan}, {\em Fast l1-l2 minimization via a proximal operator},
  J. Sci. Comput., 74 (2018), pp.~767--785.

\bibitem{louYHX14}
{\sc Y.~Lou, P.~Yin, Q.~He, and J.~Xin}, {\em Computing sparse representation
  in a highly coherent dictionary based on difference of $ {L_1} $ and $ {L_2
  }$}, J. Sci. Comput., 64 (2015), pp.~178--196.

\bibitem{louYX16}
{\sc Y.~Lou, P.~Yin, and J.~Xin}, {\em Point source super-resolution via
  non-convex l1 based methods}, J. Sci. Comput., 68 (2016), pp.~1082--1100.

\bibitem{lv2009unified}
{\sc J.~Lv, Y.~Fan, et~al.}, {\em A unified approach to model selection and
  sparse recovery using regularized least squares}, Annals of Stat., 37 (2009),
  pp.~3498--3528.

\bibitem{marques2018review}
{\sc E.~C. Marques, N.~Maciel, L.~Naviner, H.~Cai, and J.~Yang}, {\em A review
  of sparse recovery algorithms}, IEEE access, 7 (2018), pp.~1300--1322.

\bibitem{natarajan1995sparse}
{\sc B.~K. Natarajan}, {\em Sparse approximate solutions to linear systems},
  SIAM journal on computing, 24 (1995), pp.~227--234.

\bibitem{nesterov2005smooth}
{\sc Y.~Nesterov}, {\em Smooth minimization of non-smooth functions},
  Mathematical programming, 103 (2005), pp.~127--152.

\bibitem{ochs2015iteratively}
{\sc P.~Ochs, A.~Dosovitskiy, T.~Brox, and T.~Pock}, {\em On iteratively
  reweighted algorithms for nonsmooth nonconvex optimization in computer
  vision}, SIAM Journal on Imaging Sciences, 8 (2015), pp.~331--372.

\bibitem{taoA97}
{\sc T.~Pham-Dinh and L.-T. Hoai-An}, {\em Convex analysis approach to d.c.
  programming: Theory, algorithms and applications}, Acta Mathematica
  Vietnamica, 22 (1997), pp.~289--355.

\bibitem{taoA98}
{\sc T.~Pham-Dinh and L.-T. Hoai-An}, {\em A d.c. optimization algorithm for
  solving the trust-region subproblem}, SIAM J. Optim., 8 (1998), pp.~476--505.

\bibitem{rahimi2019scale}
{\sc Y.~Rahimi, C.~Wang, H.~Dong, and Y.~Lou}, {\em A scale-invariant approach
  for sparse signal recovery}, SIAM Journal on Scientific Computing, 41 (2019),
  pp.~A3649--A3672.

\bibitem{shen2012likelihood}
{\sc X.~Shen, W.~Pan, and Y.~Zhu}, {\em Likelihood-based selection and sharp
  parameter estimation}, Journal of the American Statistical Association, 107
  (2012), pp.~223--232.

\bibitem{tibshirani1996regression}
{\sc R.~Tibshirani}, {\em Regression shrinkage and selection via the lasso},
  Journal of the Royal Statistical Society: Series B (Methodological), 58
  (1996), pp.~267--288.

\bibitem{tillmann2014computational}
{\sc A.~M. Tillmann and M.~E. Pfetsch}, {\em The computational complexity of
  the restricted isometry property, the nullspace property, and related
  concepts in compressed sensing}, IEEE Trans. Inform. Theory, 60 (2014),
  pp.~1248--1259.

\bibitem{tran2017unified}
{\sc H.~Tran and C.~Webster}, {\em A class of null space conditions for sparse
  recovery via nonconvex, non-separable minimizations}, Results in Applied
  Mathematics, 3 (2019), p.~100011.

\bibitem{wang2020accelerated}
{\sc C.~Wang, M.~Yan, Y.~Rahimi, and Y.~Lou}, {\em Accelerated schemes for the
  $l1/l2 $ minimization}, IEEE Transactions on Signal Processing, 68 (2020),
  pp.~2660--2669.

\bibitem{watson2001theory}
{\sc L.~T. Watson}, {\em Theory of globally convergent probability-one
  homotopies for nonlinear programming}, SIAM Journal on Optimization, 11
  (2001), pp.~761--780.

\bibitem{wipf2010iterative}
{\sc D.~Wipf and S.~Nagarajan}, {\em Iterative reweighted $\ell_1 $ and $\ell_2
  $ methods for finding sparse solutions}, IEEE Journal of Selected Topics in
  Signal Processing, 4 (2010), pp.~317--329.

\bibitem{Xu2012}
{\sc Z.~Xu, X.~Chang, F.~Xu, and H.~Zhang}, {\em $l_{1/2}$ regularization: A
  thresholding representation theory and a fast solver}, IEEE Trans. Neural
  Netw. Learn. Syst., 23 (2012), pp.~1013--1027.

\bibitem{zong2012representative}
{\sc Z.~Xu, G.~H., W.~Yao, and H.~Zhang}, {\em Representative of $l_{1/2}$
  regularization among $l_q~(0<q<1)$ regularizations: an experimental study
  based on phase diagram}, Acta Automatica Sinica, 38 (2012), pp.~1225--1228.

\bibitem{yin2014ratio}
{\sc P.~Yin, E.~Esser, and J.~Xin}, {\em Ratio and difference of $ \ell_1 $ and
  $ \ell_2 $ norms and sparse representation with coherent dictionaries},
  Communications in Information and Systems, 14 (2014), pp.~87--109.

\bibitem{yinLHX14}
{\sc P.~Yin, Y.~Lou, Q.~He, and J.~Xin}, {\em Minimization of $\ell_{1-2}$ for
  compressed sensing}, SIAM J. Sci. Comput., 37 (2015), pp.~A536--A563.

\bibitem{zach2017iterated}
{\sc C.~Zach and G.~Bourmaud}, {\em Iterated lifting for robust cost
  optimization}, in BMVC, 2017.

\bibitem{zhang2010nearly}
{\sc C.-H. Zhang et~al.}, {\em Nearly unbiased variable selection under minimax
  concave penalty}, The Annals of statistics, 38 (2010), pp.~894--942.

\bibitem{zhang2014minimization}
{\sc S.~Zhang and J.~Xin}, {\em Minimization of transformed $ \ell_1 $ penalty:
  Closed form representation and iterative thresholding algorithms}, arXiv
  preprint arXiv:1412.5240,  (2014).

\bibitem{zhang2018minimization}
{\sc S.~Zhang and J.~Xin}, {\em Minimization of transformed $ \ell_1$ penalty:
  theory, difference of convex function algorithm, and robust application in
  compressed sensing}, Mathematical Programming, 169 (2018), pp.~307--336.

\bibitem{zhang2008multi}
{\sc T.~Zhang}, {\em Multi-stage convex relaxation for learning with sparse
  regularization}, in Advances in neural information processing systems, 2008,
  pp.~1929--1936.

\bibitem{zhu2020iteratively}
{\sc W.~Zhu, Z.~Huang, J.~Chen, and Z.~Peng}, {\em Iteratively weighted
  thresholding homotopy method for the sparse solution of underdetermined
  linear equations}, Science China Mathematics,  (2020), pp.~1--26.

\end{thebibliography}

\section*{Supplement}
\label{sec:Supplement}
This section includes the proof of \cref{theorem:Lifting} and the computations to find a lift of sparsity promoting models mentioned in section \ref{sec_connection} into our generalized model.
\begin{proof}[Proof of \cref{theorem:Lifting}]
Set 
\begin{align*}
    f(\h x) = 
    \begin{cases}
    -J(\h x) & \h x \geq \h 0
    \\
    +\infty & \text{otherwise.}
    \end{cases}
\end{align*}
The function $ f$ is proper, lower semi-continuous and convex, hence by \textit{Fenchel-Moreau}'s theorem we have that $f = f^{**} $. Also we have
\begin{align*}
    f^*(\h y) = \sup_{\h x \in \mathbb{R}^n}  \left\langle \h x, \h y \right\rangle - f(\h x) =
    \begin{cases}
     \sup_{\h x \geq \h 0}  \left\langle \h x, \h y \right\rangle + J(\h x) = g(\h y), & \h y \leq \h 0
    \\
    +\infty, & \text{otherwise,}
    \end{cases}
\end{align*}
using $g(\h y) :=  \sup_{\h x \geq \h 0}  \left\langle \h x, -\h y \right\rangle + J(\h x) $.
In addition,
\begin{align*}
    f(\h x) = f^{**}(\h x) = \sup_{\h y \in \mathbb{R}^n}  \left\langle \h x, \h y \right\rangle - f^*(\h y) =
    \begin{cases}
     \sup_{\h y \leq \h 0}  \left\langle \h x, \h y \right\rangle - g(-\h y) & \h x \geq \h 0
    \\
    +\infty & \text{otherwise.}
    \end{cases}
\end{align*}
Therefore,
\begin{align*}
    J(|\h x|) = -f(|\h x|) & = - \sup_{\h y \leq \h 0}  \left\langle |\h x|, \h y \right\rangle - g(-\h y)
     = 
     \inf_{\h y \leq \h 0}  \left\langle |\h x|, -\h y \right\rangle + g(-\h y)
     \\ & = 
     \inf_{\h y \geq \h 0}  \left\langle |\h x|, \h y \right\rangle + g(\h y)
     = 
     \inf_{\h y \in U}  \left\langle |\h x|, \h y \right\rangle + g(\h y),
\end{align*}
where $U \subseteq \{ \h y \geq 0 \mid g(\h y) \neq +\infty \}. $ 
\end{proof}
Given a regularization function $J(\h x),$ we want to find a proper $g$ function and a set $U$ such that $J(\h x)=F_g^U(\h x)$ up to a constant. Using \cref{theorem:Lifting}, we can directly find $F_{g}^U$, however sometimes it might be easier to use the following observation which leads to simpler computations. 
Suppose  $F_{g}^U$  has a unique minimizer $\h u,$ and hence $\h u$ satisfies  $\h u = \nabla_{\h x} F_{g}^U(\h x)= \nabla_{\h x} J(\h x).$  By assuming that the minimum of \eqref{objective:newmodel2} is finite, the optimality condition gives $|\h x| + \nabla_{\h u} g  = 0$ for $\h u \in int(U),$ where $int(U)$ denotes the interior of the set $U $ (Note that  $|\h x| + \nabla_{\h u} g$ can have non-zero coordinates on the boundary of $U$.)
Thus, we only need to solve the following two equations for a function $g $ with respect to $\h u $ on the feasible set   $U$: 
\begin{equation}
\label{eq:Lift}
\begin{cases}
\h u = \nabla_{\h x} J(\h x),
\\
|\h x| + \nabla_{\h u} g = 0, \ \h u \in int(U).
\end{cases}
\end{equation}
\begin{enumerate}
\item \textbf{ $\ell_p $ model:}
Consider $J = J^{\ell_p} / p $, and note that $\frac{\partial J }{\partial |x_i|} = |x_i|^{p-1} $. For $g(\h u) = \sum g_i(u_i) $ and $\h x \in \mathbb{R}^n$, the \eqref{eq:Lift} simplifies into 
\begin{equation*}
\begin{cases}
 u_i = |x_i|^{p-1},
\\
| x_i| +  g_{i}'(u_i) = 0,
\end{cases}
\end{equation*}
for all $i $. From the first equation we get that $|x_i| = u_i^{\frac{1}{p-1}} $ and then from the second equation we get 
$ g_i'(u_i) = -u_i^{\frac{1}{p-1}}$
. A solution for $g $ is $g_i(u_i) = \frac{1-p}{p} u_i^{\frac{p}{p-1}} $ for $u_i \geq 0 $. Finally taking $U = \mathbb{R}^n_+ $ and $g(\h u) = \sum_{i} \frac{1-p}{p} u_i^{\frac{p}{p-1}} $, one can check that $F_{g}^U = J $.

\item \textbf{ log-sum penalty:}
Consider $J =  J^{\log}_a $, and note that $\frac{\partial J }{\partial |x_i|} = \frac{1}{|x_i| + a} $. For $g(\h u) = \sum g_i(u_i) $ and $\h x \in \mathbb{R}^n$, the \eqref{eq:Lift} simplifies into 
\begin{equation*}
\begin{cases}
 u_i = \frac{1}{|x_i| + a},
\\
| x_i| +  g_{i}'(u_i) = 0,
\end{cases}
\end{equation*}
for all $i $. From the first equation we get that $|x_i| = \frac{1}{u_i} - a $ and then from the second equation we get 
$ g_i'(u_i) = a - \frac{1}{u_i}$. A solution for $g $ is $g_i(u_i) = a u_i - \log(u_i) $ for $u_i > 0 $. Finally taking $U = \mathbb{R}^n_{>0} $ and $g(\h u) = \sum_{i} \left( a u_i - \log(u_i) \right) $, one can check that $F_{g}^U + 1 = J $.

\item \textbf{ Smoothly clipped lasso model:}
Consider $J =  J_{\gamma, \lambda}^{\text{SCAD}} $, and note that
\begin{equation}
 \frac{\partial f_{\lambda, \gamma} ^{\text{SCAD}}(t) }{\partial |t|}= 
\left\{
	\begin{array}{ll}
		\lambda & \mbox{if } |t| \leq  \lambda,
		 \\
   \frac{\lambda \gamma - t}{\gamma - 1} & \mbox{if } \lambda < |t| \leq  \gamma \lambda,
		\\
	0 & \mbox{if }  |t| > \gamma \lambda.
	\end{array}
\right.
\end{equation}
For $g(\h u) = \sum g_i(u_i) $ and $\h x \in \mathbb{R}^n$, the first equation in \eqref{eq:Lift} simplifies into 
\begin{equation*}
 u_i = 
 \left\{
	\begin{array}{ll}
		\lambda & \mbox{if } |x_i| \leq  \lambda,
		 \\
   \frac{\lambda \gamma - |x_i|}{\gamma - 1} & \mbox{if } \lambda < |x_i| \leq  \gamma \lambda,
		\\
	0 & \mbox{if }  |x_i| > \gamma \lambda.
	\end{array}
\right.
\end{equation*}
for all $i $. In the case of $\lambda < |x_i| \leq  \gamma \lambda, $ we get that 
$|x_i| = \gamma \lambda - (\gamma - 1) u_i, $
which means we should have $ g_{i}'(u_i) = -\gamma \lambda + (\gamma - 1) u_i$  and $u_i\leq \lambda$. 
By taking $U = [0,\lambda]^n $ and $g(\h u) = \sum_{i} \left( -\gamma \lambda u_i + (\gamma - 1) \frac{u_i^2}{2}  \right) $, one can check that $F_{g}^U + \frac{(\gamma + 1)\lambda^2}{2} = J  $.

\item \textbf{ Mini-max concave penalty:}
Consider $J =  J_{\gamma, \lambda}^{\text{MCP}} $, and note that
\begin{equation}
 \frac{\partial f_{\lambda, \gamma} ^{\text{MCP}}(t) }{\partial |t|}= 
\left\{
	\begin{array}{ll}
		\lambda - \frac{t}{\gamma}  & \mbox{if } |t| \leq \gamma \lambda, \\
		0 & \mbox{if }  |t| > \gamma \lambda.
	\end{array}
\right.
\end{equation}
For $g(\h u) = \sum g_i(u_i) $ and $\h x \in \mathbb{R}^n$, the first equation in \eqref{eq:Lift} simplifies into 
\begin{equation*}
 u_i = 
 \left\{
	\begin{array}{ll}
		\lambda - \frac{|x_i|}{\gamma}  & \mbox{if } |x_i| \leq \gamma \lambda, \\
		0 & \mbox{if }  |x_i| > \gamma \lambda.
	\end{array}
\right.
\end{equation*}
for all $i $.  When $ |x_i| \leq  \gamma \lambda,$ we obtain 
$|x_i| = \gamma (\lambda - u_i),$ which implies that $ g_{i}'(u_i) = -\gamma (\lambda - u_i)$ from the second equation in \eqref{eq:Lift}.
Therefore, we set 
$U = [0,\infty)^n $ and $g(\h u) = \sum_{i} \left( -\gamma (\lambda u_i - \frac{u_i^2}{2})  \right) $, leading to $F_{g}^U + \frac{1}{2} \gamma \lambda^2 = J  $.

\item \textbf{ Capped $\ell_1 $ model:}
Consider $J =  J^{\text{CL1}}_a $, and note that
\begin{equation}
 \frac{\partial J}{\partial |x_i|} = 
\left\{
	\begin{array}{ll}
		1  & \mbox{if } |x_i| < a, \\
		0 & \mbox{if }  |x_i| > a.
	\end{array}
\right.
\end{equation}
 For $g(\h u) = \sum g_i(u_i) $ and $\h x \in \mathbb{R}^n$, the first equation in \eqref{eq:Lift} simplifies into 
\begin{equation*}
 u_i = \left\{
	\begin{array}{ll}
		1  & \mbox{if } |x_i| < a, \\
		0 & \mbox{if }  |x_i| > a.
	\end{array}
\right.
\end{equation*}
for all $i $. Note that the second equation in \eqref{eq:Lift} only happens if the minimizer is in the interior of the set $U $.  Consider $U = [0,1]^n $, therefore for this case since the minimizer is on the boundary, therefore we need a $g $ function which is nonzero in the interior of $ U $ and for $|x_i| < a $ we have  $ | x_i| +  g_{i}'(u_i) < 0$ and for $|x_i| > a $ we have  $ | x_i| +  g_{i}'(u_i) > 0$. Therefore $g_{i}'(u_i) = -a $ and a solution for this is $ g_{i}(u_i) = -a u_i $. Finally taking $U = [0,1]^n $ and $g(\h u) = \sum_{i} \left( -a u_i  \right) $, one can check that $F_{g}^U + a = J  $.

\item \textbf{ Transformed $\ell_1 $ model:}
Consider $J = J_a^{\text{TL1}} / (a+1) $, and note that $\frac{\partial J }{\partial |x_i|} = \frac{a}{(a + |x_i|)^2} $. For $g(\h u) = \sum g_i(u_i) $ and $\h x \in \mathbb{R}^n$, the \eqref{eq:Lift} simplifies into 
\begin{equation*}
\begin{cases}
 u_i =  \frac{a}{(a + |x_i|)^2},
\\
| x_i| +  g_{i}'(u_i) = 0,
\end{cases}
\end{equation*}
for all $i $. From the first equation we get that $|x_i| = \sqrt{\frac{a}{u_i}} - a $ and then from the second equation we get 
$ g_i'(u_i) = a - \sqrt{\frac{a}{u_i}}$. A solution for $g $ is $g_i(u_i) = a u_i - 2 \sqrt{a u_i} $ for $u_i \geq 0 $. Finally taking $U = \mathbb{R}^n_+ $ and $g(\h u) = \sum_{i} a u_i - 2 \sqrt{a u_i} $, one can check that $ F_{g}^U + 1 = J $.

\item \textbf{ Error function penalty:}
Consider $J = J_{\sigma}^{\text{ERF}} $, and note that $\frac{\partial J }{\partial |x_i|} = e^{-x_i^2/\sigma^2} $ and $e^{-x_i^2/\sigma^2} \in (0,1] $. For $g(\h u) = \sum g_i(u_i) $ and $\h x \in \mathbb{R}^n$, the \eqref{eq:Lift} simplifies into 
\begin{equation*}
\begin{cases}
 u_i =  e^{-x_i^2/\sigma^2},
\\
| x_i| +  g_{i}'(u_i) = 0,
\end{cases}
\end{equation*}
for all $i $. From the first equation we get that $|x_i| = \sigma\sqrt{-\log(u_i)} $ and then from the second equation we get 
$ g_i'(u_i) =  -\sigma \sqrt{- \log(u_i)}$. A solution for $g $ is $g_i(u_i) = \sigma \int_{u_i}^{1} \sqrt{- \log(\tau )} d \tau $ for $u_i \in (0,1] $. Finally taking $U = [0,1]^n $ and $g(\h u) = \sigma \sum_{i} \int_{u_i}^{1} \sqrt{- \log(\tau )} d \tau $, one can check that $ F_{g}^U = J $. 

\end{enumerate}


\end{document}